\theoremstyle{plain}
\newtheorem{theorem}{Theorem}[section]
\newtheorem{lemma}[theorem]{Lemma}
\newtheorem{corollary}[theorem]{Corollary}
\theoremstyle{remark}
\newtheorem{remark}[theorem]{Remark}
\theoremstyle{definition}
\newtheorem{definition}[theorem]{Definition}
\theoremstyle{remark}
\newtheorem{example}[theorem]{Example}
\newcommand{\C}{\mathbb{C}}
\newcommand{\BB}{\mathrm{GF}(2)}
\newcommand{\abs}[1]{\left| #1 \right|}
\renewcommand{\set}[1]{\left\lbrace #1\right\rbrace} 
\newcommand{\symdif}{\mathop{\triangle}}
\newcommand{\SG}{\mathscr{G}}
\newcommand{\lab}{\ell}
\newcommand{\good}{\mathrm{good}}
\newcommand{\pred}{\mathrm{pred}}
\newcommand{\alltoall}{\mathrm{all-to-all}}
\newcommand{\grid}{\mathrm{grid}}
\newcommand{\CNOT}{\mathrm{CX}}
\newcommand{\CXC}{\mathrm{CXC}}
\newcommand{\CXCM}{\mathrm{CXC'}}
\newcommand{\depth}{D_\CNOT}
\newcommand{\G}{\mathcal{G}}
\newcommand{\NCS}{\mathrm{NCS}}
\newcommand{\XS}{\mathrm{DX}}
\newcommand{\XSC}{\mathrm{PTC}}
\newcommand{\XSCM}{\mathrm{PTC'}}
\newcommand{\XSN}{\mathrm{PTN}}
\newcommand{\XSNM}{\mathrm{PTN'}}
\newcommand{\SW}{\mathrm{SW}}
\newcommand{\SWC}{\mathrm{SWC}}
\newcommand{\CL}{\mathrm{CL}}
\newcommand{\concat}[3][]{#2\odot_{#1} #3}
\newcommand{\cat}{\odot}
\newcommand{\CXNAME}{CNOT}
\newcommand{\DXNAME}{DCNOT}
\newcommand{\SWNAME}{SWAP}
\newcommand{\TWINE}{Parity Twine}
\newcommand{\labaction}[2]{#1 #2}
\newcommand{\labseq}[1]{(#1)}
\renewcommand{\ket}[1]{\left\vert #1\right\rangle}
\renewcommand{\SG}{\mathcal{SG}}
\renewcommand{\depth}[1]{\mathrm{depth}\left(#1\right)}
\newcommand{\size}[1]{\mathrm{size}\left(#1\right)}
\renewcommand\mod{\,\mathrm{mod}\,}
\begin{document}

\thanks{These two authors contributed equally}

\title{Connectivity-aware Synthesis of Quantum Algorithms}

\author{Florian Dreier}
\thanks{These two authors contributed equally; \\
c.fleckenstein@parityqc.com \\
f.dreier@parityqc.com}
\affiliation{Institute for Theoretical Physics, University of Innsbruck, A-6020 Innsbruck, Austria}
\affiliation{Parity Quantum Computing GmbH, A-6020 Innsbruck, Austria}
\author{Christoph Fleckenstein}
\thanks{These two authors contributed equally; \\
c.fleckenstein@parityqc.com \\
f.dreier@parityqc.com}
\affiliation{Parity Quantum Computing GmbH, A-6020 Innsbruck, Austria}
\author{Gregor Aigner}
\affiliation{Institute for Theoretical Physics, University of Innsbruck, A-6020 Innsbruck, Austria}
\author{Michael~Fellner}
\affiliation{Institute for Theoretical Physics, University of Innsbruck, A-6020 Innsbruck, Austria}
\affiliation{Parity Quantum Computing GmbH, A-6020 Innsbruck, Austria}
\author{Philipp Aumann}
\affiliation{Parity Quantum Computing GmbH, A-6020 Innsbruck, Austria}
\author{Reinhard Stahn}
\affiliation{Parity Quantum Computing Germany GmbH, 20095 Hamburg, Germany}
\author{Martin Lanthaler}
\affiliation{Institute for Theoretical Physics, University of Innsbruck, A-6020 Innsbruck, Austria}
\affiliation{Parity Quantum Computing GmbH, A-6020 Innsbruck, Austria}
\author{Wolfgang Lechner}
\affiliation{Institute for Theoretical Physics, University of Innsbruck, A-6020 Innsbruck, Austria}
\affiliation{Parity Quantum Computing GmbH, A-6020 Innsbruck, Austria}
\affiliation{Parity Quantum Computing Germany GmbH, 20095 Hamburg, Germany}

\date{\today}
\begin{abstract}
    We present a general method for the implementation of quantum algorithms that optimizes both gate count and circuit depth. Our approach introduces connectivity-adapted CNOT-based building blocks called Parity Twine chains. It outperforms all known state-of-the art methods for implementing prominent quantum algorithms such as the quantum Fourier transform or the Quantum Approximate Optimization Algorithm across a wide range of quantum hardware, including linear, square-grid, hexagonal, ladder and all-to-all connected devices. {\color{black} We show that even moderate increments in connectivity can yield significant efficiency improvements and reach the proven optimum for specific cases. Furthermore, we demonstrate a practical performance advantage of this approach for a wide range of compilation problems and quantum hardware.}
\end{abstract}
\maketitle

\section{Introduction}

Quantum computers \cite{Arute2019, Moses2023, Kim2023, bluvstein_logical_2024, Acharya2024} promise a computational speedup for classically intractable problems ranging from efficient and accurate simulation of quantum systems \cite{Brown2010, Bernien2017} over quantum chemistry \cite{Cao2019} to financial modeling \cite{Orus2019} and solving real world large scale optimization problems \cite{Farhi2000, Farhi2014, Cerezo2021}. Recently, both quantum hardware and software experienced enormous leaps of improvements putting modern quantum hardware in the range of state-of-the art classical computing systems for specialized tasks \cite{Kim2023, Acharya2024}. However, unlike for classical computation, present day quantum resources of near-term quantum hardware systems~\cite{Preskill2018} are sparse which is why efficient quantum algorithms are crucial to leverage quantum advantage.

\begin{figure}[t!]
    \centering
    \includegraphics[width=\columnwidth]{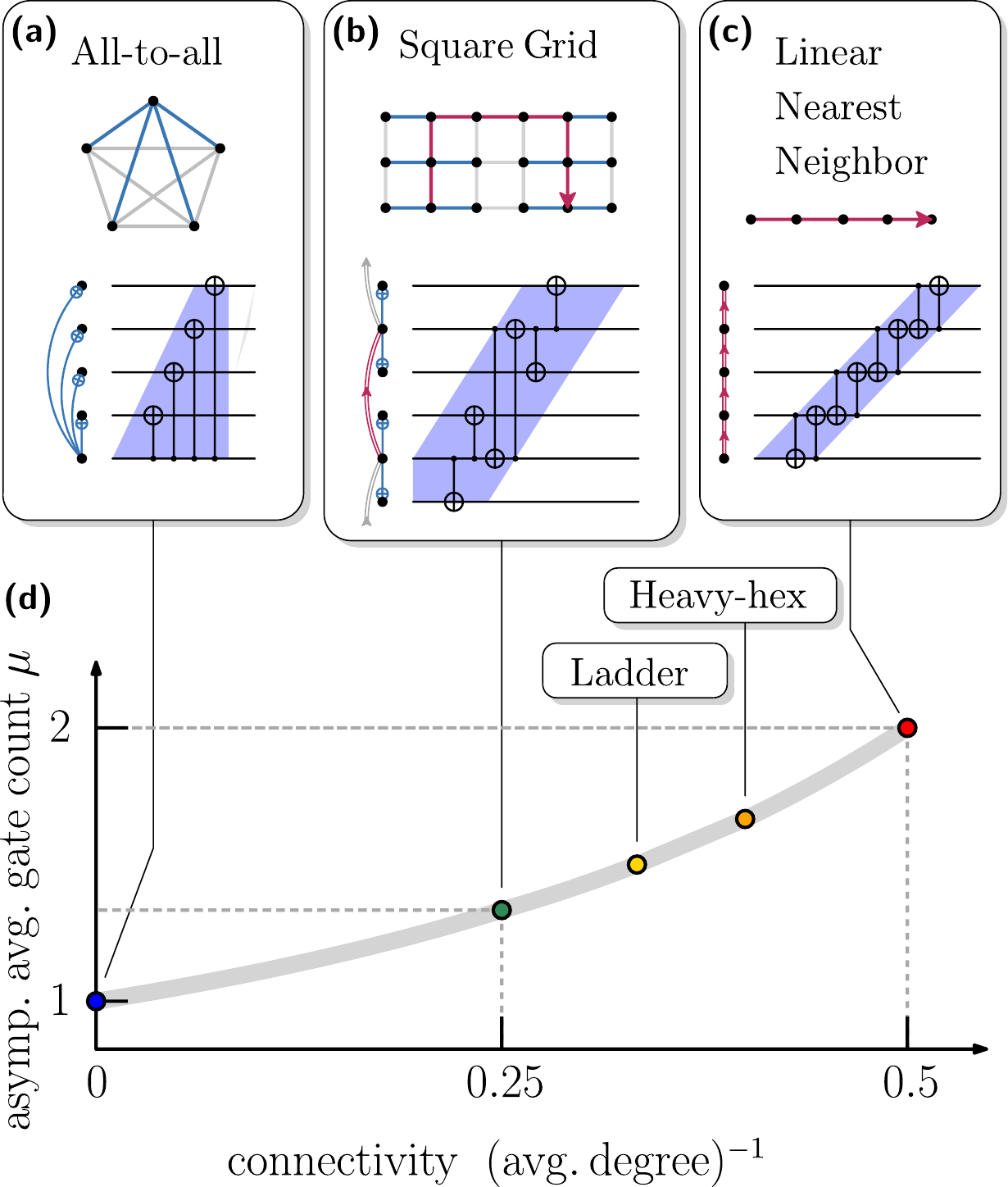}
\caption{The average asymptotic gate count as a function of connectivity ranging from linear nearest neighbor, square grid, heavy hexagon, ladder to all-to-all for algorithms considered in this work. The depicted \TWINE{} chain constitute basic building blocks used throughout the article. The resulting average gate counts per interaction are schematically indicated as a function of the qubit layout connectivity. Connectivity gains lead to significant reductions in gate count approaching the theoretical optimum for all-to-all connected devices.}
    \label{fig:overview}
\end{figure}

The efficiency of quantum algorithms can be measured by various means such as the total execution time \cite{Gyongyosi2020, Zhang2021}, the number of involved quantum operations \cite{Holmes2020}, or the total count of specific costly operations \cite{Gidney2018}. Optimizations, measured according to these metrics, can be distinguished by dividing them into two groups: (i) hardware agnostic optimizations aiming to find more efficient algorithms, and (ii) optimizations improving the compilation of algorithms to fit the specific connectivity constraints of corresponding quantum devices. In fact, current quantum hardware rarely fits the algorithmic connectivity requirements so that ab initio implementations of (multi-body) quantum operations are often infeasible. Although there exist ion-based quantum computers that provide intrinsic all-to-all connectivity~\cite{Schindler2013, Piltz2016}, these are usually limited to a few tens of qubits. In many cases, quantum devices, in particular those where the position of the information carriers is physically fixed, provide two-dimensional grids of qubits with nearest-neighbor connectivity~\cite{Rempfer2024}, like for example square~\cite{Arute2019, Acharya2024}, hexagonal~\cite{Chamberland2020, Kim2023} or octagonal~\cite{Dupont2023} lattices. On such platforms, arbitrary connectivity is typically emulated by costly \SWNAME{} networks~\cite{Crooks2018, Kivlichan2018, Ogorman2019, Hashim2022, Weidenfeller2022, Yuan2024} or shuttling operations~\cite{Kaushal2020, Bluvstein2022, Moses2023, Zwerver2023, Künne2024}, maneuvering quantum states or physical qubits, respectively. Recently, parity label tracking was introduced as a compelling alternative to SWAP networks enabling the redesign of established algorithms using heuristic routing algorithms~\cite{Schmitz2021, van-de-Griend2022} and exact solutions~\cite{Klaver2024}. In particular, the approach introduced in Ref.~\cite{Klaver2024} utilizes aspects of the ZX-calculus~\cite{Cowtan2019} and combines them with the Lechner-Hauke-Zoller (LHZ) architecture for universal quantum computing~\cite{Lechner2015, Fellner2022, Fellner2022app}: Every physical qubit carries a logical parity label which can be altered by Clifford operations. Tracking of these labels throughout a circuit provides a means to understand the corresponding information flow and enables the design of efficient quantum algorithms.

Based on the parity label tracking approach, in this work we focus on optimizations of type (ii) for designing efficient, connectivity-aware quantum algorithms that implement logical many-body operators. {\color{black} We propose a constructive approach for the design of gate-count and -depth optimized algorithms leveraging the device connectivity.
More specifically, we present a generic construction recipe how the connectivity can be used to significantly improve algorithm implementations. While our results are generic,} we showcase the efficiency of our approach in five different exemplary platforms: linear nearest-neighbor (LNN) systems, all-to-all connected systems, square grids, heavy hexagon and ladder architectures. As a metric for the efficiency, we introduce the asymptotic measures $\mu$ and $\nu$, where $\mu$ determines the average number of \CXNAME{} gates necessary to generate a single logical many-body operator and $\nu$ represents a normalized depth for an arbitrarily large number of qubits. To synthesize algorithm implementations, we use connectivity-adapted variations of so-called \emph{\TWINE} chains depicted in Fig.~\ref{fig:overview}(c) for LNN devices and schematically indicated for all-to-all and square grids in (a) and (b). As connectivity of the underlying graph increases, we obtain reductions in the corresponding asymptotic average gate counts $\mu$ [see Fig.~\ref{fig:overview}(d)], where eventually on all-to-all connected devices, our approach yields the provable optimal value ${\mu=1}$. 

{\color{black}
Our formalism has broad applications across quantum computing. 
We demonstrate more efficient implementations of cornerstone quantum algorithms, such as the quantum Fourier transform (QFT). In quantum optimization routines, such as quantum approximate optimization algorithms (QAOA), it significantly improves the encoding of optimization problems. Additionally, in non-equilibrium dynamics and Hamiltonian simulation we use the developed methods to optimize the encoding of long-range many-body Hamiltonians.
} 
Tabs.~\ref{tab:qaoa} and \ref{tab:qft} compare the results for implementations of QAOA for quadratic unconstrained binary optimization (QUBO) problems (Tab.~\ref{tab:qaoa}) and the QFT (Tab.~\ref{tab:qft}) obtained within our formalism to the best known implementations outside our framework. Our approach outperforms alternative implementations on all the investigated platforms in gate count. Simultaneously, circuit depths are, at most, equivalent to the best known alternatives. Interestingly, recursive extensions of \TWINE{} chains allow us to synthesize logical $k$-body operators (with ${k>2}$). Thereby the corresponding circuits for ${k>2}$ inherit the average count from lower order leading to the results summarized in Tab.~\ref{tab:average_count_depth_kbody}. 

{\color{black}
To facilitate our theoretical findings, we investigate the performance of our methods on noisy quantum hardware. To this end we compare our implementations to alternative implementations in the presence of noise. Our analysis suggests a performance advantage of our implementations for the entire parameter regime spanned by finite qubit lifetime and two-qubit gate errors. This result is supported by series of numerical simulations as well as experimental runs on contemporary quantum hardware. Leveraging connectivity with Parity Twine we find performance advantages of up to several order of magnitude in the obtained fidelities for experiments performed on superconducting square grid devices.
}

The article is organized as follows. In Sec.~\ref{sec:label_tracking} we introduce and recapitulate the label-tracking formalism used throughout the article. In Sec.~\ref{sec:fundamental_bounds} we introduce the quantities $\mu$ and $\nu$ and investigate fundamental properties regarding gate count and depth regarding the algorithms of interest. This is followed by Sec.~\ref{sec:LNN_kbody} where we discuss in depth the different building blocks of our approach exemplified on LNN devices. In Secs.~\ref{sec:all-to-all} and Sec.~\ref{sec:2d_architectures} we generalize the concepts of Sec.~\ref{sec:LNN_kbody} to a wide range of qubit-connectivity graphs and explicitly discuss complete graphs (Sec.~\ref{sec:all-to-all}) as well as a number of planar graphs such as square grids and heavy hexagons (Sec.~\ref{sec:2d_architectures}).
{\color{black} This is followed by Sec.~\ref{app:dept_optimization}, where we discuss techniques to reduce the circuit depth. In Sec.~\ref{sec:QAOA} we apply our approach to optimization problems within the QAOA framework and we compare our exact solutions to state-of-the-art compilation tools. This is followed by Sec.~\ref{sec:QFT} where we outline the implementation of the QFT and Sec.~\ref{sec:hamiltonian_simulation} where we discuss applications for Hamiltonian simulation. Finally, in Sec.~\ref{sec:performance_on_noisy_hardware} we investigate the performance of our circuits on noisy quantum hardware. To this end we employ theoretical models, numerical simulations and we conduct experiments on quantum hardware}. We conclude with a summary and an outlook in Sec.~\ref{sec:conclusion}.

\begin{table}[t!]
\centering
\resizebox{\columnwidth}{!}{
\begin{tblr}{
colspec={|l |cccc | cccc|}, rowsep=1.5pt, colsep=2.5pt, hline{1,2,3,8},
cells={valign=m,halign=c},
cell{1}{1} = {r=2}{c},
cell{1}{2,6} = {c=4}{c},
}

Connectivity graph    & QAOA ours & & & & QAOA best known~\cite{Weidenfeller2022}  & & & \\ 
        & count & depth & $\mu$ & $\nu$ & count & depth & $\mu$ & $\nu$\\  
    LNN  & $n^2$ & $2n\left(1\! +\!\frac{1}{p}\right)$ & 2 & $2\left(1 \!+\!\frac{1}{p}\right)$ & $\frac{3}{2}n^2$ & $3n$ & 3 &3\\ 
    heavy hexagon & $\frac{5}{6}n^2$ & $\frac{5}{2}n\left(1 \!+\!\frac{1}{p}\right)$ & $\frac{5}{3}$ & $\frac{5}{2}\left(1 \!+\!\frac{1}{p}\right)$ & $\frac{3}{2}n^2$ & $3n$ & 3 &3\\ 
    ladder & $\frac{3}{4}n^2$ & $\frac{3}{2}n \left(1\! +\!\frac{1}{p}\right)$ & $ \frac{3}{2}$ & $\frac{3}{2}\left(1 \!+\!\frac{1}{p}\right)$ & $\frac{3}{2}n^2$ & $3n$ & 3 &3\\
    square grid & $\frac{2}{3} n^2$ & $3n\left(1\! +\!\frac{1}{p}\right)$ & $\frac{4}{3}$ & $3\left(1 \!+\!\frac{1}{p}\right)$ & $\frac{3}{2}n^2$ & $3n$ & 3 & 3\\
    all-to-all & $\frac{1}{2} n^2$ & $n\left(1 \!+\!\frac{1}{p}\right)$ & 1 & $1 \!+\!\frac{1}{p}$ & $n^2$ & $2n$ & 2 & 2  
\end{tblr}
}
    \caption{Summary of the leading order terms of gate count and depth as well as the asymptotic average gate count $\mu$, defined in Eq.~\eqref{eq:asymptotic_average_count}, and the asymptotic normalized depth $\nu$, defined in Eq.~\eqref{eq:asymptotic_avaerage_depth}, of different implementations of QAOA (with $p$ QAOA cycles) for various connectivity graphs. All provided values refer to count and depth per QAOA cycle. Our results refer to an odd number of QAOA cycles $p$. For even $p$ we obtain slightly lower depth (cf.~Sec.~\ref{sec:QAOA}). We compare our results to the best known algorithms. As noted in Ref.~\cite{Weidenfeller2022} a linear chain represents the best known approach also for two-dimensional connectivity graphs. For heavy hexagon architectures, where no Hamiltonian path exists, this implies that not all qubits are used. Instead, the provided numbers are only valid for a sub-graph of the heavy hexagon layout. This is also the case in our approach, although with a different sub-graph (compare App.~\ref{app:heavy-hex} for details).}
    \label{tab:qaoa}
\end{table}

\begin{table*}[t!]
\centering
\begin{tblr}{
colspec={|l |cccc | cccc| cccc|}, rowsep=1.5pt, colsep=2.5pt, hline{1,2,3,8},
cells={valign=m,halign=c},
cell{1}{1} = {r=2}{c},
cell{1}{2} = {c=4}{c},
cell{1}{6} = {c=4}{c},
cell{1}{10} = {c=4}{c},
hspan=even,
}

Connectivity graph    & QFT ours & & & & QFT best known  (count-optimized)~\cite{Holmes2020} & & & & QFT best known (depth-optimized)~\cite{Park2023} & & & \\ 
        & count & depth & $\mu$ & $\nu$ & count & depth & $\mu$ & $\nu$ & count & depth & $\mu$ & $\nu$ \\  
    LNN  & $n^2$ & $4n$ & 2 & 4 & $n^2$ & $n^2$ & 2 & $n$ &  $\frac{3}{2}n^2$ & $6n$ & 3 & 6 \\ 
    heavy hexagon & $\frac{5}{6}n^2$ & $5n$ & $1 + \frac{2}{3}$ & 5 & $n^2$ & $n^2$ & 2 & $n$ &  $\frac{3}{2}n^2$ & $6n$ & 3 & 6 \\ 
    ladder & $\frac{3}{4}n^2$ & $3n$ & $1 + \frac{1}{2}$ & 3 & $n^2$ & $n^2$ & 2 & $n$ &  $\frac{3}{2}n^2$ & $6n$ & 3 & 6 \\ 
    square grid & $\frac{2}{3} n^2$ & $6n$ & $1+\frac{1}{3}$ & $6$ & $n^2$ & $n^2$ & 2 & $n$ &  $\frac{3}{2}n^2$ & $6n$ & 3 & 6 \\
    all-to-all & $\frac{1}{2} n^2$ & $2n$ & 1 & 2 & $n^2$ & $4n$ & 2 & 4 & $n^2$ & $4n$ & 2 & 4 \\ 
\end{tblr}
    \caption{Summary of the leading order terms of gate count and depth as well as the asymptotic average gate count $\mu$, defined in Eq.~\eqref{eq:asymptotic_average_count}, and the asymptotic normalized depth $\nu$, defined in Eq.~\eqref{eq:asymptotic_avaerage_depth}, of different implementations of the QFT for various connectivity graphs. We compare our results to the best known algorithms which amounts to the use of a linear chain approach for all the investigated connectivity graphs \cite{Weidenfeller2022} (compare caption of Tab.~\ref{tab:qaoa}).}
    \label{tab:qft}
\end{table*}

\begin{table}[t!]
\centering

\begin{tblr}{
colspec={|l |cc|}, rowsep=1.5pt, colsep=2.5pt, hline{1,2,3,8},
cells={valign=m,halign=c},
cell{1}{1} = {r=2}{c},
cell{1}{2} = {c=2}{c},
cell{5}{1} = {r=2}{c},
cell{5}{2} = {r=2}{c},
cell{5}{3} = {r=2}{c},
hspan=even
}
Connectivity graph   & ours for $k>2$ &\\ 
         & $\mu$& $\nu$\\  
    LNN  & 2 & $k$ \\
    heavy hexagon & $1 + \frac{2}{3}$ & $\frac{5}{3}k$ \\
    square grid & $1+\frac{1}{3}$ & {$\frac{25}{12}k$ for $k=3$ \\ $\frac{7}{3}k$ for $k>3$} \\
    & & \\
    all-to-all & 1 & $k/2$\\
\end{tblr}

    \caption{Summary of the average \CXNAME{} count $\mu$, defined in Eq.~\eqref{eq:asymptotic_average_count}, and the normalized depth $\nu$, defined in Eq.~\eqref{eq:asymptotic_avaerage_depth}, for generator circuits $\mathcal{G}_k$ (see Sec.~\ref{sec:special_4body_generator}) with $k>2$ for the different connectivity graphs investigated.}
    \label{tab:average_count_depth_kbody}
\end{table}

\section{Parity label tracking and $k$-body operators \label{sec:label_tracking}}
Quantum computers inherently rely on entangling gates to systematically distribute information and perform computations. A typical (Clifford) gate utilized for this purpose is the \CXNAME{} gate. The logical action of a \CXNAME{} can be understood as encoding the parity information of the two qubits involved, the target and the control qubit, on one qubit. In the $z$-basis, a \CXNAME{} encodes the $z$ parity information on the target qubit, while in the $x$ basis a \CXNAME{} gate encodes the $x$ parity information on the control qubit. To capture this abstract notion of the action of a \CXNAME{} on a set of physical qubits ${Q=\{1, ... ,n\}}$, we attribute to each qubit ${j \in Q}$ a logical parity label $\lab_j$. If not explicitly stated differently, we interpret the labels $\lab_j$ as the logical $z$ parity encoded on the qubit $j$. In this language the action of the \CXNAME{} gate $\CNOT_{c,t}$ on the sequence of labels  ${\lab = \labseq{\lab_1, ..., \lab_n}}$ is given by 
\begin{equation}
    \label{eq:CNOT_action}
    \labaction{\labseq{\lab_1, ..., \lab_n}}{\CNOT_{c,t}} = \labseq{\lab_1,..., \lab_{t-1},\lab_t \symdif \lab_c, ...,\lab_n},
\end{equation}
where $\symdif$ denotes the symmetric difference operator. We adopt the notation that operators act from the right onto sequences of labels. 

In the following, we denote the symmetric difference of two labels with shorthand notation $\lab_{c}\lab_{t}\coloneqq\lab_{c} \symdif \lab_{t}$ as well as we waive the set notation. For example, if the logical parity label $\lab_i$ contains the labels $p$ and $q$ we denote $\lab_i= pq $. Moreover, if not explicitly stated differently, we assume an initial state with logical parity labels ${\lab_j = j}$ for all ${j \in Q}$.

{\color{black}
A description of quantum circuits via gates applied to sets of parity labels provides a formal means to express the action of quantum gates on arbitrary states in the computational basis. For example, in terms of computational basis states, Eq.~\eqref{eq:CNOT_action} corresponds to
\begin{align*}
    \CNOT_{c,t}\vert a(\lab_1)\!\ldots &a(\lab_n)\rangle
    \\
    &=\ket{a(\lab_1)\ldots a(\lab_{t-1}) \, a(\lab_c)\oplus a(\lab_t)\ldots a(\lab_n)}\\
    &=\ket{a(\lab_1)\ldots a(\lab_{t-1}) \, a(\lab_c\symdif\lab_t)\ldots a(\lab_n)}
\end{align*}
where we use $a(\lab) = \oplus_{i \in \lab}a_i$
 for any computational basis state with entries $a_1,\ldots,a_n\in\set{0,1}$ and parity label $\lab$.}

The logical parity labels together with the symmetric difference operation as an addition form a vector space $V$ over $\BB$ where the empty set is the zero vector. Interpreting the $\CNOT_{c,t}$ gate as an operator on the $n$-fold direct product $V^n$, the $\CNOT_{c,t}$ operator is a bijective operator which maps label sequences that form a basis in $V$ to another label sequences that are again a basis in $V$. Hence, starting with a label sequence which is a basis in $V$, the logical parity labels form a basis at any moment within a \CXNAME{} circuit. In particular, none of the labels will be empty. Moreover, transforming the $\CNOT_{c,t}$ operator to an operator on $\BB^{n\times n}$, $\CNOT_{c,t}$ can be viewed as a Gauss operation performing an addition of rows on matrices.

We define a \emph{circuit} as a sequence of time steps - called \emph{moments} - enumerated from $1$ to $d$. Each of these moments contains a number of non-overlapping gates where we require for non-empty circuits at least one gate in the first and in the last moment. The \emph{depth} of the circuit is $d$ and the \emph{gate count} or \emph{size} is the number of gates contained in the circuit. To emphasize that a circuit $C$ acts on $n$ qubits we write $C^{(n)}$ and if the circuits only act on a subset of qubits $\{p,...,q\}$ we denote it with $C^{(p,q)}$. Moreover, $\concat{A}{B}$ denotes (a possibly shifted) concatenation of two circuits $A$ and $B$, and $C^\dagger$ the \emph{adjoint circuit} of some unitary circuit $C$. Circuits may be used to map between two label sets $\lab$ and $\lab'$: if $\lab$ and $\lab'$ both form a basis in $V$, then there exists a Clifford circuit $C$ so that $\labaction{\lab}{C} = \lab'$ and $\labaction{\lab'}{C^{\dagger}} = \lab$. 

In general, circuits can be described by specifying the unitaries to be implemented without detailing the actual implementation in terms of gates available on a quantum computer. However, this abstract description omits details that could be crucial for optimizations. Therefore, it is important to consider circuits composed solely of gates native to the quantum computer. Moreover,  the fact that different circuits with various gate count and depths can produce the same unitary operator highlights another key reason why distinguishing between circuits and their corresponding unitary operators is essential. Consider for example a logical many-body rotation ${\exp( -i \alpha \prod_{j \in \lab} Z_j)}$: A concrete implementation on quantum hardware with  access to \CXNAME{} and single-body rotation gates is given by

\begin{equation}
    \label{eq:logical_physical}
    \exp \bigg( -i \alpha \prod_{j \in \lab} Z_j  \bigg)= \concat{C}{\concat{\mathrm{RZ}_i(\alpha)}{C^\dagger}},
\end{equation}
where we first use a \CXNAME{} circuit $C$ to encode the label $\lab$ on, say, qubit $i$. This is followed by a single-body $z$ rotation applied to qubit $i$, $\mathrm{RZ}_i(\alpha)=\exp(-i\alpha Z_i)$, which in effect implements a logical many-body rotation associated with all qubits ${j\in \lab}$. Eventually, to regain the initial label sequence we use a decoding circuit $C^{\dagger}$. Note that in Eq.~\eqref{eq:logical_physical}, the right-hand side is interpreted as the induced operator of the circuit composed of $C$, $\mathrm{RZ}_i(\alpha)$ and $C^{\dagger}$.
We emphasize that RZ gates act as the identity on the $z$ parity labels which follows from the fact that RZ gates commute with each other. This in turn implies that the logical effect of a single RZ gate does not depend on other RZ gates in a circuit composed of CNOT and RZ gates.

Similarly, a series of logical operators can be encoded with
\begin{multline*}
   \prod_{ \lab \in L } \exp\bigg(-i \alpha_\lab \prod_{j \in \lab} Z_j \bigg)\\
   = \bigodot_{ (\lab ,k)\in S }\left(C_{\lab, k} \cat \mathrm{RZ}_k(\alpha_\lab) \cat C_{\lab,k}^\dagger \right),
\end{multline*}
where $L$ is a set of desired labels and $S$ is set of tuples containing the labels of $L$ paired with physical qubits $k$. The circuit $C_{\lab,k}$ produces the label $\lab$ at qubit $k$: The output label set just after $C_{\lab,k}$ contains label $\lab$ on physical qubit k. Note that the left hand side has no dependence on the physical qubits (specified only on the right-hand side by $k$). This reflects the fact that a logical operator is not aware of the device on which eventually it is implemented. 

The inverse Clifford circuits $C_{\lab,k}^\dagger$ are used to decode the produced logical parity label to return to the initial label sequence after each application of a physical non-Clifford gate. This is where parity label tracking can yield an advantage: instead of decoding after every logical non-Clifford gate, the parity labels can be tracked and the next parity label can be encoded using potentially shorter \CXNAME{} circuits. Eventually, the decoding is applied once after the last physical non-Clifford gate
\begin{multline}
    \label{eq:sequence_of_many_body_operators}
   \prod_{ \lab \in L } \exp\bigg(-i \alpha_\lab \prod_{j \in \lab} Z_j \bigg)\\
   =\bigg(\bigodot_{p = 1}^{\abs{L} } C_p \cat \mathrm{RZ}_{k(p)}(\alpha_{\lab(p)})\bigg) \cat C_{\mathrm{clean}}
\end{multline}
where $\bigodot_{i=1}^{p}C_i$ generates label $\lab(p)$ at qubit $k(p)$. The $\lab(p)$ enumerate the elements of $L$. Moreover,

\begin{equation*}
    C_{\mathrm{clean}}=  \bigodot_{p=1}^{\abs{L}} C_{\abs{L}-p+1}^\dagger.
\end{equation*}

Due to potential cancellations, the actual size and depth of the circuit $C_{\mathrm{clean}}$ can be significantly smaller than the above formula seems to imply.
Furthermore, as long as the sequence of logical operators to encode [on the left-hand side of Eq.~\eqref{eq:sequence_of_many_body_operators}] forms a pairwise commuting set, their order is irrelevant. However, depending on the order of physical operators [right-hand side of \eqref{eq:sequence_of_many_body_operators}], the corresponding \CXNAME{} circuits $C_p$ can drastically differ in depth and count. Optimizing the order of physical operators with respect to the total gate count or depth of the corresponding \CXNAME{} circuits $C_p$ is in general a NP-hard problem akin to the traveling salesman problem.

{\color{black}
The notion of utilizing labels to track the action of quantum gates can be extended to the entire Clifford group. Then, bookkeeping requires two labels and one $\mathcal{Z}_4$ phase per physical qubit \cite{Klaver2025}. For the standard generators of the Clifford group (Hadamard, phase gate and \CXNAME{} gate) there exists a simple set of label update rules. Since for the most part of this paper we are interested manipulations of the labels in the computational $z$-basis, we refer the reader to Ref.~\cite{Klaver2025} for a detailed discussion on parity label tracking via Clifford tableaus. A formal mathematical approach to parity labels, their properties and manipulation via circuits tailored to this work is provided in App.~\ref{sec:notation_auxiliary_results}.
}
\section{Average gate count and normalized depth\label{sec:fundamental_bounds}}
In this article, we are mainly interested in $z$-diagonal non-Clifford gates, which leave the logical parity labels in the $z$ basis unchanged. As long as we maintain the label tracking, relevant non-Clifford gates can always be inserted at corresponding moments of the circuit (i.e., just after a circuit $C_p$ produced a logical label of interest). Thus, in the following we focus solely on the Clifford parts that \textit{generate} the desired labels. Furthermore, throughout this article, we assume that Clifford circuits constitute \CXNAME{} circuits.

Consider $n$ qubits and a set $L^{(n)}$ of non-trivial labels over these qubits. We say that a \CXNAME{} circuit $C^{(n)}$ \textit{generates} the labels $L^{(n)}$ from a starting configuration $\lab  = (\lab_1,\ldots,\lab_{n})$ if each label in $L^{(n)}$ appears at least in one moment of the circuit $C^{(n)}$. Note that if $C^{(n)}$ generates the labels $L^{(n)}$ from $\lab $ it also generates any subset of $L^{(n)}$. For such a generator circuit $C^{(n)}$ of the label set $L^{(n)}$ we define the \textit{average \CXNAME{} count} as 
$$\mu_n(C^{(n)}, L^{(n)})= \size{C^{(n)}} / \abs{L^{(n)}}$$ 
with the total \CXNAME{} count $\size{C^{(n)}}$ of $C^{(n)}$. Furthermore, we define the \textit{asymptotic average \CXNAME{} count} of some generator circuit family $C$ and some label family $L$, where $C$ contains circuits $C^{(n)}$ and $L$ some label sets $L^{(n)}$ on $n$ qubits, as
\begin{equation}
\label{eq:asymptotic_average_count}
    \mu(C, L)=\liminf_{n\to\infty} \mu_n(C^{(n)}, L^{(n)}).
\end{equation}
A similar measure can be defined for the \CXNAME{} depth of the generator circuit family $C$. We recognize that every moment of $C^{(n)}$ allows at most $\lfloor n/2 \rfloor$ \CXNAME{} gates applied in parallel. Thus, we define 
\begin{equation}
\label{eq:asymptotic_avaerage_depth}
\nu(C,L)=\liminf_{n\to\infty}\frac{\depth{C^{(n)}} n}{2\vert L^{(n)} \vert },
\end{equation}
where $\depth{C^{(n)}}$ denotes the depth of the \CXNAME{} circuit $C^{(n)}$ (see App.~\ref{def:average_cnot_count_norm_depth}). We leave out single-body gates in the gate count and depth analysis since they typically require significantly shorter execution times than two-body gates~\cite{Cheng2023} and are not subject of the applied optimizations.

From Eq.~\eqref{eq:sequence_of_many_body_operators} it is evident that every logical many-body operator can be associated with one pair of Clifford \CXNAME{} circuits combined with a physical single-body (non-Clifford) gate. The encoding of many-body operators thus requires at least one \CXNAME{} gate per many-body operator since we demand all many-body operators and the associated labels in $L^{(n)}$ to be different and generating logical parity labels can only happen via application of \CXNAME{} gates (or related two-body entangling gates). Thus, we conclude that for any generator circuit $C^{(n)}$ of $L^{(n)}$ we expect ${\mu_n(C^{(n)}, L^{(n)})\geq 1}$ (see also App.~\ref{lem:trivial-size-bound}). 

For efficient generator circuits with low average gate count of large enough label sets, expressed in the property ${\lim_{n\rightarrow \infty}\vert L^{(n)} \vert / n = \infty}$, we can draw some fundamental conclusions. For that, let us assume we find a generator circuit $C^{(n)}$ of $L^{(n)}$ with ${\mu_n(C^{(n)}, L^{(n)}) = 1 + \gamma}$, i.e., using only ${(1 + \gamma) \vert L^{(n)}\vert}$ \CXNAME{} gates. Then, using generic counting arguments, one can demonstrate that at least ${(1-\gamma)\vert L^{(n)}\vert}$ of all ${(1 + \gamma) \vert L^{(n)}\vert}$ \CXNAME{} gates of $C^{(n)}$ transform labels ${\ell\in L^{(n)}}$ into other labels ${\ell' \in L^{(n)}}$, where $\ell'$ has not been generated in prior moments of $C^{(n)}$. For small ${\gamma>0}$ almost all gates of $C^{(n)}$ have to act in this way. By this property, we call \CXNAME{} gates implementing the above mapping \emph{typical} \CXNAME{} gates. With this in mind we can employ further counting arguments for LNN devices to find lower bounds for possible values of $\gamma$ in the limit $n\to\infty$. In particular, in case the label set $L^{(n)}$ consists of labels which cannot be pairwise combined to form other labels of $L^{(n)}$, i.e., if ${\lab_i\lab_j \notin L^{(n)}}$ with ${\lab_i, \lab_j \in L^{(n)}}$, then ${\gamma \geq \frac{1}{9}}$ for ${n\to\infty}$ holds. This is notable since label sets with this property appear frequently in the subsequent sections. As an example, consider the set of all three-body labels for $n$ qubits on a LNN device: combinations of two three-body labels cannot yield another three-body label. Similar considerations hold for all odd-bodied label sets or label sets where all labels share (at least) one constituent. In all these cases, an increased theoretical lower bound of ${\mu \geq 1 + \frac{1}{9}}$ holds. A detailed proof of the concepts outlined above can be found in App.~\ref{sec:proof-impossibility-theorem}.

\section{$k$-body generators for LNN connectivity}
\label{sec:LNN_kbody}
{\color{black}
In this section, we present the theoretical framework to construct multi-body generator circuits. Sec.~\ref{sec:two_body} outlines the basic concepts that are generalized subsequently in more technical sections.
}
\subsection{Generator circuits for two-body operators \label{sec:two_body}}
In this section we recapitulate the circuit construction introduced in Ref.~\cite{Klaver2024}. There, the authors present a \emph{scalable} quantum circuit for LNN chains that generates \emph{all} two-body labels with significantly reduced gate count and depth compared to established algorithms. When combined with corresponding physical non-Clifford single-qubit operators, this approach can be used to implement all relevant logical two-body operators diagonal in the $z$ basis. As the authors demonstrate, this can be useful for algorithms like QAOA~\cite{Farhi2014} or QFT. 

The most prominent building block of the circuit outlined in Ref.~\cite{Klaver2024} is a \emph{double-controlled NOT} (\emph{\DXNAME}) gate~\cite{ColLinNoa01}, ${\XS_{c,t}=\concat{\CNOT_{c, t}}{\SW_{c,t}}}$, which combines a \CXNAME{} gate with a corresponding \SWNAME{} gate acting on the same qubits. Using trivial operator identities, we observe for the induced operators
\begin{align*}
    \XS_{c,t}&=\left(\CNOT_{c, t}\cat (\CNOT_{c,t}\cat \CNOT_{t,c}\cat \CNOT_{c,t})\right)\\
    &=\left(\CNOT_{t,c}\cat \CNOT_{c,t}\right).
\end{align*}
The $\XS_{c,t}$ gate is depicted in Fig.~\ref{fig:2To3BodyGenerator}(a). Note that the \DXNAME~gate shares aspects with an i\SWNAME{} gate, native to many state-of-the-art hardware platforms~\cite{Sung2021, Wei2024}.

\begin{figure}
    \centering
    \includegraphics[width=\columnwidth]{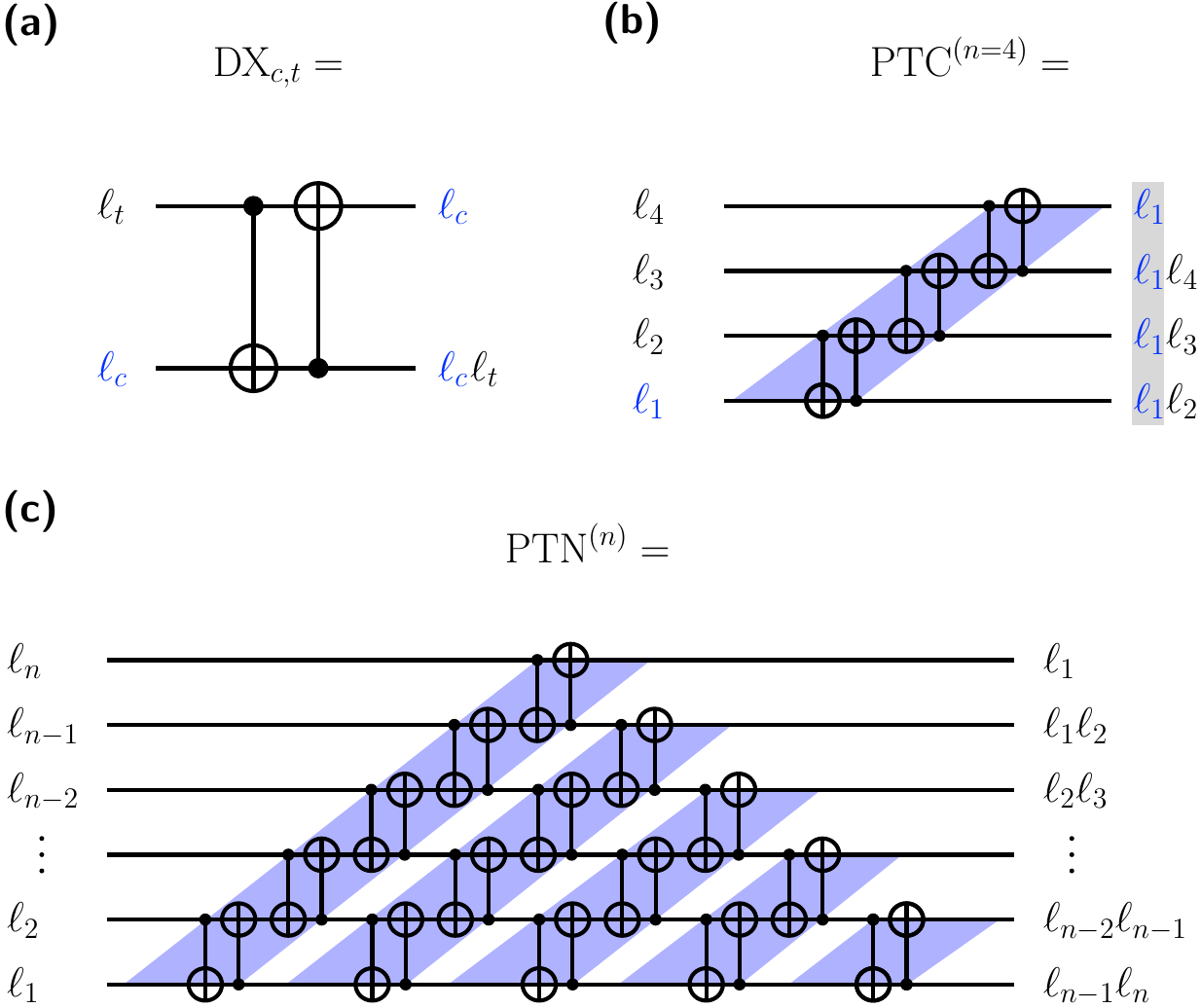}
    \caption{(a) Schematic of a \DXNAME{} gate. (b) Concatenations of \DXNAME{} gates forming a \TWINE{} chain. (c) Concatenated \TWINE{} chains acting on a decreasing number of qubits and forming the \TWINE{} network $\XSN^{(n)}$, generating all possible two-body terms.}
    \label{fig:2To3BodyGenerator}
\end{figure}

Logically, the action of a $\XS_{c,t}$ gate can be understood as swapping the parity information of the two qubits involved and a subsequent encoding of the combined parity on qubit $c$. These two features are essential for an efficient encoding of many-body operators on the LNN chain, where the location of logical parity information on the devices plays a pivotal role. Concatenations of \DXNAME{} gates, which we call a \emph{\TWINE{} chain}, allow to systematically distribute logical parity information from one physical qubit to all other qubits
\begin{equation}
\label{eq:cswap_chain}
    \left( \lab_1, \ldots ,\lab_n \right) \XSC^{(n)} = \left( (\lab_2, \ldots, \lab_n)\lab_1, \lab_1 \right) 
\end{equation}
where the \TWINE{} chain is given by
\begin{equation}
    \XSC^{(n)} = \concat{\XS_{1, 2}}{\concat{\XS_{2, 3}}{\concat{\ldots}{ \XS_{n-1, n}}}}.
\end{equation}
Fig.~\ref{fig:2To3BodyGenerator}(b) schematically depicts a \TWINE{} chain for $n=4$ qubits. Here and in the following, we use the shorthand notation ${\labseq{(\lab_2,\ldots,\lab_n)\lab_1, \lab_1} =\labseq{\lab_2\lab_1, \lab_3\lab_1, \ldots, \lab_n\lab_1, \lab_1}}$. Note that a simple \CXNAME{} chain is not able to accomplish the same encoding. \CXNAME{} chains either encode an increasing number of logical labels or just pair the logical labels of neighboring qubits when applying their adjoint operation. In contrast, $\XSC^{(1,n)}$ encodes the logical parity information originally located on the first physical qubit with the logical parity information of \emph{all} other physical qubits. Consequently, starting from the label sequence $\labseq{\lab_1, \dots, \lab_n}$, $\XSC^{(n)}$ generates all logical two-body parity labels that include the label $\lab_1$. Interestingly, we obtain the generator circuit for all two-body labels containing $\lab_1$ and all two-body labels containing $\lab_2$ from a concatenation of $\XSC$ circuits, since
\begin{equation*}
    \labaction{\labseq{(\lab_2, \ldots, \lab_n)\lab_1, \lab_1}}{\XSC^{(1, n-1)}} = \labseq{ (\lab_3, \ldots, \lab_n)\lab_2, \lab_2\lab_1 ,\lab_1} .
\end{equation*}
Similarly, we can construct the generator circuit of all two-body labels just from repeated application of \TWINE{} chains on a shrinking set of qubits
\begin{multline}
\label{eq:cswap_network}
    \XSN^{(n)} = \concat{\XSC^{(n)}}{\concat{\XSC^{(1,n-1)}}{\concat{\dots}{\XSC^{(1,2)}}}}.
\end{multline}
We call the circuit $\XSN^{(n)}$ a \emph{\TWINE{} network of type I}. Application of $\XSN^{(n)}$ on a label set $(\lab_1, \dots, \lab_n)$ yields 
\begin{equation}
\label{eq:2BodyGenerator}
 (\lab_1, \dots, \lab_n) \XSN^{(n)} = \left(\lab_n\lab_{n-1}, \lab_{n-1}\lab_{n-2},\dots,\lab_2\lab_1, \lab_1\right). \nonumber
 \end{equation}
Fig.~\ref{fig:2To3BodyGenerator}(c) illustrates the building block $\XSN^{(n)}$. As outlined in Ref.~\cite{Klaver2024}, the circuit $\XSN^{(n)}$ can be interpreted as a spatio-temporal version of the LHZ encoding where certain spanning tree lines of an extended LHZ code can be mapped to a moment of the circuit $\XSN^{(n)}$.  

Following $\XSN^{(n)}$, a \CXNAME{} chain $\overline{\CXC}^{(n)} = \CNOT_{n,n-1} \cat \CNOT_{n-1, n-2} \cat \dots \cat \CNOT_{2,1}$ acts as a decoding circuit and may be used to restore the original labels (in reversed order)
\begin{equation*}
\label{eq:reverted_cnot_chain}
    \labaction{\labseq{\lab_n\lab_{n-1}, \lab_{n-1}\lab_{n-2},\dots,\lab_2\lab_1, \lab_1}}{\overline{\CXC}^{(n)}} = \labseq{\lab_n, \dots, \lab_1}.
\end{equation*}
With $\overline{O}$ we denote the \emph{reversed circuit} of a circuit $O$, which simply maps each $\CNOT_{i , j}$ to $\CNOT_{n-i+1 , n-j+1}$ (see App.~\ref{def:reversed_circuit}). 
The concatenation 
\begin{equation}
\label{eq:clean_two_body generator}
    \mathcal{G}_2^{(n)} = \XSN^{(n)} \cat \overline{\CXC}^{(n)}
\end{equation}
constitutes a generator circuit for all two-body labels which maps an input sequence of single-body labels to an output of single-body labels. In the subsequent sections we call generator circuits with this property \emph{clean} generator circuits.
For a more detailed analysis of the above circuits and their properties, we refer to App.~\ref{subsec:two-body_generators}. 

\subsection{Generator circuits for three-body operators \label{sec:three_body}}
Our strategy for a circuit that generates all three-body logical parity labels $\mathcal{L}^{(n)}_3$ builds on the two-body generator circuit. Here, the central idea is to encode an additional label $\lab_s$ repeatedly on physical qubit $2$ just before every application of a \TWINE{} chain (starting at physical qubit 2). The corresponding modified building block is thus a \emph{modified \TWINE{} chain} 
\begin{eqnarray}
    \XSCM^{(n)}=\CNOT_{1,2}\cat\XSC^{(2,n)}. \nonumber
\end{eqnarray}
Concatenations of modified \TWINE{} chains yield a \emph{modified  \TWINE{} network of type I}
\begin{eqnarray}
\label{eq:controlled_cswap_network}
    \XSNM^{(n)} &=& \XSCM^{(n)} \cat \XSCM^{(1,n-1)} \cat \dots \\ &\dots & \cat \nonumber ~\XSCM^{(1,3)}\cat\CNOT_{1,2}.
\end{eqnarray}
The application of $\XSNM^{(n)}$ on the label sequence $\labseq{\lab_s, \lab_1, \dots, \lab_{n-1}}$ yields
\begin{multline}
    \label{eq:special_three_body_output}
    \labaction{\labseq{\lab_s, \lab_1 \dots, \lab_{n-1}}}{\XSNM^{(n)}}\\
    =\labseq{\lab_s, \lab_{n-1}\lab_{n-2}, \lab_{n-2}\lab_{n-3}, \ldots, \lab_{2}\lab_1, \lab_1\lab_s},
\end{multline}
where $\XSNM^{(n)}$ is generating the label set $$\set{\lab_s\lab_i\lab_j\mid 1\leq i<j\leq n-1}$$ from the label sequence $\labseq{\lab_s, \lab_1 \ldots, \lab_{n-1}}$, i.e.~all logical three-body labels that include $\lab_s$. By this property the label $\lab_s$ is a \emph{special label} wherefore we also call the modified \TWINE{} network of type I a \emph{special three-body generator} (in App.~\ref{def:special_k-body_generator}, we give a detailed definition of arbitrary special $k$-body generators).

While $\XSNM^{(n)}$ in Eq.~\eqref{eq:special_three_body_output} generates all labels containing the special label $\lab_s$, the output label set has an inconvenient form as the special label $\lab_s$ is encoded on physical qubit $1$ and $n$. For a LNN chain with periodic boundary conditions, $\lab_s$ can be annihilated on qubit $n$ using a single \CXNAME{} gate. However, given open boundary conditions, extra SWAP operations would be required to achieve the same. This issue is circumvented recognizing that the circuit $\XSNM^{(n)}$ constitutes a special three-body generator not only from the label sequence $\labseq{\lab_s, \lab_1, \ldots,\lab_{n-1}}$ but also from the label sequence $\labseq{\lab_s, (\lab_1 \ldots \lab_{n-1})\lab_s}$. Then, application of $\XSNM^{(n)}$ yields
\begin{multline}
\label{eq:special_three_body_from_encoded}
     \labaction{\labseq{\lab_s, (\lab_1,\ldots,  \lab_{n-1})\lab_s}}{\XSNM^{(n)}}\\
     =\labseq{\lab_s, \lab_{n-1}\lab_{n-2},\lab_{n-2}\lab_{n-3},\ldots,\lab_2\lab_1, \lab_1}.
\end{multline}

\begin{figure}
    \centering
    \includegraphics[width=\columnwidth]{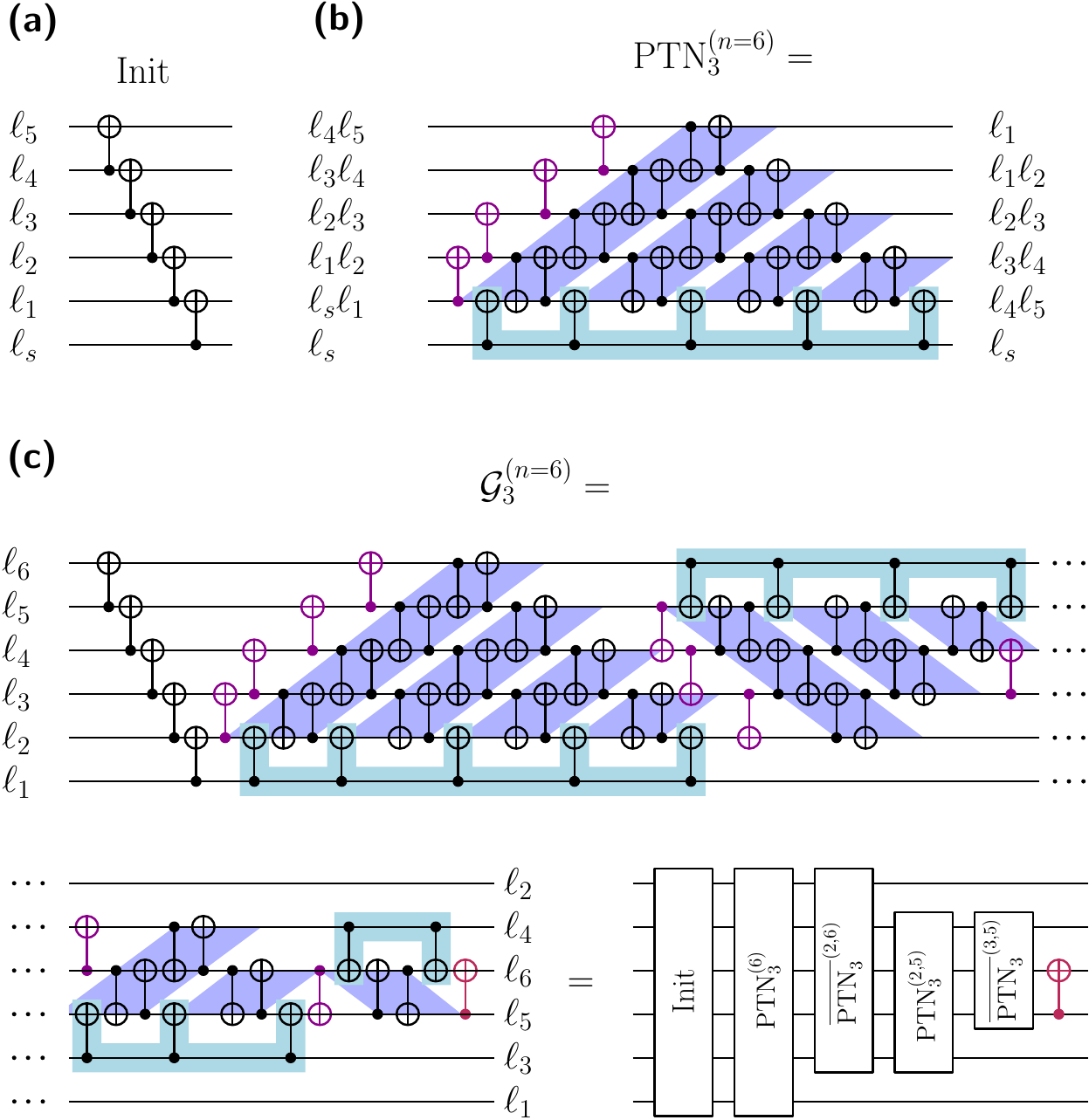}
    \caption{(a) An initialization circuit preparing the input label set for special three-body generators. (b) Schematic of the special three-body generator circuit $\XSN_3^{(n)}$ for the special label $\lab_s$. (c) Schematic of a three-body generator circuit $\G_3^{(n)}$ composed of special three-body generators $\XSN_3^{(n)}$ and $\overline{\XSN}_3^{(n)}$, respectively.}
    \label{fig:special_three_body_gen}
\end{figure}

The output label sequence of Eq.~\eqref{eq:special_three_body_from_encoded} coincides with
that of Eq.~\eqref{eq:special_three_body_output} except on qubit $n$ which does not anymore contain the special label. The starting label sequence $( \lab_s, (\lab_1 \ldots  \lab_{n-1})\lab_s )$ [left-hand side of Eq.~\eqref{eq:special_three_body_from_encoded}] can be obtained via two simple \CXNAME{} chains. The first \CXNAME{} chain prepares a label set similar to the right-hand side of Eq.~\eqref{eq:special_three_body_from_encoded} [see Fig.~\ref{fig:special_three_body_gen}(a)]
\begin{multline}
\label{eq:CNOTchain_for_three_body_generator}
    \labaction{\labseq{\lab_s, \lab_1,\ldots,\lab_{n-1}}}{(\CXC^{(n)})^\dagger}\\
    =\labseq{\lab_s, \lab_s\lab_1, \lab_1\lab_2 \dots,\lab_{n-2}\lab_{n-1}}
\end{multline}
and via a second \CXNAME{} chain we can transform the result into the required input of Eq.~\eqref{eq:special_three_body_from_encoded}
\begin{equation*}
    \labaction{\labseq{\lab_s, \lab_s\lab_1, \lab_1\lab_2 \dots,\lab_{n-2}\lab_{n-1}}}{\CXC^{(2,n)}} = \labseq{\lab_s, (\lab_1, ...,  \lab_{n-1})\lab_s}.
\end{equation*}
The combined building block 
\begin{equation}
\label{eq:SG_3}
    \XSN_3^{(n)}=\CXC^{(2,n)} \cat \XSNM^{(n)}
\end{equation}
is depicted in Fig.~\ref{fig:special_three_body_gen}(b) and is called \emph{\TWINE{} network of type II}. Note that $\XSN_3^{(n)}$ is also a special three-body generator since it generates all three-body logical labels that include the special label $\lab_s$. Importantly, however, $\XSN_3^{(n)}$ maps the input label set (which corresponds to the output label set of Eq.~\eqref{eq:CNOTchain_for_three_body_generator}) on an output label set reusable for forthcoming special three-body generators
\begin{multline}
\label{eq:special_three_body}
    \labaction{\labseq{\lab_s, \lab_s\lab_1, \lab_1\lab_2 \dots,\lab_{n-2}\lab_{n-1}}}{\XSN_3^{(n)}}\\
    =\labseq{\lab_s, \lab_{n-1}\lab_{n-2},\lab_{n-2}\lab_{n-3}, ... ,\lab_2\lab_1, \lab_1}
\end{multline}
The label set on the right-hand side of Eq.~\eqref{eq:special_three_body} has the same structure of logical labels encoded on qubits $\{2, \dots, n\}$ (in reversed order) as the input label set has on qubits $\{1,\dots, n\}$. Furthermore, the special label $\lab_s$ is isolated on physical qubit $1$. Thus, in the next step, we can now use $\lab_1$ as the special label and apply a reversed $\overline{\XSN}_3^{(2,n)}$ circuit, which is a special three-body generator for the label set $\{\lab_1,\dots,\lab_{n-1}\}$ acting on the physical qubits $\{2, \dots, n\}$. The parity label $\lab_s$, which is isolated on qubit $1$ remains unaffected by this. Pursuing this strategy on a shrinking set of qubits, we form a clean generator circuit $\mathcal{G}_3^{(n)}$ for all possible logical three-body labels. The corresponding circuit is schematically depicted in Fig.~\ref{fig:special_three_body_gen}(c). After the application of the last \TWINE{} network of type II (which only acts on three qubits) and a final \CXNAME{} gate that acts as a clean-up step, the output label sequence is given by
\begin{equation}
\label{eq:3body_output_label_set}
    \labseq{\lab_1, \lab_3, \ldots, \lab_4, \lab_2}.
\end{equation}
If we demand the input and output label set to coincide with $\labseq{\lab_1, \dots, \lab_n}$, this can be achieved with a sorting network, which has $\mathcal{O}(n)$ depth and $\mathcal{O}(n^2)$ gate count, which is (relatively) cheap in comparison to the $\mathcal{O}(n^2)$ depth and $\mathcal{O}(n^3)$ gate count of $\G_3^{(n)}$. For a comprehensive definition and rigorous analysis of $\G_3^{(n)}$, we refer the reader to App.~\ref{def:three-body_generator} and \ref{thm:three-body-generator}

Our circuit constructions have the neat advantage that building blocks can be partially executed in parallel. Using shifted concatenations as described in App.~\ref{def:shifted_concatenation} in Eq.~\eqref{eq:cswap_network} allows to start each $\XSC^{(1,p)}$ circuit with only a constant delay of four moments [compare Fig.~\ref{fig:2To3BodyGenerator} (c)]. In contrast, without shifted concatenation, each $\XSC^{(1,p)}$ circuit would acquire a delay of $2p$ moments. Eventually, shifted concatenations yield a reduction in total depth of $\mathcal{O}(n)$. For $\XSN^{(n)}$ this yields a total depth of $4n+ \mathcal{O}(1)$. Similar concatenation shifts can be applied for $\XSN_3^{(n)}$ and $\mathcal{G}_3^{(n)}$. The respective depth-reduced circuits are depicted in Fig.~\ref{fig:2To3BodyGenerator} and \ref{fig:special_three_body_gen}. 

With the additional concatenation shifts for consecutive special three-body generators, we obtain for $\mathcal{G}_3^{(n)}$ a reduced total depth of $n^2+\mathcal{O}(n)$, whereas the total gate count amounts to $\frac{1}{3}n^3 + \mathcal{O}(n^2)$. This result can readily be deduced from the involved building blocks: every special three-body generator is mainly based on \DXNAME{} gates (up to $\mathcal{O}(n^2)$ additional \CXNAME{} gates), whereby every \DXNAME{} gate (consisting of two \CXNAME{} gates) generates one required three-body label. In total, there are $\vert \mathcal{L}^{(n)}_3 \vert = \binom{n}{3}=\frac{1}{6} n^3 + \mathcal{O}(n^2)$ three-body labels. Thus, in leading order, we expect a total count of $\frac{2}{6} n^3$ gates.
According to Eqs.~\eqref{eq:asymptotic_average_count} and \eqref{eq:asymptotic_avaerage_depth}, this yields an asymptotic average gate count of $\mu(\mathcal{G}_3, \mathcal{L}_3)=2$ as well as an asymptotic depth of $\nu(\mathcal{G}_3,\mathcal{L}_3)=3$. The label set $\mathcal{L}^{(n)}_3$ falls into the class of label sets investigated in Sec.~\ref{sec:fundamental_bounds} and App.~\ref{thm:impossibility_theorem} for which we find a hypothetical lower bound of $\mu \geq 1+\frac{1}{9}$. We note that for large enough $n$ also most of the special three-body generator circuits must obey the same lower bounds since two labels of the corresponding special label set cannot be combined to form another label from the set. This holds equivalently for every (large enough) special label set in which all labels share at least one constituent. To obtain an average asymptotic gate count smaller than two for LNN chains, this requires a finite fraction of labels from $\mathcal{L}^{(n)}_3$ to be created using (on average) only a single \CXNAME{} gate. Constructing such a circuit becomes exceedingly difficult, if not outright impossible, even for moderately large values of $n$ ($\sim 10$).

\subsection{Generator circuits for arbitrary many-body operators \label{sec:special_4body_generator}}
The concepts outlined in Sec.~\ref{sec:two_body} and \ref{sec:three_body} can be used to construct special four-body generators. To do so, we recycle the optimized three-body generator circuit with inserting an extra label on top of the existing ones. This can be achieved using an additional \CXNAME{} to encode an additional label $\lab_{s'}$ on the physical qubit that holds the special label $\lab_s$. Then, the circuit $\CNOT_{1,2} \cat \XSN_3^{(2,n)}$ generates all four-body labels that include $\lab_s\lab_{s'}$. 
However, if we want to utilize the full depth-optimized three-body generator circuit $\mathcal{G}_3$ via mirroring every second special three-body generator $\XSN_3^{(n)}$, we need to encode the label $\lab_{s'}$ also on the other end of the LNN chain. Let us assume a similar initial label set as used for the special three-body generator $\XSN_3^{(n)}$, $\labseq{\lab_{s'},\lab_s,\lab_s\lab_2,\lab_2\lab_3,\ldots,\lab_{n-2}\lab_{n-1}}$ encoded using the tools of the latter section. Next, we apply a \CXNAME{} chain to obtain
\begin{multline}
\label{eq:special_4body_in}
\labaction{\labseq{\lab_{s'},\lab_s,\lab_s\lab_2,\lab_2\lab_3,\ldots,\lab_{n-2}\lab_{n-1}}}{\CXC^{(3,n)}}\\
    =\labseq{\lab_{s'},\lab_s,\lab_s(\lab_2\ldots\lab_{n-1})}
\end{multline}
Now we can use a \TWINE{} chain to encode the label $\lab_{s'}$ on physical qubit $n$
\begin{multline*}
\labaction{\labseq{\lab_{s'},\lab_s,\lab_s(\lab_2\ldots\lab_{n-1})}}{\XSC^{(n)}}\\
    =\labseq{\lab_{s'}\lab_s,\lab_{s'}\lab_s(\lab_2\ldots\lab_{n-1}),\lab_{s'}}.
\end{multline*}
Up the label encoded on qubit $n$, the resulting label sequence has the same structure as the input label sequence in Eq.~\eqref{eq:special_three_body_from_encoded}. Therefore, all four-body labels containing $\lab_s\lab_{s'}$ can now be generated using 
$\XSNM^{(1,n-1)}$ (note that we leave out the very last qubit $n$ where $\lab_{s'}$ is encoded)
\begin{multline}
\label{eq:special_4body_out}
\labaction{\labseq{\lab_{s'}\lab_s,\lab_{s'}\lab_s(\lab_2\ldots\lab_{n-1}),\lab_{s'}}}{\XSNM^{(1,n-1)}}\\
   =\labseq{\lab_{s'}\lab_s,\lab_{n-2}\lab_{n-1},\lab_{n-3}\lab_{n-2},\ldots,\lab_3\lab_2,\lab_2,\lab_{s'}}.
\end{multline}
Importantly, the output label set of Eq.~\eqref{eq:special_4body_out} encodes the same label structure (in reversed order) as the input state in Eq.~\eqref{eq:special_4body_in}. This pattern can be repeated and eventually yields a circuit which generates all four-body labels that containing $\lab_{s'}$ (initially encoded on qubit $1$). 

As a post-processing step, we add an additional clean-up circuit with $\mathcal{O}(n)$ \CXNAME{} count and depth to regain single-bodied labels. The resulting (combined) circuit constitutes a clean special four-body generator $\SG_4^{(n)}$. Fig.~\ref{fig:4body_gen}(a) illustrates the circuit $\SG_4^{(n)}$. A precise definition of the circuit $\SG_4^{(n)}$ and its components as well as their properties can be found in App.~\ref{def:label_generator} and \ref{def:special_k-body_generator}-\ref{thm:clean_special_four-body_generator}.

Similar to the design of the three-body generator $\mathcal{G}_3^{(n)}$, we can build a four-body generator $\G_4^{(n)}$ by concatenating $n-3$ special four-body generators each with different special labels. Thereby, later clean special four-body generators ignore all previous special labels, until the last clean special four-body generator only acts on four qubits. The circuit design is schematically shown in Fig.~\ref{fig:4body_gen}(b).
\begin{figure}
    \centering
    \includegraphics[width=1.0\linewidth]{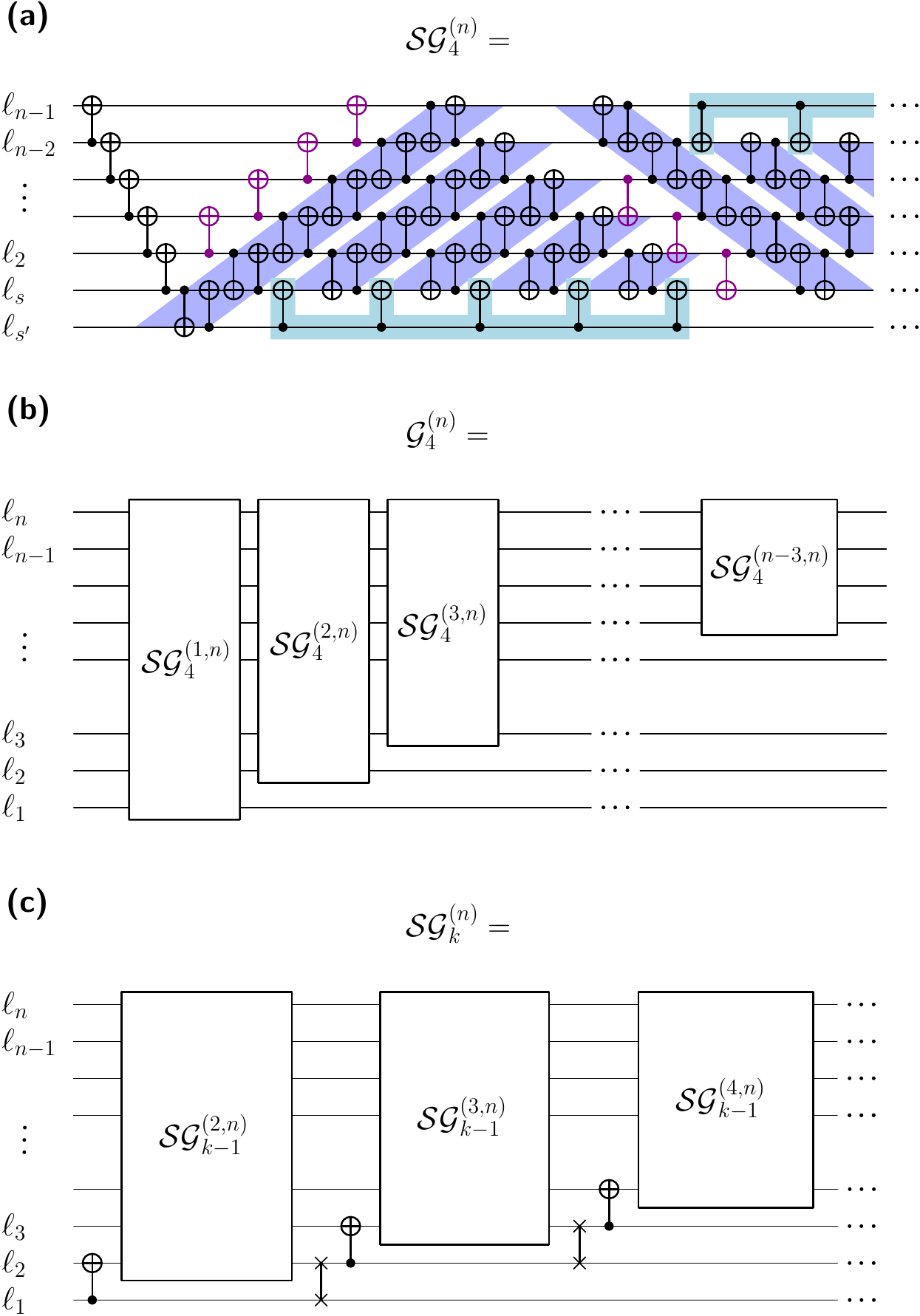}
    \caption{(a) Schematic of the clean special four-body generator $\SG_4^{(n)}$. (b) Schematic of the four-body generator circuit $\G_4^{(n)}$ from the clean special four-body generator $\SG_4^{(n)}$. (c) Design of a clean special $k$-body generator $\SG_k^{(n)}$ from the clean special $(k\!-\!1)$-body generator $\SG_{k-1}^{(n)}$. We indicate a \SWNAME{} gate by crossed-out qubits connected by a line.}
    \label{fig:4body_gen}
\end{figure}

The circuit construction for the clean special four-body generator $\SG_4^{(n)}$ can now be utilized to constructively design clean special $k$-body generators for arbitrary $k >4$. The strategy here is to recycle the clean special $(k\!-\!1)$-body generator $\SG_{k-1}^{(n)}$ for the construction of a clean special $k$-body generator $\SG_{k}^{(n)}$. To do so, we encode the an extra special label $\lab_{s'}$ into each of the special label $\lab_s$ of the corresponding clean special $(k\!-\!1)$-body generator. The corresponding circuit then becomes a clean special $(k\!-\!1)$-body generator which generates all $k$-body labels with the special labels $\lab_{s'}\lab_s$. To proceed with the next clean special $(k\!-\!1)$-body generator, we apply a \SWNAME{} gate to exchange the special labels $\lab_s$ and $\lab_{s'}$ enabling the repetition of the same procedure. Chaining $n-k+1$ clean special $(k\!-\!1)$-body generators (with the extra special label $\lab_{s'}$) and a subsequent clean-up step then yields a clean special $k$-body generator (i.e. a generator for all labels that contain $\lab_{s'}$). Concatenations of clean special $k$-body generators can then be composed into a $k$-body generator $\mathcal{G}_k^{(n)}$. The recursive construction of our $k$-body generator $\G_k^{(n)}$ implies that $\G_k^{(n)}$ is also a generator for all $(k-p)$-body labels with $p<k$. For more details, we refer to App.~\ref{def:clean_special_k-body_generator} - \ref{thm:k-body_generator_from_special_k-body_generator}. 

An interesting feature of this construction of $k$-body generators $\mathcal{G}_k^{(n)}$ is that each $\mathcal{G}_k^{(n)}$ inherits the asymptotic average gate count and asymptotic normalized depth from the corresponding clean special $k$-body generator since for each $k$ we just introduce an asymptotically negligible amount of extra \CXNAME{} gates. Recursively, this yields the asymptotic average gate count for the family of $k$-body generators $\mathcal{G}_k$ with the corresponding family of label sets $\mathcal{L}_k$ to be $\mu(\G_k,\mathcal{L}_k)= 2$. Likewise, for the asymptotic normalized depth we obtain $\nu(\G_k, \mathcal{L}_k)=k$ (see App.~\ref{thm:kbodies_lnn} for a detailed gate count and depth analysis). In Fig.~\ref{fig:count_and_depth_analysis} we investigate the convergence properties of the average gate count $\mu_n$  and the normalized depth $\nu_n$ for the different $k$-body generator circuits as a function of the number of qubits $n$. The average count shows fast convergence to the expected asymptotic value ($\mu=2$). In contrast, convergence of the normalized depth slows down with increasing $k$. This is expected as we do not employ further depth optimizations beyond $k=4$. Instead higher order terms are recursively constructed using lower order building blocks which amounts to increasing pre-factors in the sub-leading order contributions to the normalized depth.

\begin{figure}
    \centering
    \includegraphics[width=1.0\linewidth]{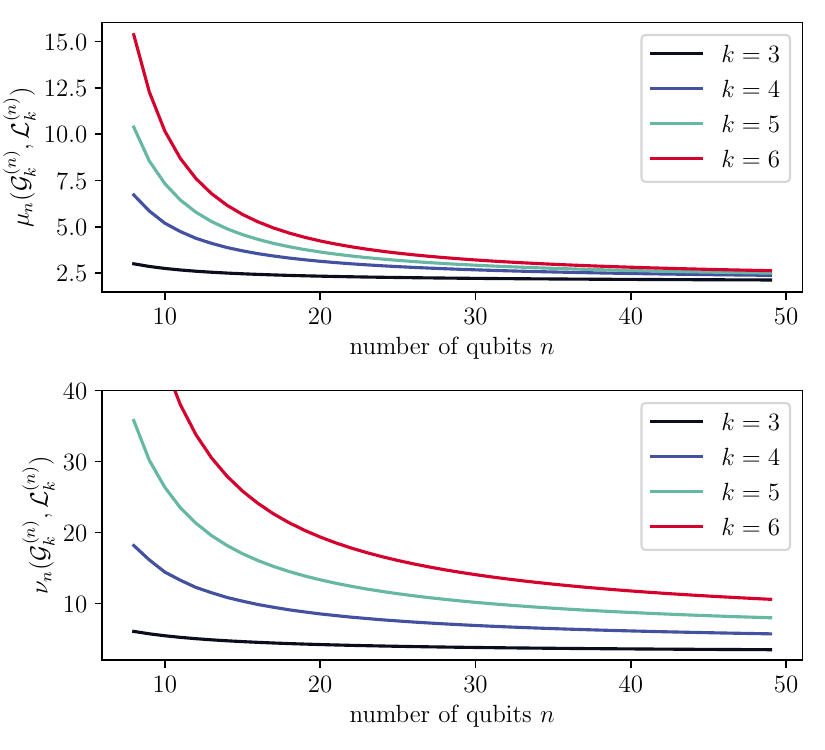}
    \caption{Average count $\mu_n(\mathcal{G}_k^{(n)}, \mathcal{L}_k^{(n)})$ and normalized depth $\nu_n(\mathcal{G}_k^{(n)}, \mathcal{L}_k^{(n)})$ for different $k$-body generator circuits on LNN devices as a function of the number of qubits $n$.}
    \label{fig:count_and_depth_analysis}
\end{figure}

\section{Generic connectivity graphs} \label{sec:generic_connectivity_graphs}
{\color{black}
In this section, we generalize the results of the last sections to a  wide range of different connectivity graphs and demonstrate how the developed building blocks can be customized to fit connectivity constraints and yield performance gains.

\subsection{All-to-all connectivity \label{sec:all-to-all}}

\begin{figure}
    \centering
    \includegraphics[width=1.0\linewidth]{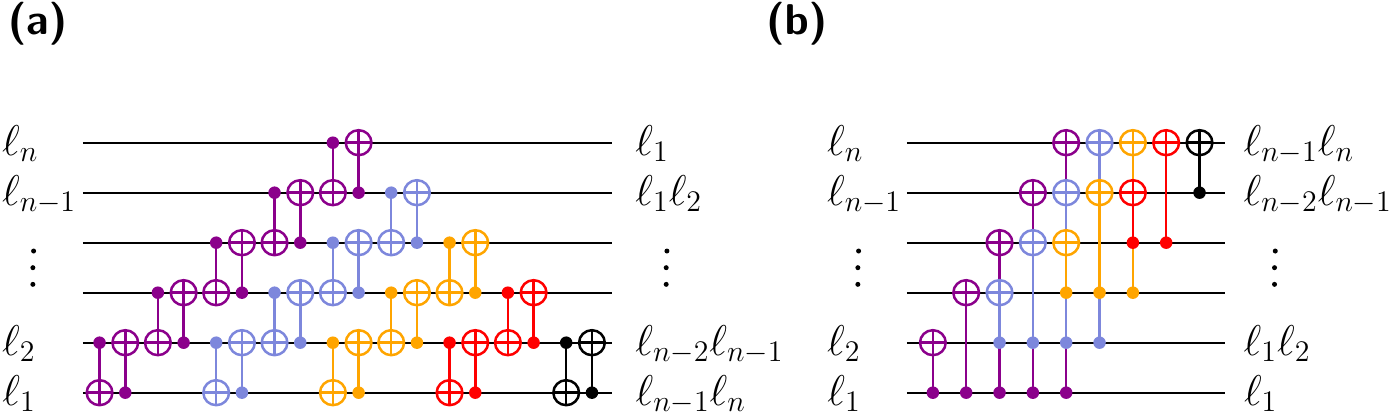}
    \caption{(a) Circuit diagram of the two-body generator circuit $\XSN^{(n)}$ on LNN devices. (b) Equivalent circuit $\XSN_{\alltoall}^{(n)}$ for a hardware with all-to-all connectivity. Note that this version uses half the amount of \CXNAME s and has half the depth.}
    \label{fig:all-to-all}
\end{figure}

Before directly discussing generic connectivity graphs, it is worthwhile to first investigate all-to-all connected devices as they offer the simplest generalization.
}
On all-to-all connected devices, logical information of any qubit is accessible from any other qubit. This is dramatically different to LNN devices where each qubit is connected to (at most) \emph{two} other qubits. On first glance one might assume corresponding generator circuits to reflect these dramatic architectural differences since all-to-all connected devices render the positioning of labels inconsequential and label transportation via \TWINE{} chains is unnecessary. However, in this section we present a simple scheme to map the circuits developed in Sec.~\ref{sec:LNN_kbody} to gate count and depth optimized circuits on all-to-all connected devices.

On LNN devices, ${\XS_{i,j} = \concat{\CNOT_{i,j}}{\SW_{i,j}}}$ constitutes the basis of our circuit constructions, providing two essential features: (i) it encodes the combined parity information on the control qubit $i$ and (ii) it transports the initial logical parity information of the control qubit $i$ to the target qubit $j$. Since the positioning of logical labels is irrelevant on all-to-all connected devices, this implies that the action of a \DXNAME{} gate can be emulated on an all-to-all connected device with just a single \CXNAME{} gate. For that, we replace each \SWNAME{} gate with a \emph{virtual} \SWNAME{} gate, which virtually swaps the positions of the qubits involved. Virtual \SWNAME{} gates are merely a trick to map the output label sequence of a \DXNAME{} gate on the LNN chain to its corresponding counterpart for all-to-all connected devices. One can use classical software to keep track of all virtual \SWNAME{} gates applied and, in the end of the circuit, we obtain the actual qubit layout by reverting all virtual \SWNAME{} gates. 

With this approach, we can now construct $\XSC^{(n)}$ and $\XSN^{(n)}$ circuits for all-to-all systems. For example, Fig.~\ref{fig:all-to-all}(b) shows the circuit $\XSN_{\alltoall}^{(n)}$ obtained by applying this reverting process of virtual \SWNAME{} swaps on the circuit $\XSN^{(n)}$. For comparison, Fig.~\ref{fig:all-to-all}(a) displays the corresponding LNN circuit. For each moment $m$ of the circuit in Fig.~\ref{fig:all-to-all}(b), there exists a bijection which maps the label sequence at $m$ to the corresponding label sequence at moment $2m$ in the circuit of Fig.~\ref{fig:all-to-all}(a). In terms of generated labels, the action of $\XSN_{\alltoall}^{(n)}$ is identical with that of $\XSN^{(n)}$ (given in Eq.~\eqref{eq:cswap_network}). However, $\XSN_{\alltoall}^{(n)}$ only requires \emph{half} as many gates and \emph{half} of the depth as compared to $\XSN^{(n)}$.

We can readily extend these ideas to $\XSNM_{\alltoall}^{(n)}$ and $\XSN_{3,{\alltoall}}^{(n)}$ circuits [see Eq.~\eqref{eq:controlled_cswap_network} and \eqref{eq:SG_3} and Fig.~\ref{fig:special_three_body_gen}(b)]. Their dominant contribution consists of $\XSC_{\alltoall}^{(n)}$ circuits and, consequently, we achieve (almost) half of the gate count and depth compared to the LNN construction. Likewise, all the generator circuits profit from this inheritance. For the family of $k$-body generators ${\mathcal{G}_{k,\alltoall}}$ with ${k>2}$, this yields an asymptotic average gate count of ${\mu(\mathcal{G}_{k,\alltoall}, \mathcal{L}_k) = 1}$ and a normalized depth of ${\nu(\mathcal{G}_{k,\alltoall}, \mathcal{L}_k) = k/2}$ (see App.~\ref{sec:k-body_generators_on_other_devices}). Let us emphasize again that ${\mu =1}$ corresponds to the theoretical lower bound. Thus, all the (special) $k$-body generators for any ${k\geq 2}$ are asymptotically optimal circuits (in gate count) where on average each \CXNAME{} in the circuit $\G_k^{(n)}$ creates one label of the label set $\mathcal{L}_k^{(n)}$ for ${n\to\infty}$.

\subsection{Locally connected graphs}\label{sec:2d_architectures}
From the prospective of connectivity, LNN and all-to-all connected devices constitute two extreme cases. LNN devices possess the minimal required connectivity so that every qubit can be reached from any other qubit via residual connections. Except for the first and last qubit, each qubit has two directly connected neighbor qubits, resulting in an average neighbor count of ${\eta_n \sim 2}$. In all-to-all deviecs, the qubit connectivity graph is described by a complete graph where all qubits are directly connected resulting in an average neighbor count of ${\eta_n= n-1}$. In between these two extremes, we can allocate qubit layouts with connectivity graphs with an average neighbor count ${2<\eta_n < n-1}$. As a prototypical example, here we discuss square grid devices \cite{harrigan_quantum_2021, Acharya2024}, however, our results can be directly transferred to other popular architectures such as hexagonal \cite{Chamberland2020, Weidenfeller2022} or octagonal \cite{Dupont2023} layouts.

If we aim to implement generator circuits on square grid devices, from the results of Secs.~\ref{sec:LNN_kbody} and ~\ref{sec:all-to-all} we can already deduce an average gate count $1\leq \mu \leq 2$. This follows directly from the connectivity graph described by square grid devices, where all qubits can easily be connected in a Hamiltonian path [Fig.~\ref{fig:hamiltonian_grid_path}(a)]. A chosen Hamiltonian path can then be used as an effective LNN model to implement the generator circuits outlined in Sec.~\ref{sec:LNN_kbody} with an average gate count of ${\mu(\mathcal{G}_k, \mathcal{L}_k)=2}$. Interpreting square grid devices as effective LNN devices might be convenient, however, this neglects a considerable amount of additional connections. To utilize this untapped potential, we pursue a different strategy: instead of laying out a Hamiltonian path, we define a path so that all qubits of the square grid are either within the path or directly neighbor this path. Fig.~\ref{fig:hamiltonian_grid_path}(b) depicts such a path on a ${3\times 3}$ square grid where the black solid line indicates the path itself and the blue dashed lines represent the connections to path neighbors. Subsequently, we call paths of this kind \emph{Hamiltonian grid paths} (HGP). Qubits are either directly part of a HGP or they are direct neighbors to a HGP in which case we call them HGP neighbors. We call a HGP \emph{minimal}, if the number of qubits directly contained in it is minimal (and, as consequence, the number of HGP neighbors is maximal). For example, the path depicted in Fig.~\ref{fig:hamiltonian_grid_path}(a) represents a HGP but it is not minimal. In contrast, Fig.~\ref{fig:hamiltonian_grid_path}(b) shows a minimal HGP on a ${3\times 3}$ grid. In the following, we demonstrate how minimal HGPs can be used to efficiently implement the circuit building blocks of Sec.~\ref{sec:LNN_kbody}. 

\begin{figure}
    \centering
    \includegraphics[width=\columnwidth]{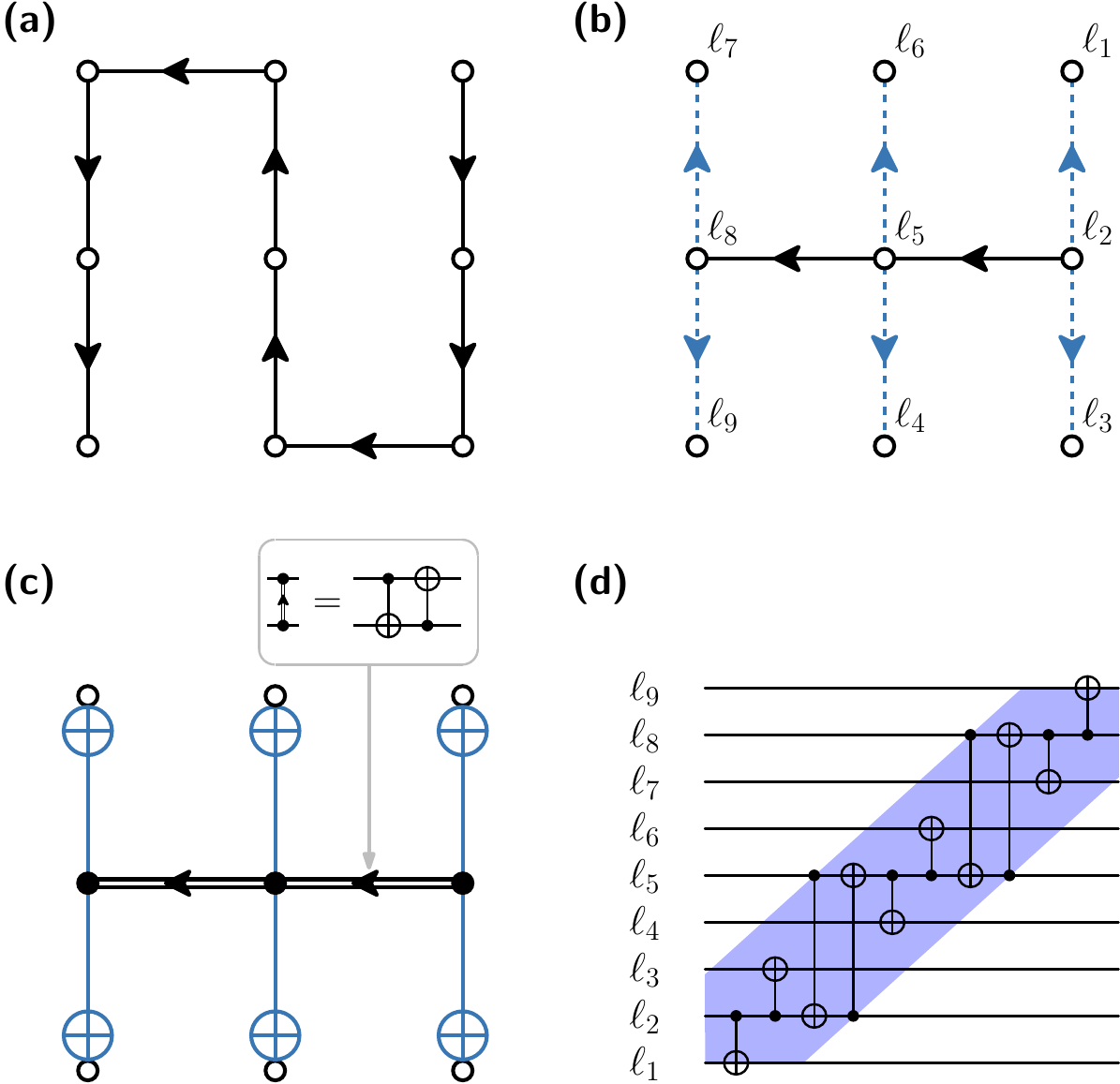}
    \caption{(a) A Hamiltonian path on a square grid of nine qubits. (b) A HGP on a square grid of nine qubits. (c) An implementation of $\XSC_\grid^{(9)}$ on a square grid of nine qubits based on the HGP of (b). (d) Same as (c) as a circuit diagram.}
    \label{fig:hamiltonian_grid_path}
\end{figure}

Let us recapitulate the action of $\XS_{i,j}$, which encodes the combined parity information, $\lab_i\lab_j$ of qubit $i$ and $j$ onto qubit $i$ while transporting the (initial) parity information of qubit $i$ onto qubit $j$. The latter property allows to use concatenations of \DXNAME{} gates, i.e., \TWINE{} chains, to encode the label $\lab_i$ on all qubits of the concatenation. On square grids, we can use the additional connections to implement this even more efficiently: we use concatenations of \DXNAME{} gates, akin to \TWINE{} chains on LNN devices, which is applied along a defined minimal HGP [black arrowed lines in Fig.~\ref{fig:hamiltonian_grid_path}(b)]. However, we intertwine it with \CXNAME{} gates that target the direct HGP neighbors [blue dashed lines in Fig.~\ref{fig:hamiltonian_grid_path}(b)]. In this way, the corresponding circuit, subsequently denoted $\XSC_{\mathrm{grid}}^{(n)}$, encodes the label $\lab_2$ of Fig.~\ref{fig:hamiltonian_grid_path}(b) onto all qubits of the square grid. 
Fig.~\ref{fig:hamiltonian_grid_path}(c) and (d) show a $\XSC_\grid^{(n)}$ circuit following a minimal HGP for a $3\times 3$ square grid. Compared to a corresponding $\XSC^{(n)}$ circuit (along a conventional Hamiltonian path), $\XSC_{\mathrm{grid}}^{(n)}$ saves $\sim 1/3$ of the involved \CXNAME{} gates.

Following the construction principles outlined in Sec.~\ref{sec:LNN_kbody}, we repeatedly apply $\XSC_{\mathrm{grid}}^{(1,p)}$ circuits on a shrinking set of qubits $p$ for the construction of $\XSN_{\mathrm{grid}}^{(n)}$ circuits.
To allow this, after each $\XSC_{\mathrm{grid}}^{(1,p)}$ a final circuit is required to pigeonhole labels and allow subsequent circuits to address all relevant labels. In Fig.~\ref{fig:2body_generator_square_grid_1}(a) we schematically display the first four consecutive $\XSC_{\mathrm{grid}}^{(1,p)}$ circuits together with the corresponding pigeonhole circuit for a ${6\times 4}$ square grid. 
As a general rule for the pigeonhole circuit, we first transport a label to one of the HGP neighbors, the next label is transported to the corresponding HGP node while the third label then occupies the second HGP neighbor.

\begin{figure}
    \centering
    \includegraphics[width=\columnwidth]{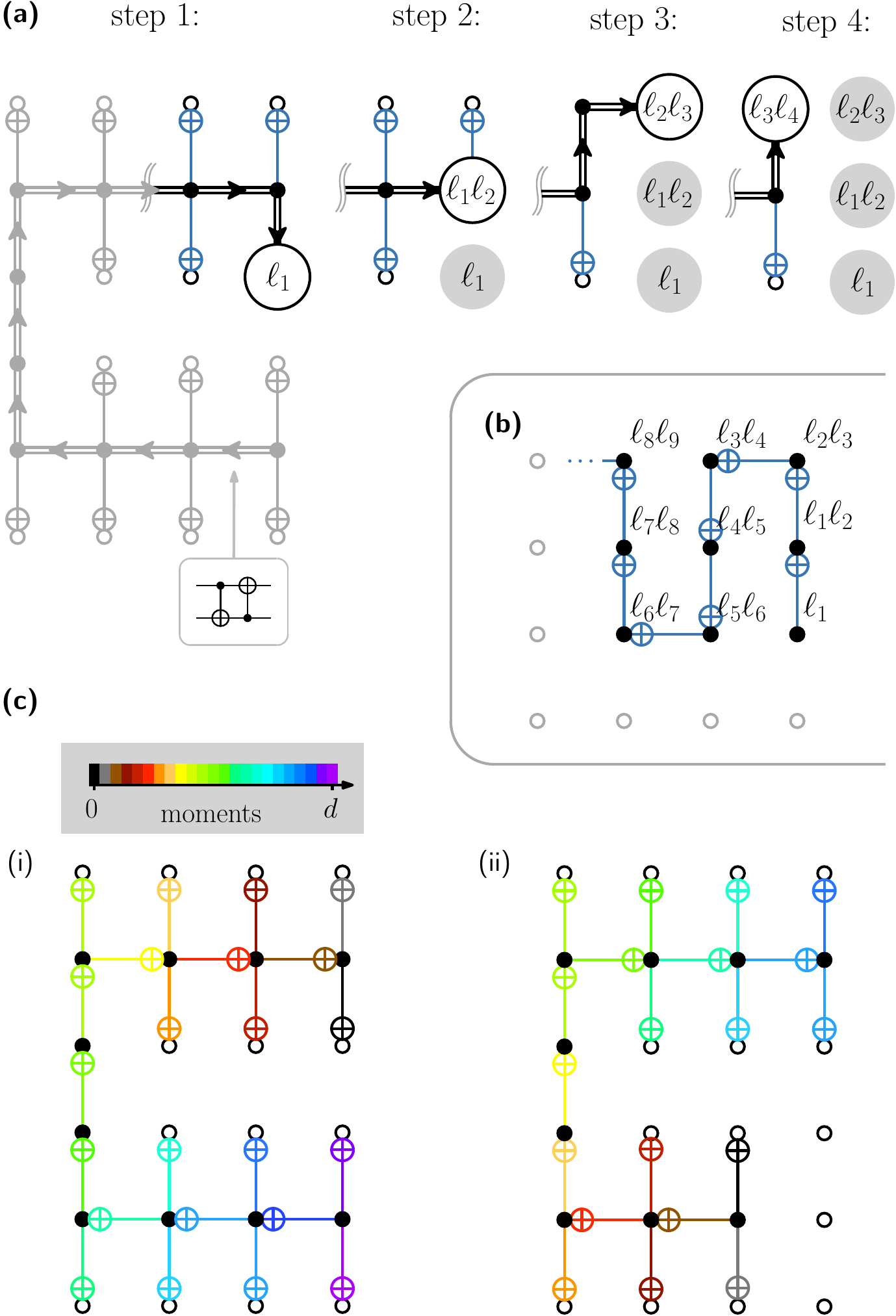}
    \caption{(a) Schematic of the first four consecutive $\XSC_\grid^{(1,p)}$ circuits combined with a respective circuit to pigeonhole labels to corresponding locations to simplify the decoding. (b) Schematic of the decoding step given by a \CXNAME{} chain along a conventional Hamiltonian path. Depending on the device dimensions, extra \SWNAME{} gate are required along the boundaries. (c) The initializer consisting of two $\CNOT$ chains on a minimal HGP.}
    \label{fig:2body_generator_square_grid_1}
\end{figure}

Concatenation of ${n-1}$ $\XSC_{\mathrm{grid}}^{(1,p)}$ circuits (with the final pigeonhole circuits) leaves us with a label pattern depicted in Fig.~\ref{fig:2body_generator_square_grid_1}(b). Similar to the LNN circuit construction, we can decode the labels using a chain of \CXNAME{} gates [Fig.~\ref{fig:2body_generator_square_grid_1}(b)]. Depending on the geometry of the device, extra \SWNAME{} gates are necessary to breach the boundaries. Note that, as a consequence of the employed HGP, after decoding we do not regain the initial label order, however, labels can be tracked and appear in deterministic positions dictated by the algorithm.

As a whole, the outlined steps yield a clean generator circuit for all two-body labels $\mathcal{G}_{2,\mathrm{grid}}^{(n)}$. In comparison to the corresponding LNN circuit derived in Sec.~\ref{sec:LNN_kbody}, here we save ${\sim 1/3}$ of the involved \CXNAME{} gates which leaves us with an asymptotic average \CXNAME{} count of ${\mu(\mathcal{G}_{2,\mathrm{grid}}, \mathcal{L}_2) = 1 + \frac{1}{3}}$. Interestingly, in close analogy to LNN, the $\XSC_\grid^{(1,p)}$ and the final pigeonhole circuits can be stacked using shifted concatenations to form a depth-optimized $\XSN_{\mathrm{grid}}^{(n)}$ circuit. Fig.~\ref{fig:2body_generator_square_grid_2} displays the $\XSN_{\mathrm{grid}}^{(12)}$ circuit together with a decoding \CXNAME{} chain for a ${3\times 4}$ square grid. Each of the ${n-1}$ $\XSC_{\mathrm{grid}}^{(1,p)}$ circuits starts with a constant delay of six moments which eventually yields a total depth of ${6n+\mathcal{O}(1)}$ for the combined $\XSN_{\grid}^{(n)}$ circuit.

\begin{figure*}
    \centering
    \includegraphics[width=1.0\linewidth]{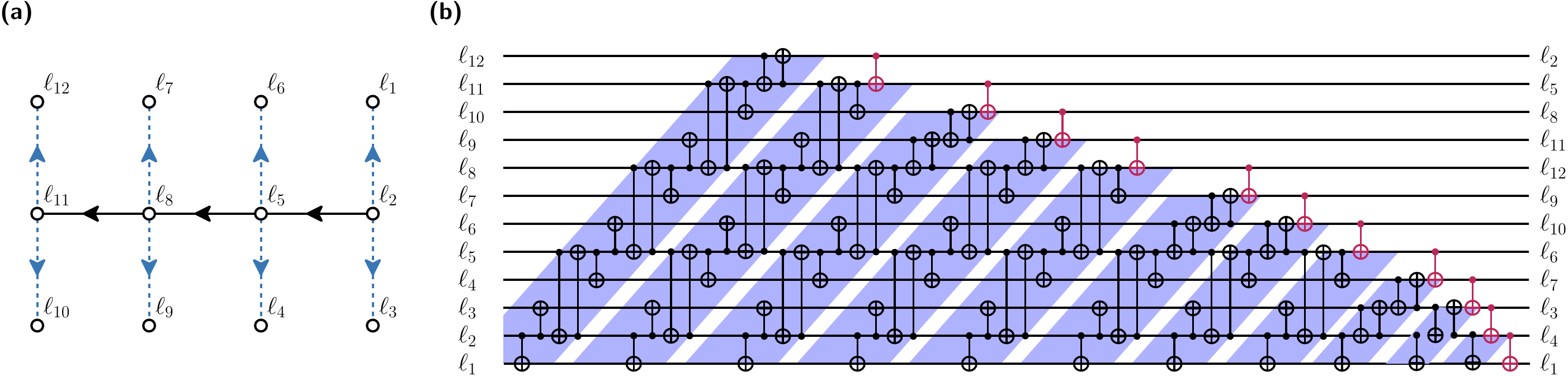}
    \caption{A clean two-body generator designed for a square grid device. (a) The HGP used for the design of the clean two-body generator depicted in (b) on a square grid device with 12 qubits together with their initial labels enumerated with $\ell_1, \dots, \ell_{12}$. (b) Two-body generator implementation for the square grid device schematically shown in (a).}
    \label{fig:2body_generator_square_grid_2}
\end{figure*}

Beyond two-body generator circuits, we can repeat the steps derived for LNN devices using our adapted building blocks $\XSC_{\mathrm{grid}}^{(n)}$. The only additional building block required is an \emph{initializer} that prepares a label sequence akin to that of Eq.~\eqref{eq:special_three_body_from_encoded}, i.e.~encoding a special label $\lab_s$ in all other labels while keeping the label $\lab_s$ fixed. On a minimal HGP, this can be done with two \CXNAME{} sequences, reaching all HGP nodes as well as all HGP neighbors [see Fig.~\ref{fig:2body_generator_square_grid_1}(c), note that for the square grid, this can equivalently accomplished with a \CXNAME{} chain along a conventional Hamiltonian path with the same count and depth]. With this, special three-body generators are constructed using the initializer of Fig.~\ref{fig:2body_generator_square_grid_1}(c)(ii) together with a modified \TWINE{} network of type I $\XSN'^{(n)}_\grid$. $\XSN'^{(n)}_\grid$ can be derived from the definition of $\XSN'^{(n)}$ [Eq.~\eqref{eq:controlled_cswap_network}] by replacing each $\XSC^{(n)}$ circuit with the respective square grid version $\XSC^{(n)}_\grid$ (followed by a suitable pigeonhole circuit). Concatenating an initializer according to Fig.~\ref{fig:2body_generator_square_grid_1}(c)(ii) with $\XSN'^{(n)}_\grid$, this yields the special three-body generator $\XSN^{(n)}_{3,\grid}$ adapted to square grids. Similarly, the full three-body generator $\mathcal{G}_3^{(n)}$ can be assembled using concatenations of the initaliser akin to Fig.~\ref{fig:2body_generator_square_grid_1}(c)(i) as well as $\XSN^{(n)}_{3,\grid}$ and $\overline{\XSN}^{(n)}_{3,\grid}$ (replacing the building blocks in Fig.~\ref{fig:special_three_body_gen}(c) with the respective square grid adapted version). 

In general, to adapt the LNN circuit to the square grid, we need to replace the respective building blocks (see Fig.~\ref{fig:2To3BodyGenerator}) with the ones suitable for square grids. However, following this recipe, we obtain a depth overhead originating from the misalignment of the special three-body generators $\XSN^{(n)}_{3,\grid}$ and $\overline{\XSN}^{(n)}_{3,\grid}$ [in contrast to the respective case for LNN devices; compare Fig.~\ref{fig:special_three_body_gen}(c)]. The concatenation of the special three-body generator for square grids $\XSN^{(n)}_{3,\grid}$ and its reversed counterpart $\overline{\XSN}^{(n)}_{3,\grid}$ can be shifted at most by $4/3n + \mathcal{O}(\sqrt{n})$. This originates from the form of the $\XSN_\grid$ circuits (compare Fig.~\ref{fig:2body_generator_square_grid_2}) where each constituent $\XSC_\grid$ circuit is displaced by six moments with a corresponding length of $4n/3 + \mathcal{O}(\sqrt{n})$. The resulting misalignment of $\XSN^{(n)}_{3,\mathrm{grid}}$ and $\overline{\XSN}^{(n)}_{3,\mathrm{grid}}$ then leads to an increased depth of $7/3 n^2 +\mathcal{O}(n\sqrt{n})$ for the corresponding three-body generator $\mathcal{G}^{(n)}_3$. To improve on this result, in App.~\ref{sec:k-body_generators_on_other_devices} we investigate a construction based on clean special three-body generators which allows to use Lemma \ref{lem:adjoint_circuit}. This has the advantage that we can concatenate not only $\XSN^{(n)}_{3,\grid}$ and $\overline{\XSN}^{(n)}_{3,\grid}$, i.e.~ a special three body generator and its reversed circuit, but also the adjoint reversed. With this improvement, we obtain a depth of $25/12 n^2 + \mathcal{O}(n)$ for the full three body generator. However, starting from special four-body generators, this approach does not yield any depth gains and, eventually, for four-body generators and beyond, we find an asymptotic average \CXNAME{} count of $\mu(\mathcal{G}_k, \mathcal{L}_k) = 1 + \frac{1}{3}$ and an normalized asymptotic depth of $\nu(\mathcal{G}_k, \mathcal{L}_k) = 7/3 k$ (see App.~\ref{sec:k-body_generators_on_other_devices}).

We emphasize that the procedure outlined above is generic and can be generalized to all connectivity graphs where the qubits can be connected in a HGP such as ladders, hexagonal layouts \cite{Chamberland2020} but also graphs with non-local and/or non-planar connectivity \cite{bluvstein_quantum_2022, bluvstein_logical_2024, Moses2023}. For a connectivity graph of interest, we define a (minimal) HGP. Based on this, we construct the respective building blocks: a $\XSC$ circuit which maps the respective input label sets as derived in Eq.~\eqref{eq:cswap_chain}, a decoding circuit and an initializer circuit. Up to implementation details, we can then readily design any $k$-body generator of interest. As an additional example of utility, in App.~\ref{app:heavy-hex} we demonstrate our approach for heavy hexagon layouts. All results regarding asymptotic normalized depth and asymptotic average \CXNAME{} count are collected in Tab.~\ref{tab:average_count_depth_kbody}. 

The approach of using HGPs can also be applied for LNN and all-to-all connected devices, though the respective paths are not recognized as such or do not yield performance gains. In the LNN case, a minimal Hamiltonian grid (almost) coincides with a conventional Hamiltonian path: the HGP includes all but the first and the last qubits which are the only HGP neighbors. For all-to-all connected devices, a minimal HGP is trivial as it contains just a single qubit with all other qubits being HGP neighbors. 

Using HGPs allows to convert increments in connectivity into reductions of the total \CXNAME{} count. In Fig.~\ref{fig:overview} we depict the asymptotic average \CXNAME{} count $\mu$ as a function of the inverse asymptotic average neighbor count $1/\eta$ for the investigated connectivity graphs. With the approach outlined above, for $k$-body generator circuits this yields a reduction in gate count on the order of $\mathcal{O}(n^k)$. However, this comes at the expense of an increased depth on the order of $\mathcal{O}(n^{k-1})$ as compared to LNN (except for all-to-all, compare Tab.~\ref{tab:average_count_depth_kbody}). Motivated by this observation, in App.~\ref{app:dept_optimization} we investigate an alternative approach for a ladder layout, which allows to reduce both, total count and depth (as compared to the LNN approach). Here, the main idea is to use multiple HPGs in parallel. For the two-body generator circuits, this allows to obtain a \CXNAME{} count of $3/4n^2 + \mathcal{O}(n)$ with a total depth of $3n+\mathcal{O}(1)$, which beats the corresponding LNN approach in both metrics. 

{\color{black}
\section{Optimizing depth\label{app:dept_optimization}}

In Sec.~\ref{sec:2d_architectures} we have introduced a robust strategy to reduce the gate count of $k$-body generator algorithms by efficiently implementing them on a given connectivity graph. Simultaneously, this comes at the expense of a depth increase compared to the corresponding construction for nearest neighbor connectivity graphs. If we aim to improve both, count and depth, several additional considerations are necessary. 

Specifically, achieving lower depth (as compared to implementations for nearest neighbor connectivity graphs) requires stricter connectivity requirements on the hardware. The overall size of our algorithm implementations is largely determined by the size of the \DXNAME{} gates which constitutes the most prominent component. Meanwhile, the depth depends on the required shift between the constituent circuit components, in particular the shifts between subsequent $\XSC$ chains. As the number of HGP neighbors connected to a single HGP node grows, so does the required shift between subsequent $\XSC$ chains and, consequently we obtain an increased depth. For example, the $\XSN_{\mathrm{LNN}}^{(n)}$ circuit (with zero HGP neighbors) has a depth of $4n + \mathcal{O}(1)$ while $\XSN_{\mathrm{grid}}^{(n)}$ (with two HGP neighbors per HGP node) results in $6n + \mathcal{O}(1)$.

To reduce these shifts we need to leverage parallelization. One way to achieve this is by using multiple HGPs. To ensure functionality, the corresponding HGP nodes must not intersect, collectively cover all qubits and an edge must exist that connects labels that have traveled $m$ moments on one path with those that have traveled $m$ moments on another.

\begin{figure}
    \centering
    \includegraphics[width=1.0\linewidth]{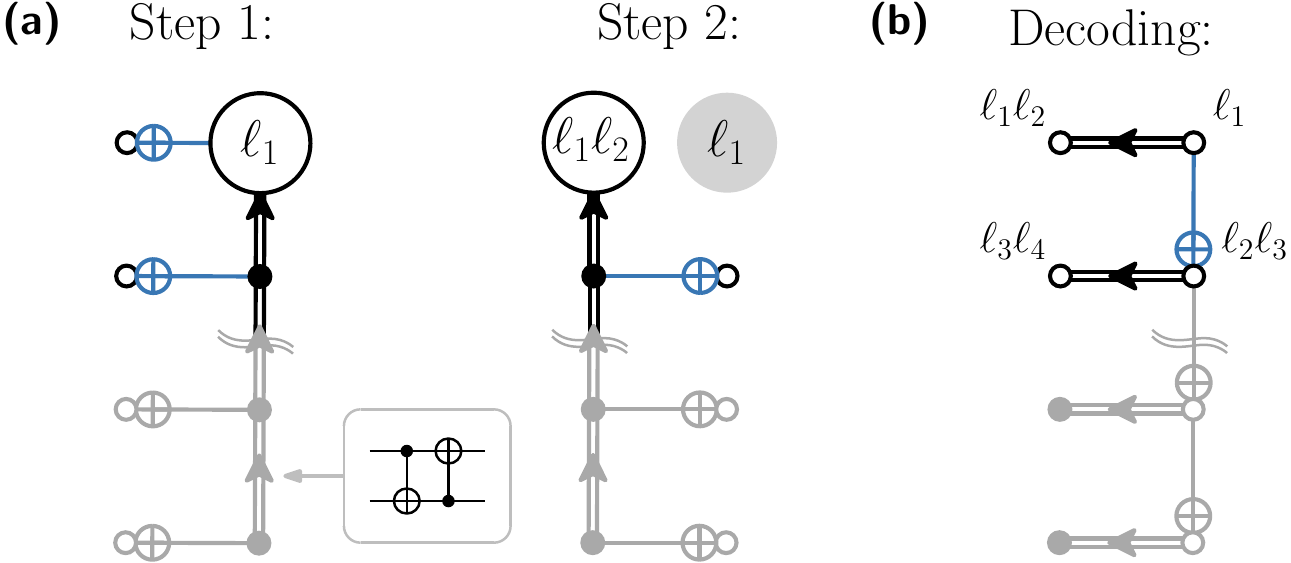}
\caption{
(a) Schematic of the first two consecutive \TWINE{} chains adapted to a ladder connectivity graph, illustrating how labels are transported along two distinct HGPs.  
(b) Schematic of the decoding procedure, which begins with a single \DXNAME{} gate and is followed by one \CXNAME{} gate. This sequence is then applied repeatedly from top to bottom. The resulting decoding step contributes only in sub-leading order to both the circuit size and depth.}
    \label{fig:ladder}
\end{figure}

Assuming a planar qubit layout, the above requirements are for example met by a ladder connectivity graph (see Fig.~\ref{fig:ladder}). On a ladder we can naturally define two HGPs where the nodes of one are the HPG neighbors of the other. In that way, the required shift of subsequent $\XSC$ chains reduces to $3$ \CXNAME{} gates (as compared to $4$) when using the (sequential) LNN approach. Notably, this parallelization scheme only affects the depth but leaves the \CXNAME{} count unchanged. That is, using two connected HGPs (where each node is associated with one HGP neighbor) results in the exact same \CXNAME{} as the sequential approach where just a single HGP of the same kind is used. More concretely, we find
\[
    \size{\XSN_{\mathrm{ladder}}^{(n)}} = \frac{3}{4}n^2 + \mathcal{O}(n)
\]
\[
    \depth{\XSN_{\mathrm{ladder}}^{(n)}} = 3n + \mathcal{O}(1).
\]
Note that this improves both depth and count compared to the LNN implementation (see also Tabs. \ref{tab:qaoa} and \ref{tab:qft}). Generalizing this approach for $k$-body generators while preserving the improved depth scaling, one would need to introduce ``checkerboard connectivity'' - that is, diagonal couplings between qubits that are already connected in the vertical and horizontal directions.

}

\section{Applications}

\subsection{Quantum Approximate Optimization
Algorithms \label{sec:QAOA}}
\begin{figure*}
    \centering
    \includegraphics[width=1.0\linewidth]{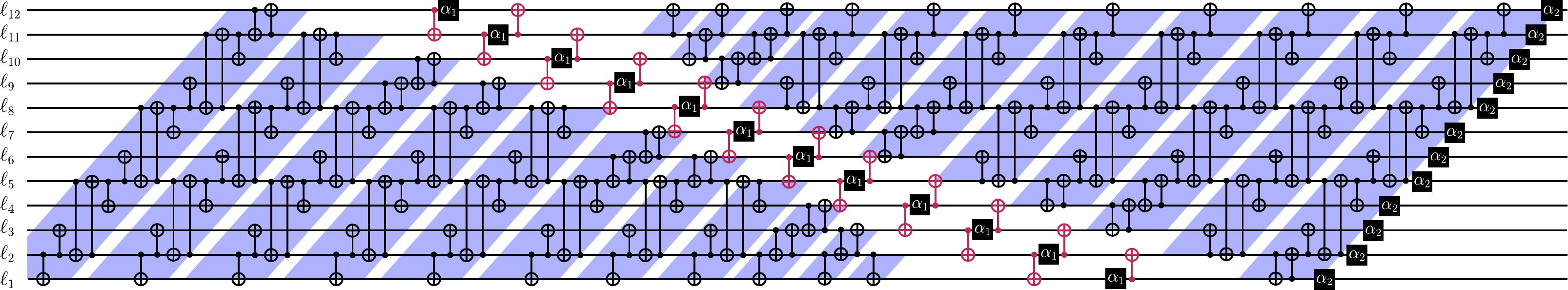}
    \caption{Two subsequent QAOA cycles for the $3 \times 4$ square grid of Fig.~\ref{fig:2body_generator_square_grid_2}(a) using shifted concatenation of two-body generators $\mathcal{G}_{2, \mathrm{grid}}^{(12)}$ and $(\overline{\mathcal{G}}_{2,\mathrm{grid}}^{(12)})^\dagger$ with single-qubit $x$-rotations of the driver unitary in between, indicated by black rectangular boxes. For illustration purposes we neglect the (many) single-qubit $z$ rotations.}
    \label{fig:qaoa_square_grid}
\end{figure*}

QAOA \cite{Farhi2014, Blekos2024} is considered a promising candidate algorithm to solve combinatorial optimization problems using quantum computers. The algorithm follows a quantum-classical protocol to reach the ground state of an (typically) Ising-like problem Hamiltonian. This requires the repeated application of a parameterized problem unitary $U_P(\beta)$, which encodes the problem Hamiltonian, followed by a parameterized driver unitary $U_X(\alpha)$. The concatenation of a driver together with a problem unitary defines a QAOA cycle. The prepared candidate state after $p$ QAOA cycles
\begin{eqnarray}
    \vert \psi (\alpha,\beta) \rangle = \prod_{j=1}^p U_X(\alpha_j) U_P(\beta_j)\vert + \rangle ^{\otimes n} \nonumber
\end{eqnarray}
yields a trial energy expectation value which is then used in a quantum-classical feedback loop to optimize the $2p$ parameters $\alpha_j$, $\beta_j$ with $j \in [1,p]$. In Ref.~\cite{Klaver2024}, the authors discuss the implementation of QAOA for all-to-all connected QUBO problems on LNN devices using parity label tracking. There, the corresponding problem Hamiltonian is described by an all-to-all connected Ising Hamiltonian of the form
\begin{eqnarray}
    H_{\mathrm{QUBO}} = \sum_{k=1}^{n} \sum_{j<k}J_{jk} Z_{j}Z_{k} + \sum_{j=1}^nh_j Z_j. \nonumber
\end{eqnarray}
The implementation of the problem unitary $U_{\mathrm{QUBO}} = \exp(-i \beta H_{\mathrm{QUBO}})$ thus requires the encoding of all logical two-body rotation operators $\exp(-i\beta J_{ij}Z_iZ_j)$, a task that can be accomplished using the clean two body-generator $\G_2^{(n)}$ given in Eq.~\eqref{eq:clean_two_body generator} (intertwined with single-body rotation gates wherever corresponding parity labels are available).

Utilizing the circuit building blocks developed in Secs.~\ref{sec:LNN_kbody} and \ref{sec:generic_connectivity_graphs}, we can generalize this to arbitrary higher order binary optimization (HUBO) problems which we can efficiently implement on a multitude of different connectivity graphs. For example, to encode a problem unitary $\exp(-i \beta H_{\mathrm{HUBO}})$ with a problem Hamiltonian of the form
\begin{eqnarray}
    H_{\mathrm{HUBO}} &=& \sum_{k=1}^{n} \sum_{j<k} \sum_{l<j} M_{jkl} Z_{j}Z_{k}Z_{l} \nonumber\\
    &+&  \sum_{k=1}^{n} \sum_{j<k}J_{jk} Z_{j}Z_{k} +\sum_{j=1}^n  h_j Z_j \nonumber 
\end{eqnarray}
we can use the clean three-body generator circuit $\mathcal{G}_3^{(n)}$ (which is automatically also a generator of all two-body terms). Adapting the required building block of $\mathcal{G}_3^{(n)}$ readily yields the corresponding implementation for the connectivity graph of interest.

In Ref.~\cite{Klaver2024} the authors demonstrate a QAOA encoding with a depth of $2n + \mathcal{O}(1)$ per QAOA cycle (up to an initialization encoding circuit). In contrast, a QAOA algorithm for QUBO problems on LNN devices implemented based on $\mathcal{G}_2^{(n)}$ would yield a depth of ${4n +\mathcal{O}(1)}$ per QAOA cycle. To improve on this, we can employ a similar strategy as used for the design of the three-body generators: Instead of concatenating bare $\mathcal{G}_2^{(n)}$ generators in subsequent QAOA cycles, we use a shifted concatenation of circuits that alternate between the two-body generators $\mathcal{G}_2^{(n)}$ and $\overline{\mathcal{G}_2}^{(n)}$. Moreover, we apply shifted concatenations of the $x$ rotation gates of the driver unitary after each two-body generator. In this way, shifted concatenation of all building blocks achieves a reduction of the total QAOA algorithm depth by a factor $\sim 2$. More concretely, with this trick for $p$ QAOA cycles we obtain a depth of ${2n (1+ 1/p)+\mathcal{O}(1)}$ per cycle on LNN devices. A similar approach can be utilized for all-to-all connected devices, reducing the normalized depth per QAOA cycle from ${2n + \mathcal{O}(1)}$ to ${n (1+1/p) +\mathcal{O}(1)}$ per QAOA cycle.

For qubit layouts with other connectivity graphs, a depth efficient implementation of QAOA \cite{harrigan_quantum_2021, Sachdeva2024, Pelofske2023} requires more care.  As discussed in Sec.~\ref{sec:2d_architectures}, the two-body generator tend to become asymmetric so that shifted concatenations of, for instance, $\mathcal{G}_{2,\mathrm{grid}}^{(n)}$ and $\overline{\mathcal{G}}_{2, \mathrm{grid}}^{(n)}$ do not align properly [compare Fig.~\ref{fig:2body_generator_square_grid_2}(b)]. Each $\mathcal{G}_{2,\mathrm{grid}}^{(n)}$,  $\overline{\mathcal{G}}_{2, \mathrm{grid}}^{(n)}$, respectively, has a depth of $6n + \mathcal{O}(1)$, while their shifted concatenation allows only for an overlap of $4n/3 + \mathcal{O}(1)$. However, according to Lemma~\ref{lem:adjoint_circuit}, since $\overline{\mathcal{G}}_k^{(n)}$ is a clean generator circuit for all $k$-body terms from a sequence of single-body labels, then also the adjoint circuit $(\overline{\mathcal{G}}_k^{(n)})^\dagger$ is a clean $k$-body generator from a set of single-body labels. Thus, instead of concatenating $\mathcal{G}_{2,\mathrm{grid}}^{(n)}$ and $\overline{\mathcal{G}}_{2, \mathrm{grid}}^{(n)}$ for subsequent QAOA cycles, we can also concatenate $\mathcal{G}_{2,\mathrm{grid}}^{(n)}$ and $(\overline{\mathcal{G}}_{2, \mathrm{grid}}^{(n)})^\dagger$. This enables a better alignment and a larger overlap of subsequent QAOA blocks using shifted concatenation. In Fig.~\ref{fig:qaoa_square_grid}, we schematically illustrate this approach for two subsequent QAOA cycles on the $3\times 4$ square grid shown in Fig.~\ref{fig:2body_generator_square_grid_2}(a). In between the two Hamiltonian encoding blocks a permutation of the single-body labels is available so that we can apply the $x$ rotation of the driver unitary. Shifted concatenation of QAOA blocks built from $\mathcal{G}_{2,\mathrm{grid}}^{(n)}$ and $(\overline{\mathcal{G}}_{2,\mathrm{grid}}^{(n)})^\dagger$ reduce the total depth by a factor of $\sim 2$. More precisely, for an even number of QAOA cycles $p$ we obtain a depth of ${3n(p+\frac{4}{9})+\mathcal{O}(1)}$ while for an odd number of QAOA we find ${3n(p+1)+\mathcal{O}(1)}$. Thus, eventually we obtain a depth of ${3n(1+1/p)+\mathcal{O}(1)}$ (${3n(1+\frac{4}{9p})+\mathcal{O}(1)}$) per QAOA cycle given an odd (even) number of cycles. For ${p \gg 1}$, this is on par with the best known depth, however saving a factor of $9/4$ in gate count compared to established \SWNAME{} based approaches \cite{harrigan_quantum_2021, Weidenfeller2022}. The same approach can be applied to other layouts such as heavy-hexagon and ladder. For all the connectivity graphs we investigate, we obtain a circuit depth that is at most equal to the best known result with a simultaneously significant improvement in gate count. In fact in most cases both metrics outperform existing algorithms. Our results for implementations of QAOA on the different connectivity graphs are collected in Tab.~\ref{tab:qaoa}.

{\color{black}
Oftentimes, realistic QUBO or HUBO problems are represented by graphs with an edge density less than all-to-all connectivity. In such cases, one may still use the full generator circuits to encode such problems, yet the implementation becomes less efficient since not all generated labels are actually required for the encoding of the problem graph. Interestingly, even without any further optimizations the exact generator solutions still significantly outperforms state-of-the-art transpilation tools for large regime of edge densities. Fig.~\ref{fig:transpilers_comparison} compares different transpilation tools (at their respective highest optimization level) against our exact solution for different qubit connectivities. For sufficiently small edge densities \texttt{Qiskit} \cite{javadiabhari2024} and \texttt{TKET} \cite{Sivarajah2020} transpilation performs well in comparison to our methods. However, interestingly, already at moderate graph densities around $\sim 0.25-0.55$ our connectivity adapted solutions outperform all investigated compilation tools in gate-count as well as in circuit depth. Fig.~\ref{fig:crossing} depicts the threshold edge density (with respect to the best performing transpiler) as a function of the connectivity. Above this threshold edge density our exact solution should be preferred over explicit transpilation. As connectivity increases we observe a slight drift of the threshold edge density. While our implementations gain efficiency with increased connectivity, so do transpilation tools as more connections can be utilized. Nonetheless, this result is remarkable especially as we employ no additional optimizations on top of our exact solution. Note also that our exact solution scales favorably in terms of transpilation time: since our circuits are already tailored to given connectivity graphs no further adaptions are required other than a direct transpilation into native gate sets (which is trivial and computationally cheap). In contrast, the compute time to obtain optimized transpiled circuits via \texttt{Qiskit} or \texttt{TKET} diverges with increasing problem size. 
Our results are in accordance with recent works conducted on LNN topology \cite{MontanezBarrera2025} where the authors arrive at a similar conclusion. Notably, in contrast to Ref.~\cite{MontanezBarrera2025}, here we apply no additional optimizations on top of our exact solutions. Adding such optimizations, we expect to gain even more efficiency. 

\begin{figure*}
    \centering
    \includegraphics[width=\textwidth]{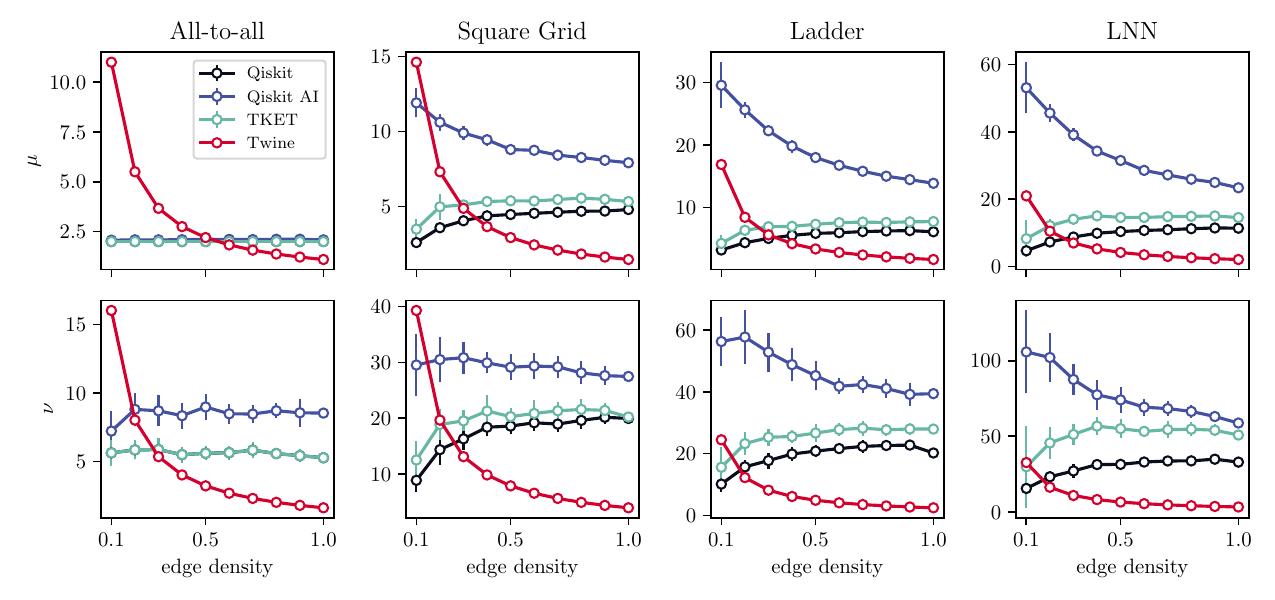}
\caption{Average two-qubit gate count $\mu$ (upper row) and normalized depth $\nu$ (lower row) of transpiled random graphs as a function of the graph edge density. Each data point shows the average of the transpilation of $20$ random graphs each with $20$ nodes. We compare our results (red) with three different transpilation tools using the highest optimization level: \texttt{Qiskit} (black), \texttt{Qiskit AI} (blue) and \texttt{TKET} (green). Error bars denote the standard deviation. The exact connectivity-adapted \TWINE{} solution is shown in red. Note that we pre-transpiled circuits with the standard transpiler of \texttt{Qiskit} with lowest optimization level before transpiling them with \texttt{Qiskit AI} to enforce routing and basis gate decomposition.}
\label{fig:transpilers_comparison}
\end{figure*}
\begin{figure}
    \centering
    \includegraphics[width=0.5\textwidth]{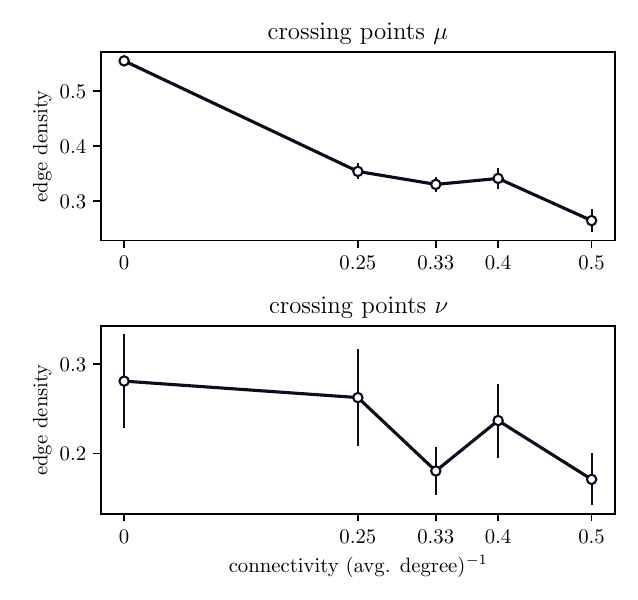}
\caption{Threshold edge density for which Parity Twine outperforms the best transpiler solution (Fig. \ref{fig:transpilers_comparison}) as a function of the device connectivity. We additionally display heavy-hexagonal connectivity graphs with a connectivity of $0.4$. Data for transpilation on heavy-hexagonal layouts is obtained in the same manner as described in Fig. \ref{fig:transpilers_comparison}.}
\label{fig:crossing}
\end{figure}

}
\subsection{Quantum Fourier Transform \label{sec:QFT}}

\begin{figure*}
    \centering
    \includegraphics[width=\textwidth]{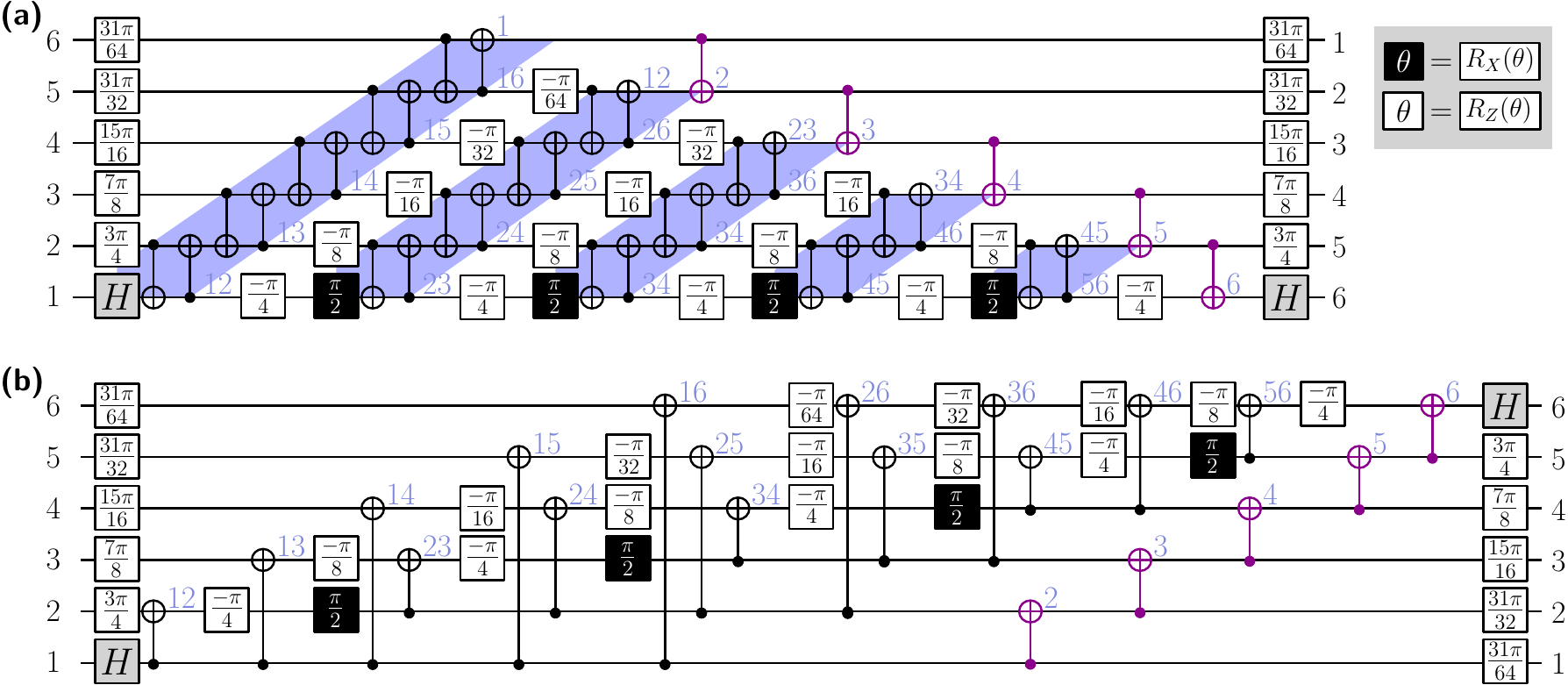}
\caption{QFT circuit implementation for LNN devices (a) and all-to-all connected devices (b). The blue numbers indicate the parity labels at the given circuit moments. Colored \CXNAME{} gates belong to a final cleanup step. Rectangular black boxes indicate $x$-rotation gates with the denoted angle.}
    \label{fig:QFT}
\end{figure*}

The QFT constitutes a major cornerstone algorithm in quantum computing and lays the foundation to a multitude of important algorithms and use cases such as modular exponentiation and phase estimation \cite{Shor1994, Kitaev1999} used in Shor's factoring algorithm, quantum arithmetic \cite{Draper2000, Ruiz-Perez2017} or solving linear systems of equations \cite{Harrow2009}. In recent years, a multitude of algorithms was developed to implement QFT on different connectivity graphs optimizing either depth \cite{Maslov2007, Zhang2021, Jin2023, Gao2024} or gate count \cite{Holmes2020, Park2023, Baumer2024}. Thereby, state-of art approaches typically use architecture optimized \SWNAME{} networks. Here we demonstrate how to apply the conclusion drawn in Secs.~\ref{sec:all-to-all} and \ref{sec:2d_architectures} for the design of gate count and depth optimized implementations of the QFT adapted to different connectivity graphs. 

Recently, some of the authors demonstrated in Ref.~\cite{Klaver2024} how parity label tracking can yield reductions in gate count and depth when implementing the QFT on LNN devices. Up to single-qubit $z$ rotations, the QFT circuit of Ref.~\cite{Klaver2024} can (mainly) be understood as a combination of $\XSC$ circuits neatly intertwined with single-qubit $x$ rotation gates. Fig.~\ref{fig:QFT}(a) depicts the corresponding QFT circuit. In between subsequent $\XSC$ circuits we add the necessary single-body $z$ rotations and Hadamard gates, which we decompose into $z$ and $x$ rotation gates. 
{\color{black}
Unlike for circuits which solely decompose into \CXNAME{} and $z$ rotation gates, the QFT requires a strict order in which labels are created and corresponding rotation gates are applied. In particular, the textbook QFT circuit requires Hadamard gates applied to individual qubits \textit{before} application of corresponding controlled-phase gates. The QFT implementation of Fig.~\ref{fig:QFT}(a) works because of another neat property of $\XSN$ chains which becomes evident if one additionally tracks the $x$-labels: while $\XSC$ chains encode the initial bottom $z$-label into all other labels, the $x$-labels remain unpaired for the most part. Only the top qubit accumulates a string of the traversed $x$-labels (see Fig.~\ref{fig:z_and_x_labels}). Therefore, after each $\XSC$ chain, we can still apply $x$-rotation gates that also logically correspond to single-qubit rotations. This property holds for all connectivity-adapted variations of $\XSC$ chains.

\begin{figure}
    \centering
    \includegraphics[width=0.65\linewidth]{./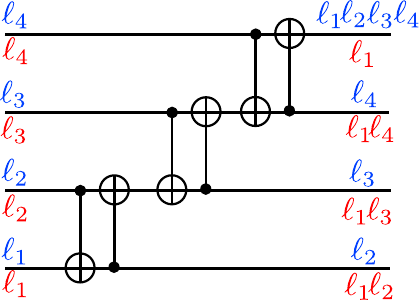}
    \caption{Transformation of $z$ labels (red) and $x$ labels (blue) under a $\XSC$ circuit.}
    \label{fig:z_and_x_labels}
\end{figure}
}
By this reason, we can replace the respective building blocks and readily adapt the approach of Fig.~\ref{fig:QFT} to qubit layouts with other connectivity graphs. In essence, this amounts to replacing each $\XSC$ circuit with its corresponding counterpart on the connectivity graph of interest and inserting corresponding single-body $z$ rotation gates. For instance, using the mapping established in Sec.~\ref{sec:all-to-all}, we can map the circuit of Fig.~\ref{fig:QFT}(a) to a corresponding circuit optimized for all-to-all connected devices shown in Fig.~\ref{fig:QFT}(b). By virtue of mapping $\XSC$ to $\XSC_{\alltoall}$, we can directly conclude that the \CXNAME{} count and depth of the algorithm halves. Interestingly, the structure of the QFT implementation on all-to-all connected devices readily allows to construct an approximate QFT \cite{Takahashi2007, Maslov2007} just from omitting CNOT gates (and corresponding rotations with angles below the approximation threshold). 

Analogously to the procedure outlined above for all-to-all connected devices, once a $\XSC$ circuit together with a decoding circuit is available for a given connectivity graph, implementing a corresponding QFT circuit is straightforward. In Sec.~\ref{sec:2d_architectures} we discuss in detail how to find $\XSC$ circuits using HGPs. The obtained gate count and depth obtained for $\XSN$ circuits can then directly transferred also to the corresponding QFT implementation (up to single-qubit gates). 
The results for our implementations of QFT on the different connectivity graphs are summarized in Tab.~\ref{tab:qft}. Notably, our implementations significantly improve the gate count while the depth is at most the same as that of the best known state-of-the-art implementations.

{\color{black}
\subsection{Hamiltonian simulation \label{sec:hamiltonian_simulation}}
Hamiltonian simulation of complex quantum many-body systems constitutes yet another promising use-case of present day quantum computers. To date, there exist various different methods to simulate the time-evolution under a given Hamiltonian on a quantum computer. Techniques like linear combination of unitaries (LCU) \cite{Childs2012} or block encoding \cite{Low2017} often require highly non-trivial state preparation circuits but have the advantage of error-free simulation. In contrast, methods based on Trotter-Suzuki decomposition \cite{Suzuki1976} do not require complex state preparation circuits, however Trotterization introduces errors and to keep total simulation errors below given tolerance an increased number of Trotter steps is required. Very recently combinations of LCU and Trotter decomposition were proposed to efficiently reduce these errors \cite{Zeng2025}.

Our method of implementing $k$-body generators naturally fits approaches using Trotter decomposition. To outline its effectiveness, here we discuss the implementation of a long-range mixed field Ising model
\begin{eqnarray}
\label{eq:mfim}
    H_{\mathrm{MFIM}} = \sum_{j<k} J_{jk} Z_j Z_k + \sum_j h_j Z_j + g_j X_j.
\end{eqnarray}
Given an operator $O$ of the form $O = t \sum_i A_i$, the second order Suzuki formula \cite{suzukiGeneralTheoryFractal1991a} is given by
\begin{equation*}
    e^{O} =
    e^{\frac{t}{2}A_1}\, \cdots e^{\frac{t}{2}A_{q-1}}\,
    e^{t A_q}\,
    e^{\frac{t}{2}A_{q-1}} \cdots  e^{\frac{t}{2}A_1} + \mathcal{O}(t^3).
\end{equation*}
This allows us to approximate the unitary of Eq.~\eqref{eq:mfim} with
\begin{multline}
\label{eq:mfim_trotter}
    \exp(-i\! \tau\! H_{\mathrm{MFIM}}) = \exp(\!-i\!\frac{\tau}{2}\!\sum_j \!g_j X_j\!)\exp(-i \!\tau\! \sum_j \!h_j Z_j\!)
      \\
 \times \exp(-i\tau\sum_{j<k} J_{jk}Z_jZ_k) \exp(\!-i\!\frac{\tau}{2}\!\sum_j \!g_j X_j\!)
 + \mathcal{O}(\tau^3).
\end{multline}

Tighter bounds on the error of this approximation can be found in \cite{childsTheoryTrotterError2021}.
To encode the unitary of Eq.~\eqref{eq:mfim_trotter} on a quantum device with limited connectivity, we utilize the two-body generators developed in Sec.~\ref{sec:LNN_kbody}. The single-body operators can readily be implemented upfront, at the end, respectively, while the two-qubit interaction terms can be implemented using the two-body generators of Sec.~\ref{sec:LNN_kbody}. Thus, the total two-qubit gate count and depth to implement the unitary of Eq.~\eqref{eq:mfim_trotter} coincides with the count and depth of the corresponding $\XSN$ circuit. Consequently, compared to implementations based on \SWNAME{} networks we save the same number of two-qubit gates as reported in Sec.~\ref{sec:LNN_kbody}.

\section{Performance on noisy quantum hardware \label{sec:performance_on_noisy_hardware}}
\subsection{Comparison of circuit fidelity \label{sec:circuit_fidelities}}
On present-day noisy quantum hardware circuit depth and gate count are similarly detrimental for the execution quality of a given quantum algorithm. Every application of a quantum gate introduces a potential source of errors, while the finite lifetime of qubits prohibits large depth circuits. Assuming only Markovian and uncorrelated noise, these two sources of error can be modeled using an idling fidelity, $F_{\mathrm{idle}}$, and a fidelity describing the expected rate of success when applying a two-qubit gate, $F_{2q}$. Note that in principle this also applies to single-qubit gates. However, since their contribution to the overall error-budget is small and cancels out for the subsequent comparison, we don't include them in this analysis.

Then, to every qubit and for every moment of the circuit we either assign $\sqrt{F_{2q}}$ in case a two-qubit gate is applied, or $F_{\mathrm{idle}}$ if the qubit is idle. With this approach we expect $F_{2q}< F_{\mathrm{idle}}^2$ since $F_{2q}$ also comprises errors emerging from the decay and dephasing of qubits.

The total expected fidelity, i.e.~the performance of a circuit $C$ is thus given by
\begin{eqnarray}
\label{eq:performance}
    \mathcal{F}_C = F_{2q}^{\size{C}}F_{\mathrm{idle}}^{n\cdot\depth{C}-2\cdot \size{C}} .
\end{eqnarray}

Equation~\eqref{eq:performance} allows to directly compare the expected performance of our circuit implementations with reference state-of-the-art implementations. Since our methods are better in count and/or at most equivalent in depth for all the discussed applications, this readily implies an improved performance. 

Many device designs embed sub-graphs that are isomorphic to other relevant layouts. For example a square grid embeds ladder or LNN designs. On such devices, which naturally fit different Parity Twine and other reference implementations, it is interesting to investigate and compare their relative performance. To this end, in App.~\ref{app:circuit_fidelities} we compute the respective performance according to Eq.~\eqref{eq:performance} for individual building blocks. Note that for the subsequent discussion, we explicitly exclude all-to-all connected devices and focus on locally connected qubit layouts. In Fig.~\ref{fig:best_performing} we display the best performing design as a function of idling and two-qubit gate errors. Throughout the physical regime, we find that implementations developed in this work outperform all existing implementations, in particular, also the implementation tailored to LNN qubit layout. Close to the unphysical error regime (gray hashed out part), circuit depth plays a detrimental role for the overall fidelity. In this regime the ladder design is favorable as it provides the most dense implementation. With decreasing ratio $F_{2q}/F_{\mathrm{idle}}$ two-qubit gate count becomes more important, and, hence, eventually the square grid design emerges on top. 

Treating decay and decoherence as quantum channels, the idling fidelity can be estimated by \cite{Economou2022}
\begin{eqnarray}
\label{eq:idle_fidelity}
    F_{\mathrm{idle}} = \frac{1}2 + \frac{1}{6} \left(2 \exp(-T_g/T_2)+\exp(-T_g/T_1)\right),
\end{eqnarray}
where $T_1$ represents the excited state decay time, $T_2$ indicates dephasing and $T_g$ denotes the gate execution time. From publicly available data of $T_1$, $T_2$ and $T_g$ we can estimate the idling fidelity of current quantum hardware. For reference, in Fig.~\ref{fig:best_performing} we include values of five recent IBM machines. 

\begin{figure}
    \centering
    \includegraphics[width=\linewidth]{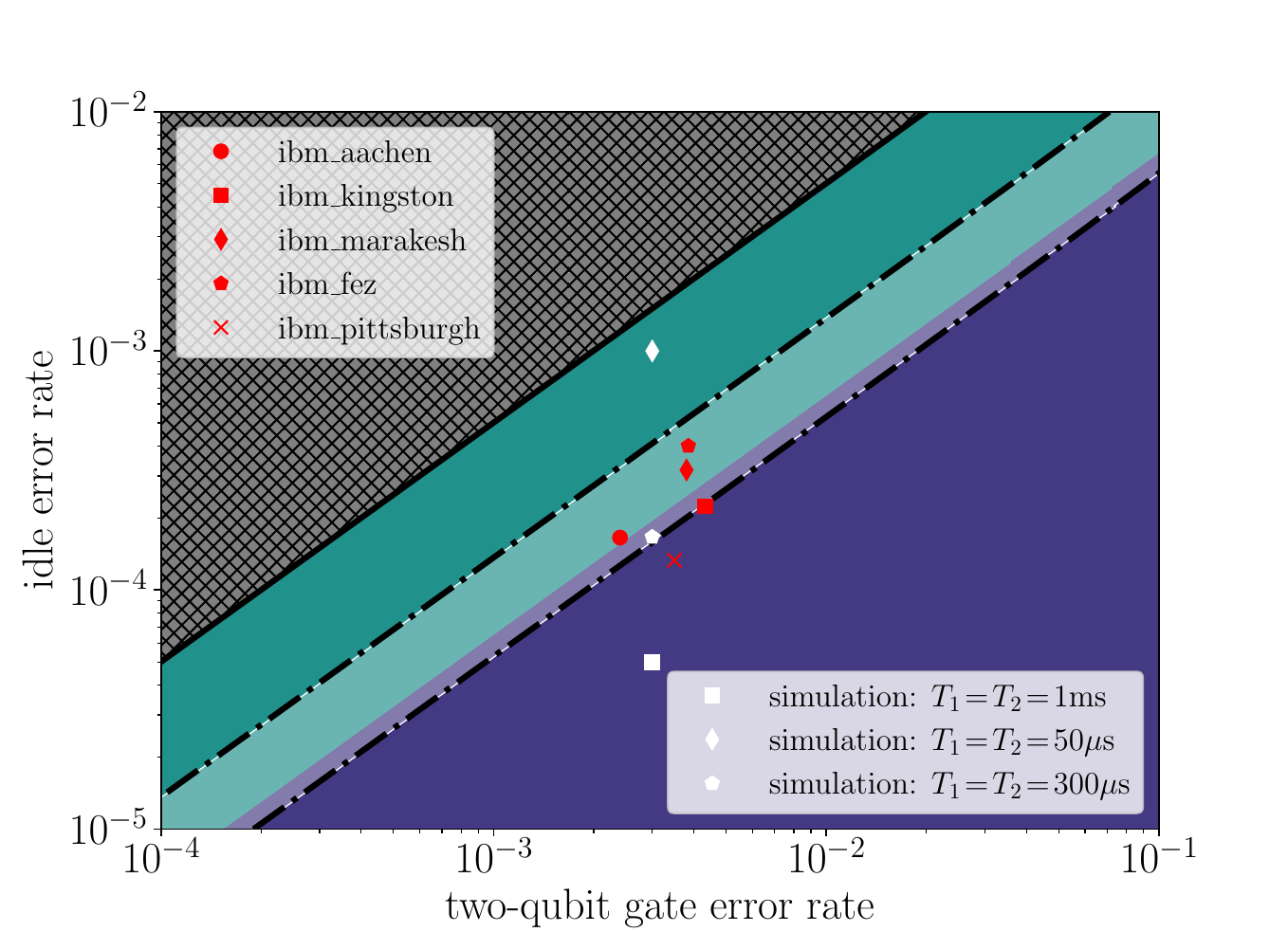}
    \caption{Best performing device design as a function of two-qubit error rate (CNOT error rate) and idling error rate. Green indicates ladder while blue indicates square grid layouts. The hashed out area corresponds to the unphysical region where $F_{2q} > F_{\mathrm{idle}}^2$ (the thick black solid line indicates $F_{2q} = F_{\mathrm{idle}}^2$). The plot displays the transition for $n=18$. In general, the transition depends on $n$. We indicate the transition region by lighter coloring bounded by the black dashed-dotted lines. As $n\rightarrow \infty$ square grid layouts outperform ladder layouts for $F_{2q}>F_{\mathrm{idle}}^{19}$ (lower black dashed-dotted line). Conversely, for the smallest meaningful value, $n=2$, we find the transition at $F_{2q}>F_{\mathrm{idle}}^{\frac{22}{3}}$ (upper black dashed-dotted line). For reference, we include error values of five IBM machines (red) as well as the error values used for some of the simulations (white).}
    \label{fig:best_performing}
\end{figure}

\subsection{Simulations}
To complement the theoretical analysis outlined in the latter section, we now demonstrate the performance by means of concrete examples. To this end, we simulate the execution of our QFT implementation on noisy quantum hardware with different qubit connectivity graphs and directly compare the obtained fidelity to state-of-the-art implementations.

We follow a similar procedure as outlined in Ref.~\cite{Baumer2024} and generate initial states representing $\sigma_k^* = \mathrm{QFT}^{\dagger}\vert k \rangle\langle k \vert \mathrm{QFT}$ with computational basis states $\vert k \rangle $ where $k$ represents a bitstring of length $n$. This can be done by using only local rotations (of which the $z$ rotations can be implemented noiseless). Subsequently, we apply the noisy QFT circuit implementations, $\tilde{\mathcal{QFT}}$, combined with $z$-basis measurements. By extracting the counts of the correct bitstrings which refer to the initial computational basis state, we can deduce the performance of the QFT implementation. The results are combined in the process fidelity~\cite{Baumer2024}

\begin{align}\label{eq:process_fidelity}
    \mathcal{F}_{\mathrm{proc}} &= \frac{m}{m-1} \left[\frac{1}{m}\sum_{l=1}^m\sqrt{ \text{Pr}\left(k_l|  \tilde{\mathcal{QFT}}(\sigma_{k_l}^\ast)\right)}\right]^2\\ \notag 
    &\quad - \frac{1}{m(m-1)}\sum_{l=1}^m \text{Pr}\left(k_l|  \tilde{\mathcal{QFT}}(\sigma_{k_l}^\ast)\right),
\end{align}
where $m$ is the number of initial basis states and $ \text{Pr}\left(k_l|  \tilde{\mathcal{QFT}}(\sigma_{k_l}^\ast)\right)$ the probability to measure the correct bitstring $k_l$ given $\tilde{\mathcal{QFT}}$ of the state $\sigma_{k_l}^\ast$ after the state preparation.

We conduct simulations using the \texttt{Qiskit} \texttt{AerSimulator} \cite{javadiabhari2024}. Thereby, noise is modeled stochastically using a Monte Carlo sampling of different (noisy) channels such as thermal relaxation and dephasing (controlled by $T_1$ and $T_2$) as well as depolarization errors when applying gates. 

First, to validate our theoretical analysis employed in the latter section, we perform a series of simulations with varying values of $T_1$, $T_2$ and two-qubit gate error rates and compare the obtained process fidelities. Similar to Fig.~\ref{fig:best_performing}, in Fig.~\ref{fig:2d_scan} we depict the process fidelity of the best performing implementation as a function of idle and two-qubit gate errors for simulations with $n=12$ qubits. In accordance to the theoretical analysis, the simulations suggest a transition between the ladder implementation, which dominates for small $T_1$ and $T_2$, and square grid, which outperforms ladder as two-qubit gate count becomes more detrimental for the overall circuit fidelity. Notably, the transition occurs approximately where the theoretical analysis (indicated by the solid white line) predicts it.

\begin{figure}
    \centering
    \includegraphics[width=\linewidth]{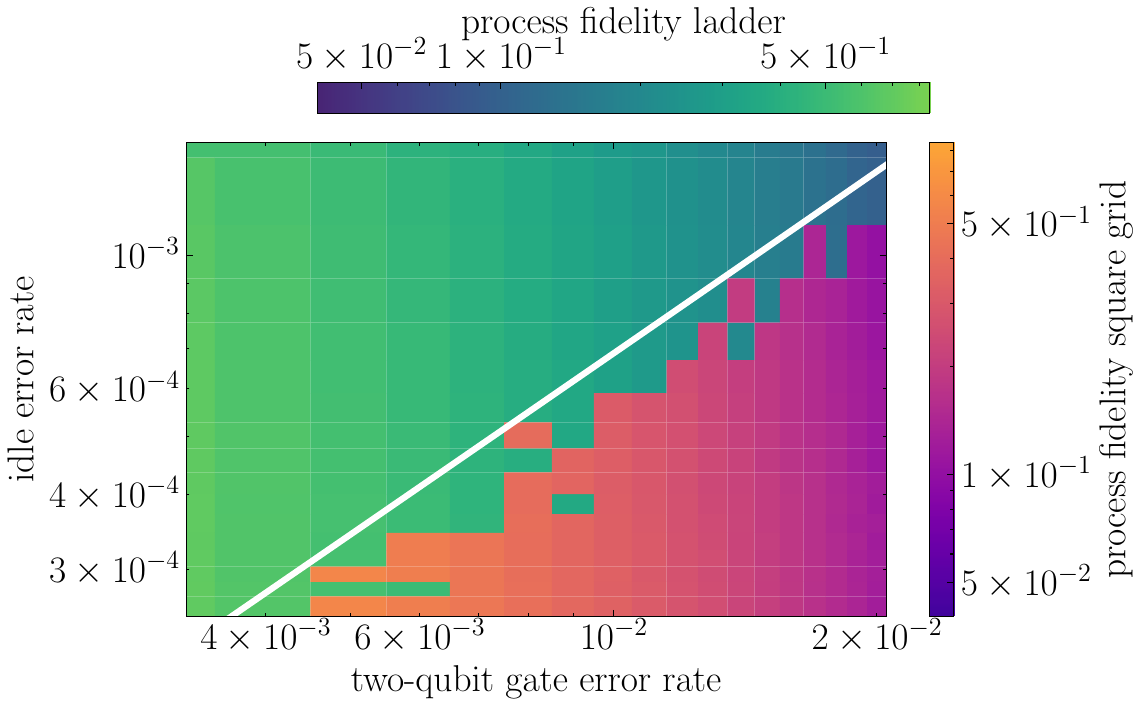}
    \caption{Simulations of the square grid QFT and the ladder QFT implementation. We depict the process fidelity of the best performing implementation as a function of idle and two-qubit gate errors for $n=12$ qubits. The white line indicates the expected transition between ladder and square grid given by $F_{2q} = F_{\mathrm{idle}}^{\frac{117}{8}}$ (see also App.~\ref{app:circuit_fidelities}). We assume a ratio of 10 for the gate execution times of two-qubit gates and single-qubit gates. In particular, we choose $T_{2q} = 100\mathrm{ns}$ for two-qubit gates and $T_{1q} = 10\mathrm{ns}$ as the execution time of single-qubit gates. For the sake of simplicity, we consider only thermal relaxation channels as the source of noise for single-qubit gates. The idle error rates are computed according to $\epsilon_{\mathrm{idle}} = 1 - F_{\mathrm{idle}}$ using Eq.~\ref{eq:idle_fidelity} with varying $T_1$ (we set $T_1=T_2$). Readout errors are neglected. Each data point is obtained using $m=20$ random initial computational basis states with $2000$ shots each.}
    \label{fig:2d_scan}
\end{figure}

Next, in Fig.~\ref{fig:twine_simulation}, we depict simulation results for the process fidelity for the QFT circuit implementations on various connectivity graphs as a function of the number of qubits for three different regimes: in Fig.~\ref{fig:twine_simulation}(a) we assume short qubit lifetimes. In this regime, the depth is detrimental which explains why the ladder implementation shows the best performance. Increasing the qubit lifetime to realistic values of contemporary quantum hardware (Fig.~\ref{fig:twine_simulation} (b), compare also Fig.~\ref{fig:best_performing}), we find that ladder and square grid perform similarly, yet significantly better in comparison to other implementations including heavy hexagon or LNN. Finally, for large qubit lifetimes (Fig.~\ref{fig:twine_simulation} (c)) two-qubit gate count becomes the determining factor of performance. In this regime, an increase in device connectivity translates directly into an increase in process fidelity: With increasing device connectivity, a (quadratically) growing amount of two-qubit gates can be saved which in turn induces an exponential boost in overall fidelity. To demonstrate this with the relatively small number of qubits available in simulations, in Fig.~\ref{fig:fidelity_vs_connectivity} we perform simulations with increased two-qubit gate error rates and plot the result as function of the connectivity.

\begin{figure}
    \centering
    \includegraphics[width=1.0\linewidth]{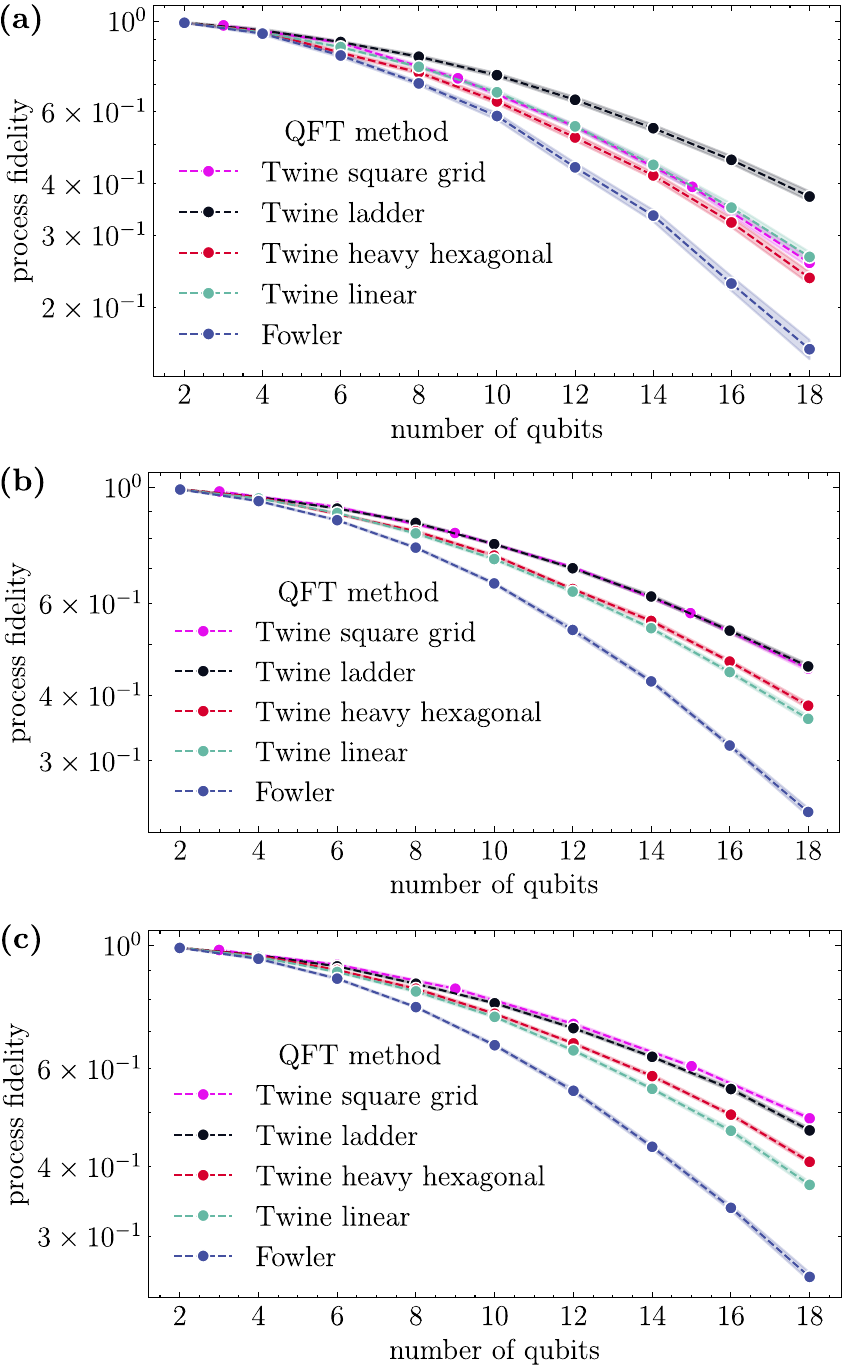}
    \caption{Simulations of different QFT implementations for various qubit connectivity using the \texttt{Qiskit} \texttt{AerSimulator}. We assume the same gate execution times and single-qubit gate errors as given in Fig.~\ref{fig:2d_scan}. The Fowler implementation follows Ref.~\cite{Fowler2004}. 
    In all three plots we assume a two-qubit gate error of $\epsilon_{2q}=3\times 10^{-3}$. Readout errors are neglected.
    (a) Short qubit lifetimes: $T_1 = T_2 = 50\mathrm{\mu s}$. (b) Realistic qubit lifetimes similar to those reported for the latest IBM devices:  $T_1 = T_2 = 300\mathrm{\mu s}$. (c) Long qubit lifetimes: $T_1 = T_2 = 1000\mathrm{\mu s}$.
    The respective shaded error bars indicate the 95\% confidence interval estimated via bootstrapping the data for $m=20$ random initial computational basis states with 2000 shots each. The error values of (a)-(c) are also indicated in Fig.~\ref{fig:best_performing}.}
    \label{fig:twine_simulation}
\end{figure}

\begin{figure}
    \centering
    \includegraphics[width=1.0\linewidth]{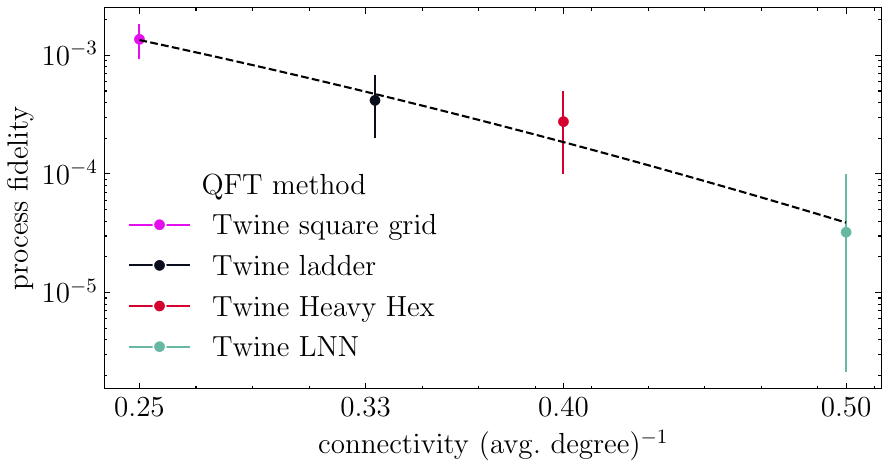}
    \caption{Simulations of different QFT implementations for various qubit connectivity using the \texttt{Qiskit} \texttt{AerSimulator} for $n=18$ qubits. We use the same parameters as in Fig.~\ref{fig:twine_simulation} (c) except for the two-qubit gate error rate for which we use $\epsilon_{2q}=3\times 10^{-2}$ and $T_1=T_2=2\mathrm{ms}$. The dashed black line indicates a fit with $a \times \exp(bx + c x^2)$ with $a =0.014$, $b=-6.92$ and $c =-9.68$. Error bars indicate the 95\% confidence interval.}
    \label{fig:fidelity_vs_connectivity}
\end{figure}

\subsection{Experiments}

\begin{figure*}[t]
  \centering
  \includegraphics[width=\textwidth]{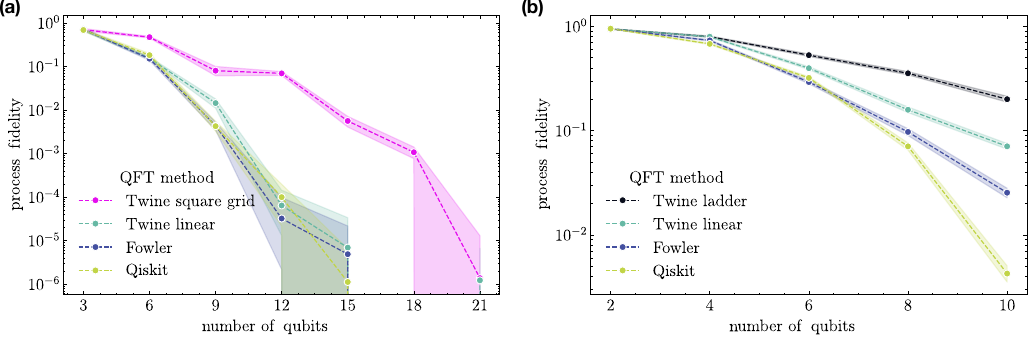}
  \caption{QFT experiments on superconducting square grid QPUs, comparing the process fidelity, as defined in Eq.~\eqref{eq:process_fidelity}, of various Twine methods against the reference implementations Fowler~\cite{Fowler2004} and a connectivity unaware QFT formulation, transpiled to the backend via \texttt{Qiskit}. We use $m=20$ random initial states where for each we measure 2000 shots. The shaded error bars are the percentile interval of the statistics gained from bootstrapping the $m=20$ data points, indicating the area where 95 \% of the data falls. (a) Experiments on the IQM Emerald including square grid and LNN QFT implementations. (b) Experiments on the IQM Garnet, including ladder and LNN QFT implementations as well as alternative QFT implementations. Different implementations in (a) and (b) all use the same set of qubits.}
  \label{fig:experiments}
\end{figure*}

While simulations allow to investigate different parameter regimes and gain some intuition, they are limited in scope and based on assumptions that might potentially influence the results. This especially concerns the employed noise models which rarely capture a comprehensive picture of all noise sources relevant for modern day quantum devices. Thus, it is important to validate theoretical and numerical results with actual experimental runs on real quantum hardware. 

In contrast to the theoretical analysis and the numerical simulations of the latter section, on real quantum hardware, qubit lifetimes and gate fidelities vary among different qubits even on a single chip. A valid comparison between different implementations of the same algorithm can thus be conducted in two ways: (i) performing a statistical analysis of fidelity distributions obtained from multiple runs on varying sets of qubits or (ii) eliminating the effects of fidelity fluctuations by using the exact same set of qubits when comparing two implementations. Clearly option (i) consumes lots of resources, while option (ii) is not possible for some device topologies. For example, on heavy hexagon devices a direct comparison of an LNN implementation with a heavy-hex-tailored implementation is not possible via (ii) simply because we cannot use the same set of qubits.

Instead, on square grids we can use the exact same set of qubits to directly compare different variations of our circuit implementations since there exist sub-graphs isomorphic to ladder and LNN layouts. To this end, we perform experimental runs of various QFT implementations on different IQM square grid devices. Fig.~\ref{fig:experiments} displays the obtained process fidelities. Note again that a direct comparison between the \TWINE{} square grid implementation and the \TWINE{} ladder implementation is not possible via (ii) as they necessarily use different qubits. However, square grid and ladder implementations can separately be compared against linear and alternative reference implementations: Fowler~\cite{Fowler2004}, based on \SWNAME{} gates, and a \texttt{Qiskit} compiled QFT implementation.

While the process fidelity of most QFT methods drops to indistinguishably small values in the data shown in Fig.~\ref{fig:experiments}(a) at a system size of around 12 qubits, the \TWINE{} square grid QFT still shows performance of several orders of magnitudes higher. Concretely, at a system size of $n=18$ our implementation still provides a process fidelity of $\sim 10^{-3}$, where all other methods didn't sample the correct bitstring at least once. Surprisingly, the LNN QFT implementation does not show a significant performance advantage over the reference implementations (in contrast to the simulations and theoretical expectations). This might be explained by single low-fidelity qubits that are addressed more often in the LNN implementation. Instead, on the IQM Garnet (Fig.~\ref{fig:experiments}(b)), we find a more versatile picture. Here the results resemble more the insights and expectations gained from our theoretical and numerical analysis: while the LNN implementation improves over the reference implementations, the ladder based implementation significantly outperforms all other implementations.

}

\section{Conclusion \label{sec:conclusion}}
Efficient quantum algorithms constitute a cornerstone for the development of quantum computing. In this work, we have contributed to this development by establishing generic algorithmic tools for the efficient implementation of logical many-body operators on quantum devices with typical contemporary qubit connectivity graphs. Resulting algorithms are highly optimized both in gate count and circuit depth and significantly outperform competing approaches. 

Using the developed tools we have constructed generator circuits that are surprisingly close to the provable lower bound for the gate count or -- in the case of all-to-all connected devices -- even align with it. We have derived a generic framework for the construction of generator \CXNAME{} circuits of parity label sets with $k$-body labels adaptable to a wide range of connectivity graphs and we have analyzed the role of the device connectivity. Thereby, we have found that even moderate increments in connectivity can yield significant efficiency improvements.
This indicates that full qubit connectivity is not necessarily required to obtain reasonably efficient implementations of quantum algorithms.

We have investigated five different qubit layouts in depth: (1) LNN devices with nearest neighbor connectivity, planar connectivity graphs among which are (2) heavy hexagon devices (3) ladder devices (4) square grid devices and (5) all-to-all connected devices with complete connectivity graphs. The average gate count of the derived generator circuits interpolates from LNN (${\mu =2}$) over heavy hexagon  (${\mu=5/3}$), ladder (${\mu=3/2}$) and square grids ($\mu=4/3$) to all-to-all with an optimal asymptotic average gate count of ${\mu=1}$. The developed formalism can be understood as a means to map the circuits between graphs of different connectivity. This lets us conjecture that the class of optimal circuits derived within our formalism extends beyond complete graphs. As a first step in this direction we have proven that on LNN devices ${\mu>1}$ for generator circuits of label sets with odd-body labels. More precisely, we have shown ${\mu \geq 1 + \frac{1}{9}}$ in this case but we have justified belief that this bound is not optimal. 
{\color{black}
We have investigated different applications of our formalism: (i) quantum optimization algorithms, specifically the QAOA applied to QUBO and HUBO problems, (ii) the QFT and its approximation on all-to-all connected devices and (iii) Hamiltonian simulation based on Trotter-Suzuki decomposition. On all investigated connectivity graphs, our approaches significantly outperform existing algorithms in gate count. Simultaneously, the depth of our algorithms beats the best known approaches on the majority of platforms, where at most we are on par with existing algorithms. Notably, our exact solutions, which are tailored to encode all-to-all connected problems, are so efficient that they even outperform the best state-of-the art compilation tools on sparse problems.

Finally, to further showcase the practical performance advantage of our approach, we have conducted a theoretical analysis for noisy devices and performed a series of numerical simulations and experiments on quantum hardware for the QFT. The results align with our theoretical analysis, demonstrating a clear performance advantage in the parameter regimes relevant to both current and future quantum computers.
}
\section*{Acknowledgments}
This study was supported by the Austrian Research Promotion Agency (FFG Project
No. FO999924030, FFG Basisprogramm) as well as funded in part by the Austrian Research Promotion Agency (FFG Project No. 884444, QFTE 2020), NextGenerationEU via FFG and Quantum Austria (FFG
Project No. FO999896208) and the Horizon Europe programme HORIZON-
CL4-2022-QUANTUM-02-SGA via the project 101113690 (PASQuanS2.1). Moreover, this publication has received funding by the Austrian Science Fund (FWF) SFB BeyondC Project No. F7108-N38, as well as funding within the QuantERA II Programme that is supported by the European Union’s Horizon 2020 research and innovation programme under Grant Agreement No. 101017733. For the purpose of open access, the authors have applied a CC BY public copyright license to any Author Accepted Manuscript version arising from this submission. Furthermore, funding is acknowledged by the German Federal Ministry of Research, Technology and Space within ATIQ (Project No. 13N16127),  MuniQC-Atoms (Project No. 13N16080) an MuniQC-SC (Project No. 13N1618). {\color{black}We acknowledge the use of IQM services for this work. The views expressed are those of the authors and do not reflect the official policy or position of IQM.}

\appendix
\onecolumngrid

\section{Notation and auxiliary results}
\label{sec:notation_auxiliary_results}
This section presents the notations and auxiliary results that are used throughout the paper.
\begin{definition}[\CXNAME{} gate]
    \label{def:cnot_gate}
    Suppose that ${n\geq 2}$ and ${i,j\in\set{1,\ldots,n}}$ with ${i\neq j}$. We denote by ${\CNOT_{i,j}\colon (\C^2)^{\otimes n}\to(\C^2)^{\otimes n}}$ the unitary operator which is defined by ${\CNOT_{i,j}(\ket{a_1\cdots a_n})\coloneqq \ket{b_1(a)\cdots b_n(a)}}$
    where
    \begin{equation*}
        b_k(a)\coloneqq\begin{cases}
            a_k,\quad & k\neq j,\\
            a_i\oplus a_j,\quad & k=j. 
        \end{cases}
    \end{equation*}
    for ${a_1,\ldots,a_n\in\set{0,1}}$. We call the operator \emph{\CXNAME{} gate with control qubit $i$ and target qubit $j$}.
\end{definition}
\begin{definition}[Circuit]
    \label{def:quantum_circuit}
    A (quantum) circuit $C$ is a finite sequence which maps every time step - \emph{moments} - enumerated from $1$ to $d$ to a set of non-overlapping gates. For a non-empty sequence, the number of elements of each set can be zero except for the set in the first and last moment. The \emph{depth} of the circuit is $d$ and denoted by $\depth{C}$, and the \emph{gate count} is the number of gates contained in the circuit which we denote by $\size{C}$. For a set of non-overlapping gates $C_1$, we also write for brevity $C_1$ for the circuit $(C_1)$ which contains only the set $C_1$. Note that we also call a circuit $C$ a circuit for \emph{complete graphs}. In general, if every gate in a circuit $C$ can be implemented on a device with a certain connectivity graph, then we call the circuit $C$ a circuit for that connectivity graph.
\end{definition}

In this article, we propose \emph{\CXNAME{} circuits} which are circuits where all gates are \CXNAME{} gates.
To emphasize that a circuit $C$ acts on $n$ qubits we write $C^{(n)}$. If the circuits only acts on the qubits $p,\ldots,q$, we write $C^{(p,q)}$. Furthermore, we also use $\CNOT_{m,l}$ to denote the circuit $(\set{\CNOT_{m,l}})$ which only contains the \CXNAME{} gate $\CNOT_{m,l}$ for $m\neq l$.

Next, we introduce the term \emph{reversed and adjoint circuit} as well as the \emph{concatenation} of two circuits. These operations play a crucial role in reducing the depth of $k$-body generators, introduced below.
\begin{definition}[Reversed \CXNAME{} circuit]
    \label{def:reversed_circuit}
    Let $C$ be a \CXNAME{} circuit with depth $d$. Then, the \emph{reversed circuit $\overline{C}$} of $C$ is defined as the circuit with depth $d$ where each moment $m$ contains the reversed \CXNAME{} gates of the \CXNAME{} gates of the circuit $C$ at the moment $m$. Here, the \emph{reversed \CXNAME{} gate} $\overline{\CNOT_{i,j}}$ of $\CNOT_{i,j}$ with $1\leq i,j\leq n$ and $i\neq j$ is defined as $\CNOT_{n+1-i,n+1-j}$. We also write $\overline{C}^{(n)}$ to emphasize that the \CXNAME{} gates are reversed between all qubits $1,\ldots,n$. Moreover, we also denote by $\overline{C}^{(p,q)}$ the circuit which reverses all \CXNAME{} gates of the circuit $C$ between qubits $p$ and $q$. In this case, every \CXNAME{} gate $\CNOT_{i,j}$ for some $p\leq i,j\leq q$ with $i\neq j$ is replaced by $\CNOT_{p+q-i,p+q-j}$.
\end{definition}
\begin{example}
    \label{examp:rev_operator}
    Let $n=7$ and $C$ be the circuit on the left-hand side in Fig.~\ref{fig:examp_perm_operator}. The corresponding reversed circuit $\overline{C}^{(2,6)}$ of $C$ between qubits two and six is shown on the right-hand side in Fig.~\ref{fig:examp_perm_operator}.
    \begin{figure}[h!]
    \centering
    \includegraphics[width=.5\textwidth]{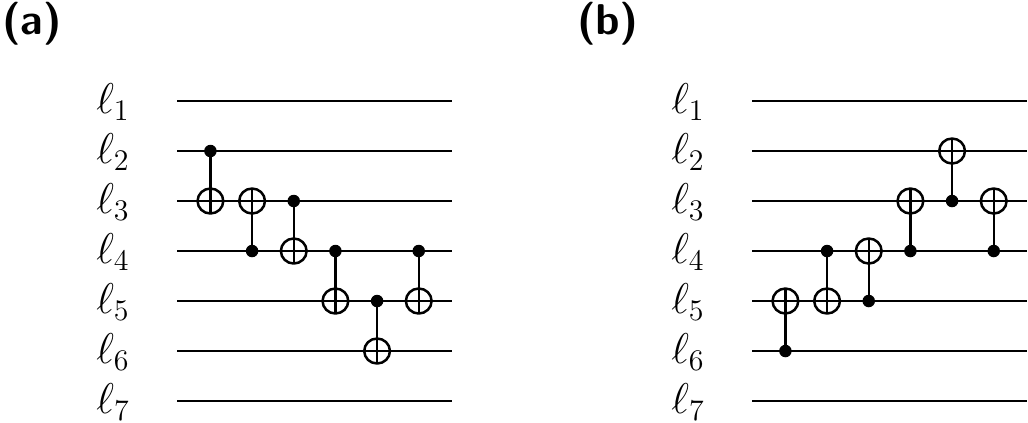}
    \caption{Circuit $C$ (a) and the reversed circuit $\overline{C}^{(2,6)}$ (b) in Example \ref{examp:rev_operator}.}
    \label{fig:examp_perm_operator}
    \end{figure}
\end{example}
\begin{definition}[Adjoint circuit]
\label{def:adjoint_circuit}
    Let ${C=(C_1,\ldots,C_{d})}$ be a circuit where ${C_1,\ldots,C_{d}}$ denote the sets of non-overlapping gates. Then, the \emph{adjoint circuit} of $C$ is denoted by $C^\dagger$ and defined as the sequence ${(C_{d}^\dagger,\ldots,C_1^\dagger)}$ where $C_{l}^\dagger$ denotes the set which contains the adjoint operators of all quantum gates in $C_{l}$ for ${l=1,\ldots,d}$.
\end{definition}
\begin{definition}[Shifted concatenation of circuits]
    \label{def:shifted_concatenation}
    For two circuits $C_1$ and $C_2$ we define the \emph{concatenation}
    $$ C_1 \cat C_2 $$
    as the circuit which first contains all moments from $C_1$ and then all moments from $C_2$. The \emph{shifted concatenation by $s\geq 0$ moments}
    $$ C_1 \cat_s C_2 $$
    is the circuit which contains all moments of $C_1$ as-is and \emph{in addition} all moments of $C_2$ but shifted by $s$ moments to the right. Of course this is only well-defined if the resulting circuit does not contain overlapping gates. Similarly, we set
    $$ C_1 \cat_{-s} C_2 $$
    as the circuit which corresponds to the circuit $C_1 \cat C_2$ but with $C_2$ shifted to the left by $s$ moments afterwards.
\end{definition}
We emphasize that the subtle difference between positive and negative shifts is that positive shifts start at the left of the first circuit whereas negative shifts start at the right to determine the time moment where the first circuit is concatenated with the second one. Moreover, we note that the concatenation operator is not an associative operation. Whenever we write $$\concat[s_1]{C_1}{\concat[s_2]{C_2}{\concat[s_{m-1}]{\cdots}{C_m}}}$$ for some circuits $C_1,\ldots,C_m$ with shifts $s_1,\ldots,s_{m-1}\geq 0$, we mean the circuit $$\concat[s_1]{C_1}{\left(\concat[s_2]{C_2}{\left(\concat[s_{m-2}]{\cdots}{\left(\concat[s_{m-1}]{C_{m-1}}{C_m}\right)}\right)}\right)}.$$

Figs.~\ref{fig:example_merge_all}(a)-(c) show examples of concatenated circuits with six qubits. We observe that the depth of $\concat{C_1}{\concat{C_2}{C_3}}$ [see Fig.~\ref{fig:example_merge_all}(a)] is nine whereas the depth of $\concat{C_1}{(\concat[-2]{C_2}{C_3})}$ [see Fig.~\ref{fig:example_merge_all}(b)] and $\concat[-2]{C_1}{(\concat[-2]{C_2}{C_3})}$ [see Fig.~\ref{fig:example_merge_all}(c)] is seven and five, respectively, and their corresponding unitary operators are equal. This shows that finite sequences of valid sets of controlled-target qubit pairs with different depths can produce the same resulting unitary operators.

\begin{figure}[h!]
    \centering
    \includegraphics[width=.8\textwidth]{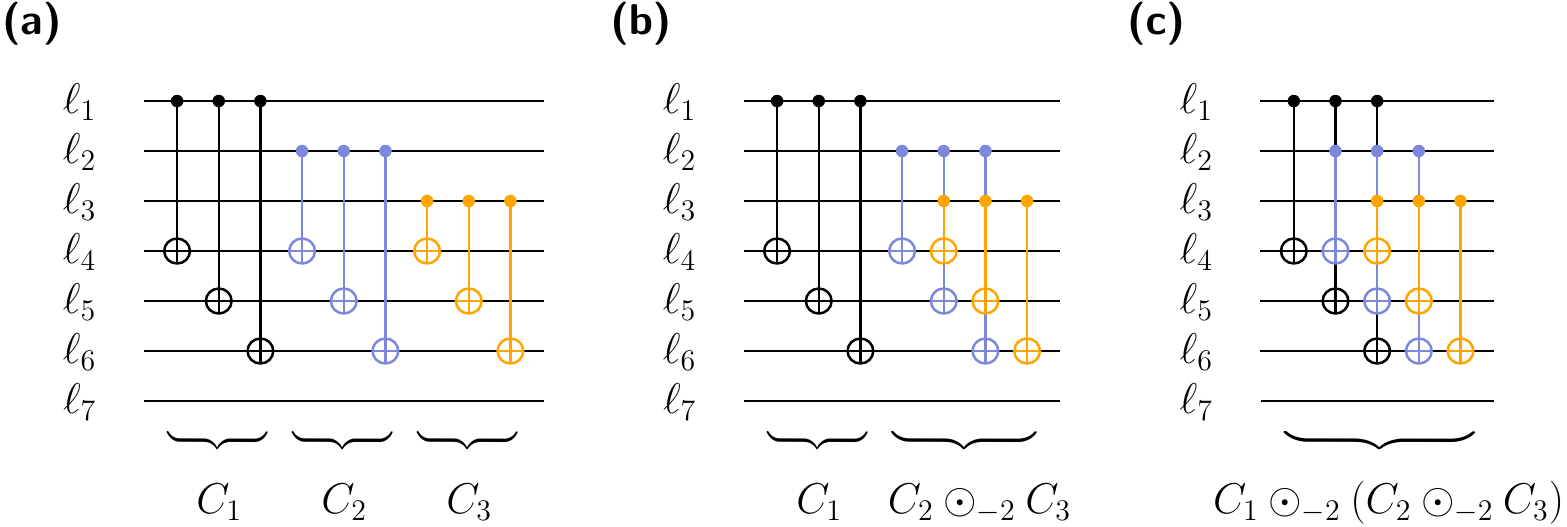}
    \caption{(a) $\concat{C_1}{\concat{C_2}{C_3}}$. (b)  $\concat{C_1}{(\concat[-2]{C_2}{C_3})}$. (c) $\concat[-2]{C_1}{(\concat[-2]{C_2}{C_3})}$}.
    \label{fig:example_merge_all}
\end{figure}
    
To define $k$-body generators, the following definitions are necessary.
\begin{definition}[Parity label]
    \label{def:parity_label}
    Given a set of qubits ${Q=\{1,\ldots,n\}}$ a \emph{(parity) label} $\lab$ is a subset of $Q$. For two labels $\lab_1$ and $\lab_2$ we denote by ${\lab_1\lab_2\coloneqq \lab_1\symdif\lab_2}$ the \emph{symmetric difference} of the two labels (all qubits which are contained in exactly one of the two labels), which is again a label. Moreover, we call the empty label the \emph{trivial} (or \emph{unphysical}) label. Note that the set of all labels over $Q$ equipped with symmetric difference as an addition is a vector space over $\BB$. The zero vector is the trivial label.
\end{definition}

\begin{definition}[Label action]
    \label{def:label_action}
    Let $C$ be a \CXNAME{} circuit on ${n\geq 2}$ qubits with depth ${d}$ and ${\lab_1,\ldots,\lab_n}$ some labels.
    \begin{enumerate}
        \item For $1\leq i,j\leq n$ with $i\neq j$ the \emph{right action of $\CNOT_{i,j}$ on $\labseq{\lab_1,\ldots,\lab_n}$} is defined as
        \begin{equation*}
            \labaction{\labseq{\lab_1,\ldots,\lab_n}}{\CNOT_{i,j}}\coloneqq\labseq{\lab_1,\ldots,\lab_{j-1},\lab_i\lab_j,\lab_{j+1},\ldots,\lab_{n}}.
        \end{equation*}
        \item Let $C_m\coloneqq\set{\CNOT_{i_1,j_1},\ldots,\CNOT_{i_{k_m},j_{k_m}}}$ be the set of all non-overlapping \CXNAME{} gates at the moment $m$ of the circuit $C$. Then, the \emph{right action of $C_m$ on $\labseq{\lab_1,\ldots,\lab_n}$} is defined as
        \begin{equation*}
            \labaction{\labseq{\lab_1,\ldots,\lab_n}}{C_m}\coloneqq\labaction{\left(\labaction{\left(\labaction{\labseq{\lab_1,\ldots,\lab_n}}{\CNOT_{i_1,j_1}}\right)}{\cdots \CNOT_{i_{k_m-1},j_{k_m-1}}}\right)}{\CNOT_{i_{k_m},j_{k_m}}}.
        \end{equation*}
        \item Let $C_1,\ldots,C_{d}$ be the sets of all non-overlapping \CXNAME{} gates at every moment of the circuit $C$. Then, the \emph{right action of the circuit $C$ on $\labseq{\lab_1,\ldots,\lab_n}$} is defined as
        \begin{equation*}
            \labaction{\labseq{\lab_1,\ldots,\lab_n}}{C}\coloneqq\labaction{\left(\labaction{\left(\labaction{\labseq{\lab_1,\ldots,\lab_n}}{C_1}\right)}{\cdots C_{d-1}}\right)}{C_{d}}.
        \end{equation*}
    \end{enumerate}
\end{definition}
\begin{remark}
    Defining $$G_n\coloneqq\set{C\mid C\text{ is a \CXNAME{} circuit with $n$ qubits}}\quad \text{and}\quad M_n\coloneqq\set{\labseq{\lab_1,\ldots,\lab_n}\mid \lab_1,\ldots\lab_n \text{ are labels}},$$ it can be shown that the mapping
        \begin{equation*}
            \cdot\colon M_n\times G_n\to M_n\colon (\lab,C)\mapsto \labaction{\lab}{C}
        \end{equation*}
            fulfills the properties of a right group action of $G_n$ on $M_n$. However, note that assuming the empty sequence is the identity element, the set $G_n$ together with the concatenation $\cat$ does not form a group since there is no inverse element for any non-empty sequence.
\end{remark}

\begin{definition}[Label generator]\label{def:label-generator}
    \label{def:label_generator}
    Consider $n$ qubits and a set $L^{(n)}$ of non-trivial labels over these qubits. We say that a circuit $C^{(n)}$ generates the labels $L^{(n)}$ from a sequence of labels ${\lab=\labseq{\lab_1,\ldots,\lab_{n}}}$ if the following holds: Let $C^{(n)}_{1,m}$ be the subcircuit which only contains the first $m$ moments.
    Collect all labels occurring in each of the $$\lab C^{(n)}_{1,m} \quad  \text{ for } m = 1, \ldots, \depth{C^{(n)}}$$ into a set $L'$. If $L^{(n)}$ is a subset of $L'$, we say that the circuit generates $L^{(n)}$ from $\lab$. Moreover, we say that $C^{(n)}$ generates some label $\hat{\lab}$ from $\lab$ if $\hat{\lab}\in L'$. In addition, we call $C^{(n)}$ \emph{clean} if and only if $C^{(n)}$ generates the labels $L^{(n)}$ from $\lab$ and there exists a permutation $\pi\colon\set{1,\ldots,n}\to \set{1,\ldots,n}$ such that $\labaction{\lab}{C^{(n)}}=\labseq{\lab_{\pi(1)},\ldots,\lab_{\pi(n)}}$.
\end{definition}
This definition immediately leads us to the notion of $k$-body generators.
\begin{definition}[$k$-body generator]
    \label{def:k-body_generator}
    Let $C^{(n)}$ be a circuit with $n$ qubits, $k\geq 2$ and $\lab=\labseq{\lab_1,\ldots,\lab_{n}}$ with some labels $\lab_1,\ldots,\lab_n$. We say that $C^{(n)}$ is a \emph{$k$-body generator} if $C^{(n)}$ generates all $k$-body labels $$\set{\lab_{i_1}\lab_{i_2}\ldots\lab_{i_k}\mid i_1<\ldots<i_{k}\leq n}$$ from $\lab$.
\end{definition}
Next, we introduce a metric for \CXNAME{} circuits measuring the average \CXNAME{} count as well as the normalized depth for generating a certain set of labels.
\begin{definition}[Average \CXNAME{} count, normalized \CXNAME{} depth]
    \label{def:average_cnot_count_norm_depth}
    Let $C^{(n)}$ be a \CXNAME{} circuit and suppose that $C^{(n)}$ generates the label set $L^{(n)}$ on $n$ qubits. Then, we define
    \begin{equation*}
        \mu_n(C^{(n)}, L^{(n)})\coloneqq \frac{\size{C^{(n)}}}{\abs{L^{(n)}}}\quad\text{and}\quad\nu_n(C^{(n)},L^{(n)})\coloneqq\frac{\depth{C^{(n)}} n}{2\abs{ L^{(n)}}}
    \end{equation*}
    as the \emph{average \CXNAME{} count} and \emph{normalized \CXNAME{} depth} of the circuit $C^{(n)}$ generating the label set $L^{(n)}$, respectively. Furthermore, we set
    \begin{equation*}
        \mu(C, L)\coloneqq\liminf_{n\to\infty} \mu_n(C^{(n)}, L^{(n)})\quad\text{and}\quad\nu(C, L)\coloneqq\liminf_{n\to\infty}\nu_n(C^{(n)},L^{(n)})
    \end{equation*}
    as the \emph{asymptotic average \CXNAME{} count} and \emph{asymptotic normalized \CXNAME{} depth} of a family of circuits $C= (C^{(n)})_{n\geq m}$ and a family of label sets $L= (L^{(n)})_{n\geq m}$ starting from some natural number $m\geq 2$, respectively, where $C^{(n)}$ generates the label set $L^{(n)}$ for $n\geq m$.
\end{definition}
\begin{remark}
    \label{rem:k-body_generator} Let $C_k=(C_k^{(n)})_{n\geq k}$ be a family of $k$-body generators and $\mathcal{L}_k=(\mathcal{L}_k^{(n)})_{n\geq k}$ be the family of label sets $\mathcal{L}_k^{(n)}$ which contain all $k$-body labels on $n\geq k$ qubits.
    \begin{enumerate}
        \item \label{rem:mucx-properties-ineq} $\mu(C_k,\mathcal{L}_k)\geq 1$ and $\nu(C_k,\mathcal{L}_k)\geq 1$.
        \item If $\size{C_k^{(n)}}=cn^k+g(n)$ with $\lim_{n\to\infty}\frac{g(n)}{n^k}=0$ and $c\geq 0$, then $c\geq \frac{1}{k!}$.
    \end{enumerate}
\end{remark}
The statements from the above remark follow from the following lemma.
\begin{lemma}[Trivial size bound]\label{lem:trivial-size-bound}
    Let $C^{(n)}$ be a \CXNAME{} circuit which generates a set $L^{(n)}$ of labels not containing the labels of the start sequence. Then the size of $C^{(n)}$ is at least $\abs{L^{(n)}}$ and the depth of $C^{(n)}$ at least $\abs{L^{(n)}}/(n/2)$.
\end{lemma}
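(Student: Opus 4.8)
The plan is to count, gate by gate, how fast the running label sequence can accumulate labels it did not contain before, and to observe that a single \CXNAME{} gate can create at most one new label.

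First I would linearize the circuit into its constituent gates $g_1,\ldots,g_s$ with $s=\size{C^{(n)}}$, ordered consistently with the moments, and set $\eta^{(0)}\coloneqq\lab$ and $\eta^{(t)}\coloneqq\labaction{\eta^{(t-1)}}{g_t}$. Since $\CNOT_{i,j}$ only overwrites the entry at position $j$, each $\eta^{(t)}$ differs from $\eta^{(t-1)}$ in exactly one position. Moreover, because the gates within any moment are non-overlapping, applying them in any order yields the same sequence, so each moment-prefix state $\labaction{\lab}{C^{(n)}_{1,m}}$ occurs among the $\eta^{(t)}$. Hence the set $L'$ of Definition~\ref{def:label-generator} satisfies $L'\subseteq\bigcup_{t=1}^{s}\{\text{entries of }\eta^{(t)}\}$, and in particular $L^{(n)}\subseteq\bigcup_{t=1}^{s}\{\text{entries of }\eta^{(t)}\}$.

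Next I would use the hypothesis that no label of $L^{(n)}$ occurs in the start sequence. For each $\ell\in L^{(n)}$ let $t(\ell)$ be the least index $t\ge 1$ with $\ell$ an entry of $\eta^{(t)}$; this is well defined by the previous step, and $\ell$ is not an entry of $\eta^{(t(\ell)-1)}$ (using $\eta^{(0)}=\lab$ together with the disjointness hypothesis when $t(\ell)=1$, and minimality of $t(\ell)$ otherwise). Since $\eta^{(t(\ell))}$ has only one entry not present in $\eta^{(t(\ell)-1)}$, that entry must be $\ell$; thus the map $\ell\mapsto t(\ell)$ is injective on $L^{(n)}$, giving $\abs{L^{(n)}}\le s=\size{C^{(n)}}$. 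For the depth bound I would then invoke the elementary observation that every moment consists of pairwise disjoint two-qubit gates, hence contains at most $\lfloor n/2\rfloor\le n/2$ gates, so $\size{C^{(n)}}\le \depth{C^{(n)}}\cdot n/2$; chaining the two inequalities yields $\depth{C^{(n)}}\ge\abs{L^{(n)}}/(n/2)$.

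There is no substantial obstacle here: the statement reduces to a direct counting argument. The only points deserving care are the justification that every moment-prefix state really appears among the $\eta^{(t)}$ (which relies on the non-overlap condition, so that the order of gates within a moment is immaterial), and the handling of the boundary case $t(\ell)=1$, which is precisely where the hypothesis that $L^{(n)}$ avoids the labels of the start sequence is used.
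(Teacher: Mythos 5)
Your proof is correct and follows exactly the argument the paper gives, which is stated there in one line: each \CXNAME{} gate changes only one entry of the label sequence and hence can produce at most one new label, and each moment contains at most $n/2$ non-overlapping gates. Your write-up simply makes the injection $\ell\mapsto t(\ell)$ and the role of the disjointness hypothesis at $t(\ell)=1$ explicit, which is a faithful elaboration rather than a different route.
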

\begin{proof}
    Follows from the fact that each \CXNAME{} gate can only produce at most one new label and every moment of a circuit on $n$ qubits can have at most $n/2$  \CXNAME{} gates.\qedhere
\end{proof}
As we observe from Remark \ref{rem:k-body_generator} \ref{rem:mucx-properties-ineq}, for every family of $k$-body generators we have that $\mu_{\CNOT}(V_k)\geq 1$. The goal of the next sections is to construct families of $k$-body generators with an average \CXNAME{} count and a normalized depth as small as possible.

We conclude this section with the following two lemmas, which will be repeatedly applied to reduce the depth of the circuits presented in the next sections. The first lemma is utilized for the construction of our depth-optimized $k$-body generators [see Figs.~\ref{fig:special_three_body_gen}(c), \ref{fig:4body_gen}(a), for instance], while the application of the second lemma further reduces the depth of our QAOA circuits (see Fig.~\ref{fig:qaoa_square_grid}, for example).

\begin{lemma}
    \label{lem:rev_circuit}
    Let $C^{(n)}$ be a \CXNAME{} circuit acting on $n$ qubits with depth $d$ and $\lab=
    \labseq{\lab_1,\ldots,\lab_n}$ a sequence of labels. Then, it holds
    \begin{equation}
        \label{eq:rev_circuit}
        \labaction{\lab}{\overline{C}^{(n)}}=\overline{\overline{\lab}C^{(n)}}.
    \end{equation}
    Here, for any sequence of labels $\lab=
    \labseq{\lab_1,\ldots,\lab_n}$ we set $\overline{\lab}\coloneqq(\lab_n,\lab_{n-1},\ldots,\lab_1)$. As a consequence, if $C^{(p,q)}$ is acting on qubits $p,\ldots,q$, it holds
    \begin{equation}
        \label{eq:rev_circuit2}
        \labaction{\lab}{\overline{C}^{(p,q)}}=\overline{\overline{\lab}C^{(p,q)}},
    \end{equation}
    where in this case $\overline{\lab}\coloneqq(\lab_1,\ldots,\lab_{p-1},\lab_q,\lab_{q-1},\ldots,\lab_p,\lab_{q+1},\ldots,\lab_{n})$. Moreover, if $C^{(p,q)}$ generates some label $\lab_{i_1}\lab_{i_2}\ldots\lab_{i_m}$ with $1\leq i_1<\ldots< i_m\leq n$ from $\lab$, then $\overline{C}^{(p,q)}$ generates the label $\lab_{p+q-i_1}\lab_{p+q-i_2}\ldots\lab_{i_{p+q-i_m}}$ from $\lab$.
\end{lemma}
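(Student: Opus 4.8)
The plan is to reduce the identity \eqref{eq:rev_circuit} to the case of a single \CXNAME{} gate and then propagate it through the circuit by induction. First I would record three elementary bookkeeping facts about reversal: it distributes over concatenation, $\overline{A\odot B}=\overline{A}\odot\overline{B}$; it commutes with truncation to the first $m$ moments, $\overline{C}_{1,m}=\overline{(C_{1,m})}$; and within a single moment the contained \CXNAME{} gates are non-overlapping, hence commute, so the right action of a moment equals the composition of the right actions of its gates in any order, while the reversed moment is exactly the set of reversed gates (still non-overlapping, since $k\mapsto n+1-k$ is a bijection). Combined with the fact that the right action of a circuit is the composition of the right actions of its moments, these facts reduce the claim to the single-gate statement $\labaction{\lab}{\overline{g}}=\overline{\labaction{\overline{\lab}}{g}}$ for $g=\CNOT_{i,j}$, together with the trivial observation that sequence reversal is an involution, $\overline{\overline{\lab}}=\lab$.

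Next I would verify the single-gate base case by direct componentwise computation. Since $\overline{g}=\CNOT_{n+1-i,\,n+1-j}$, the left-hand side $\labaction{\lab}{\overline{g}}$ is $\lab$ with the entry at position $n+1-j$ replaced by $\lab_{n+1-i}\symdif\lab_{n+1-j}$. On the right-hand side, $\labaction{\overline{\lab}}{g}$ replaces the entry at position $j$ of $\overline{\lab}$, which is $\lab_{n+1-j}$, by $\lab_{n+1-i}\symdif\lab_{n+1-j}$; the outer reversal then moves this entry to position $n+1-j$ and sends every other position $k$ back to $\lab_k$, so the two sides coincide. The general case of \eqref{eq:rev_circuit} follows by induction on the number of \CXNAME{} gates: peeling off the first gate and writing $C=g\odot C'$, one has $\labaction{\lab}{\overline{C}}=\labaction{(\labaction{\lab}{\overline{g}})}{\overline{C'}}=\overline{\labaction{\overline{\labaction{\lab}{\overline{g}}}}{C'}}$ by the induction hypothesis, and $\overline{\labaction{\lab}{\overline{g}}}=\labaction{\overline{\lab}}{g}$ by the base case and involutivity, which gives $\labaction{\lab}{\overline{C}}=\overline{\labaction{(\labaction{\overline{\lab}}{g})}{C'}}=\overline{\labaction{\overline{\lab}}{C}}$. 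Equation \eqref{eq:rev_circuit2} is the same argument with the global reversal replaced by the block reversal of qubits $p,\ldots,q$: every gate of $C^{(p,q)}$ acts inside the block, so the entries outside it are fixed by both the circuit and the reversal, and on the block the single-gate computation goes through verbatim with $n+1-i$ replaced by $p+q-i$.

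Finally, for the ``generates'' claim I would use that the right action of a \CXNAME{} circuit performs the same sequence of symmetric differences regardless of the values of the initial labels, i.e.\ it is equivariant under relabeling of the initial labels, together with $\overline{C}^{(p,q)}_{1,s}=\overline{(C^{(p,q)}_{1,s})}$. If $C^{(p,q)}$ generates $\lab_{i_1}\lab_{i_2}\ldots\lab_{i_m}$ from $\lab$ then, since only the block $p,\ldots,q$ is ever affected, one may take $p\le i_1<\cdots<i_m\le q$, and this label occurs as some component of $\labaction{\lab}{C^{(p,q)}_{1,s}}$ for a moment $s$. Applying \eqref{eq:rev_circuit2} to $C^{(p,q)}_{1,s}$ gives $\labaction{\lab}{\overline{C}^{(p,q)}_{1,s}}=\overline{\labaction{\overline{\lab}}{C^{(p,q)}_{1,s}}}$; since $\overline{\lab}$ is obtained from $\lab$ by the relabeling $\lab_k\rightsquigarrow\lab_{p+q-k}$ on the block, equivariance shows the corresponding component of $\labaction{\overline{\lab}}{C^{(p,q)}_{1,s}}$ equals $\lab_{p+q-i_1}\lab_{p+q-i_2}\ldots\lab_{p+q-i_m}$, and hence this label occurs in $\labaction{\lab}{\overline{C}^{(p,q)}_{1,s}}$ at the mirrored position, so $\overline{C}^{(p,q)}$ generates it from $\lab$. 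The step I expect to require the most care is keeping the two reversals at play conceptually distinct: the reversal of \emph{positions} in a label sequence versus the relabeling of the label \emph{contents}; the equivariance observation is precisely what reconciles them, and once that is made explicit the remaining work is routine index bookkeeping.
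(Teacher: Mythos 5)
Your proof is correct and follows essentially the same route as the paper's: a direct componentwise verification for a single \CXNAME{} gate, followed by induction on the gate count using the fact that reversal distributes over concatenation and that sequence reversal is an involution. You additionally spell out the block-reversal case and the ``generates'' claim (via equivariance of the circuit action under relabeling of the initial labels), which the paper states as consequences without detailed proof; these details are sound.
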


\begin{proof}
    First, suppose that $C$ contains only one \CXNAME{} gate, and let $i$ the control and $j$ the target qubit. Since $\overline{\lab}_m=\lab_{n+1-m}$ for $m=1,\ldots,n$, we observe from the definition of the \CXNAME{} gate
    \begin{equation*}
        \hat{\lab}_m=\begin{cases}
            \overline{\lab}_{i}\overline{\lab}_{j},\quad & \exists k\in\set{1,\ldots,n}\colon m=j,\\
            \overline{\lab}_{m},\quad &\text{else,}
        \end{cases}
    \end{equation*}
    where $\labseq{\hat{\lab}_1,\ldots,\hat{\lab}_n}\coloneqq \labaction{\overline{\lab}}{\CNOT_{i,j}}$. Therefore, we have
    \begin{equation*}
        \left(\overline{\overline{\lab}\CNOT_{i,j}}\right)_m =\hat{\lab}_{n+1-m}=\begin{cases}
            \overline{\lab}_{i}\overline{\lab}_{j},\quad & \exists k\in\set{1,\ldots,n}\colon n+1-m=j,\\
            \overline{\lab}_{n+1-m},\quad &\text{else,}
        \end{cases}
    \end{equation*}
    which equals
    \begin{equation*}
       \begin{cases}
            \lab_{n+1-i}\lab_{n+1-j},\quad & \exists k\in\set{1,\ldots,n}\colon m=n+1-j,\\
            \lab_{m},\quad &\text{else,}
        \end{cases}
        =\left(\labaction{\lab}{\overline{C}^{(n)}}\right)_m.
    \end{equation*}
    Next, assume that the statement holds when the circuit has size $m\geq 1$ and let $C$ be a circuit with size $m+1$. Then, we can find two circuits $C_1$ and $C_2$ of size at most $m$ such that $\labaction{\lab}{C}=\labaction{\lab}{\left(C_1\cat C_2\right)}$ for any sequence of labels $\lab$. Thus, we have from our induction hypothesis $$\labaction{\lab}{\overline{C}}=\labaction{\lab}{\left(\overline{C_1}\cat\overline{C_2}\right)}=\labaction{\left(\labaction{\lab}{\overline{C_1}}\right)}{C_2}=\labaction{\left(\overline{\labaction{\overline{\lab}}{C_1}}\right)}{C_2}.$$ Hence, applying our induction hyptothesis again on the right-hand side yields 
    \begin{equation*}
    \overline{\labaction{\left(\labaction{\overline{\lab}}{C_1}\right)}{C_2}}=\overline{\labaction{\overline{\lab}}{\left(C_1\cat C_2\right)}}=\overline{\labaction{\overline{\lab}}{C}},
    \end{equation*}
    which shows desired relation.\qedhere
\end{proof}
Subsequently, for a label sequence $\lab=\labseq{\lab_1,\ldots,\lab_n}$ and a set $I\coloneqq\set{i_1,\ldots,i_m}$ with $1\leq i_1<\ldots<i_m\leq n$ we set $\lab_I\coloneqq \lab_{i_1}\ldots\lab_{i_m}$. Furthermore, for a set $\mathcal{I}$ containing subsets of $\set{1,\ldots,n}$ we set $L_{\mathcal{I}}^{(n)}\coloneqq\set{\lab_I\mid I\in\mathcal{I}}$.
\begin{lemma}
    \label{lem:adjoint_circuit}
    Let $C^{(n)}$ be a \CXNAME{} circuit acting on $n$ qubits with depth $d$ and $\lab=
    \labseq{\lab_1,\ldots,\lab_n}$ a sequence of labels.
    \begin{enumerate}
        \item \label{item:adjoint_circuit_label_set_prop} The adjoint circuit $(C^{(n)})^\dagger$ generates the same labels as $C^{(n)}$ from the label sequence $\labaction{\lab}{C^{(n)}}$.
        \item Let $\mathcal{I}$ be a set of subsets of $\set{1,\ldots,n}$ and suppose that $C^{(n)}$ generates $L_{\mathcal{I}}^{(n)}$ from $\lab$ as well as there exists a permutation $\pi\colon\set{1,\ldots,n}\to\set{1,\ldots,n}$ such that $\labaction{\lab}{C^{(n)}}=\labseq{\lab_{\pi(1)},\ldots,\lab_{\pi(n)}}$. Then, the adjoint circuit $(C^{(n)})^\dagger$ generates $L_{\pi^{-1}(\mathcal{I})}^{(n)}\coloneqq\set{\lab_{\pi^{-1}(I)}\mid I\in\mathcal{I}}$ from the label sequence $\lab$.
    \end{enumerate}
\end{lemma}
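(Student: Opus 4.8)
The plan is to establish part~(i) first and then obtain part~(ii) from it by relabeling the atomic labels $\lab_1,\ldots,\lab_n$.

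For part~(i), I would start from the fact that every $\CNOT$ gate is its own adjoint, so a single moment of a \CXNAME{} circuit -- a set of non-overlapping, hence commuting, $\CNOT$ gates -- also equals its own adjoint and squares to the identity. Hence, writing $C^{(n)}=(C_1,\ldots,C_d)$, Definition~\ref{def:adjoint_circuit} gives $(C^{(n)})^\dagger=(C_d,\ldots,C_1)$, i.e.\ $C^{(n)}$ with its moments reversed. With $C^{(n)}_{1,m}$ the first $m$ moments of $C^{(n)}$, a short induction on $m$ -- peeling off one moment at a time and cancelling it against the matching moment at the end of $C^{(n)}_{1,d-m}$ -- then yields
\begin{equation*}
    \bigl(\labaction{\lab}{C^{(n)}}\bigr)\,\bigl(C^{(n)}\bigr)^{\dagger}_{1,m}=\labaction{\lab}{C^{(n)}_{1,d-m}},\qquad m=0,1,\ldots,d.
\end{equation*}
Thus, fed its own output $\labaction{\lab}{C^{(n)}}$, the circuit $(C^{(n)})^\dagger$ retraces in reverse exactly the collection of label configurations $\{\labaction{\lab}{C^{(n)}_{1,m}}\mid 0\leq m\leq d\}$ that $C^{(n)}$ passes through when started from $\lab$. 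Since, by Definition~\ref{def:label-generator}, the labels a circuit generates depend only on this collection (the two extreme configurations are interchanged -- a harmless discrepancy), both circuits generate the same labels.

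For part~(ii), I would first invoke part~(i). By hypothesis $C^{(n)}$ is a \emph{clean} generator of $L_{\mathcal{I}}^{(n)}$ -- the permutation condition $\labaction{\lab}{C^{(n)}}=\labseq{\lab_{\pi(1)},\ldots,\lab_{\pi(n)}}$ is exactly the cleanness of Definition~\ref{def:label-generator} -- so part~(i) shows that $(C^{(n)})^\dagger$ generates $\{\lab_I\mid I\in\mathcal{I}\}$ from $\labseq{\lab_{\pi(1)},\ldots,\lab_{\pi(n)}}$. Now I would rename the atoms: let $\phi$ be the $\BB$-linear map of the label space determined by $\phi(\lab_{\pi(j)})=\lab_j$ for all $j$, equivalently $\phi(\lab_i)=\lab_{\pi^{-1}(i)}$; this is a well-defined bijection because $\lab_1,\ldots,\lab_n$ are linearly independent (the start configurations are bases of the label space, as throughout the paper). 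Since the coordinate-wise action of any \CXNAME{} circuit on a sequence of labels is built entirely from symmetric differences of the coordinates, applying $\phi$ to every atomic label commutes with running the circuit; and being a bijection, $\phi$ preserves the relation ``a configuration equals a given target label''. As $\phi$ sends $\labseq{\lab_{\pi(1)},\ldots,\lab_{\pi(n)}}$ to $\labseq{\lab_1,\ldots,\lab_n}=\lab$ and sends each $\lab_I$ to $\lab_{\pi^{-1}(I)}$, it follows that $(C^{(n)})^\dagger$ generates $\{\lab_{\pi^{-1}(I)}\mid I\in\mathcal{I}\}=L_{\pi^{-1}(\mathcal{I})}^{(n)}$ from $\lab$, which is the claim.

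Part~(i) is little more than bookkeeping on top of $\CNOT^\dagger=\CNOT$; the step that needs genuine care is the covariance argument in part~(ii) -- namely that the set of labels generated by a \CXNAME{} circuit transforms covariantly under a linear renaming of the atomic labels, with linear independence of $\lab_1,\ldots,\lab_n$ being precisely what makes the renaming $\phi$ an honest bijection. I expect that to be the only real obstacle; the remaining bits (the induction, the identification of the cleanness hypothesis) are routine.
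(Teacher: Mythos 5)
Your proof is correct and follows essentially the same route as the paper's: part (i) via the observation that the adjoint circuit, fed the output of $C^{(n)}$, retraces the intermediate label configurations of $C^{(n)}$ in reverse (the paper writes this as a cancellation $C_{m+1}\cat\ldots\cat C_d\cat C_d^\dagger\cat\ldots\cat C_{m+1}^\dagger$ rather than your induction), and part (ii) via a relabeling by $\pi^{-1}$. The only cosmetic difference is that the paper applies part (i) to the pre-permuted input sequence $\labseq{\lab_{\pi^{-1}(1)},\ldots,\lab_{\pi^{-1}(n)}}$ instead of pushing the conclusion forward by your map $\phi$, which is the same covariance argument; your explicit remark that linear independence of $\lab_1,\ldots,\lab_n$ underpins this step is a point the paper leaves implicit.
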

\begin{proof}
    \begin{enumerate}[wide=\parindent,leftmargin=0pt,align=left]
        \item Let $C_1,\ldots,C_{d}$ be the sets of non-overlapping gates of the circuit $C^{(n)}$. Then, we observe that
        \begin{align*}
            \labaction{\lab}{\left(C_1\cat\ldots\cat C_{m}\right)}&=\labaction{\lab}{\left(C_1\cat\ldots\cat C_{m}\cat C_{m+1}\cat\ldots C_{d}\cat C_{d}^\dagger\cat C_{d-1}^\dagger\cat\ldots\cat C_{m+1}^\dagger\right)}\\
            &=\labaction{\lab}{\left(C^{(n)}\cat C_{d}^\dagger\cat C_{d-1}^\dagger\cat\ldots\cat C_{m+1}^\dagger\right)}\\
            &=\labaction{\labaction{\lab}{C^{(n)}}}{\left(C_{d}^\dagger\cat C_{d-1}^\dagger\cat\ldots\cat C_{m+1}^\dagger\right)}
        \end{align*}
        for each $m=1,\ldots,d-1$. Since $C_{d}^\dagger,\ldots,C_{m+1}^\dagger$ are the first $d-m$ moments of the circuit $(C^{(n)})^\dagger$, the above identity implies that $(C^{(n)})^\dagger$ generates the same labels as $C^{(n)}$ from the label sequence $\labaction{\lab}{C^{(n)}}$.
        \item Define the label sequence $\lab'\coloneqq\labseq{\lab_{\pi^{-1}(1)},\ldots,\lab_{\pi^{-1}(n)}}$. Then, by applying \ref{item:adjoint_circuit_label_set_prop} on the label sequence $\lab'$ we obtain that $(C^{(n)})^\dagger$ generates the label set
        \begin{equation*}
            \set{\lab'_I\mid I\in\mathcal{I}}=\set{\lab_{\pi^{-1}(I)}\mid I\in\mathcal{I}}=L_{\pi^{-1}(\mathcal{I})}^{(n)}
        \end{equation*}
        from $\labaction{\lab'}{C^{(n)}}=\labseq{\lab'_{\pi(1)},\ldots,\lab'_{\pi(n)}}=\lab$, showing the desired result.\qedhere
    \end{enumerate}
\end{proof}
Note that in the following, whenever a circuit $C^{(n)}$ is defined for each number of qubits $n\geq 2$, we denote by $C^{(p,q)}$ the circuit which replaces the control qubit $i$ and the target qubit $j$ of each \CXNAME{} gate in the circuit $C^{(q-p+1)}$ by the control qubit $i+p-1$ and the target qubit $j+p-1$, respectively.

In the following, let $n\geq 2$ denote the number of qubits and $\lab=\labseq{\lab_1,\ldots,\lab_n}$ be a finite sequence of some labels.
\section{Main building blocks}
\label{sec:main_building_blocks}
In this section, we outline the main building blocks and its key properties that are being utilized for the construction of our $k$-body generators in the subsequent sections. 
\begin{definition}[\DXNAME{} gate, \TWINE{} chain, \CXNAME{} chain]
    \label{def:dx_cx_chains}
    Let $1\leq c,t\leq n$ with $c\neq t$. 
    \begin{enumerate}
        \item The \emph{\DXNAME{} gate} is defined as $\XS_{c,t}\coloneqq \CNOT_{t,c}\cat\CNOT_{c,t}$
        \item The \emph{\TWINE{} chain} is defined as the circuit $\XSC^{(n)}\coloneqq \XS_{1,2} \cat \cdots \cat \XS_{n-1,n}$. By $\XSC^{(1)}$ we denote the empty sequence.
        \item The \emph{\CXNAME{} chain} is the circuit defined as $\CXC^{(n)}\coloneqq \CNOT_{1,2}\cat \CNOT_{2,3}\cat\ldots\cat\CNOT_{n-1,n}$.
        \item The \emph{modified \CXNAME{} chain} is the circuit defined as $\CXCM^{(n)}\coloneqq \CNOT_{1,2}\cat_3 \CNOT_{2,3}\cat_5\ldots\cat_{
        2(n-2)+1}\CNOT_{n-1,n}$.
    \end{enumerate}
\end{definition}
\begin{remark}
    \label{rem:xswap_prop}
    \begin{enumerate}
        \item \label{item:xswap_gate_prop} It holds $\labaction{\lab}{\XS_{c,t}}=\hat{\lab}$ where $\hat{\lab}_i\coloneqq\lab_i$ for $i\notin\set{c,t}$ as well as $\hat{\lab}_c\coloneqq\lab_c\lab_t$ and $\hat{\lab}_t\coloneqq\lab_c$.
        \item \label{item:xswap_chain_prop} The \TWINE{} chain satisfies $\labaction{\lab}{\XSC^{(n)}}=\labseq{\lab_{1}(\lab_{2}\ldots\lab_{n}),\lab_1}$.
        \item \label{item:cx_chain_prop} We have for the \CXNAME{} chain ${\labaction{\lab}{\CXC^{(n)}}=\labseq{\lab_1,\lab_{1}\lab_{2},\lab_{1}\lab_{2}\lab_{3},\ldots,\lab_{1}\lab_{2}\ldots\lab_{n}}}$ and ${\labaction{\lab}{(\CXC^{(n)})^\dagger}=\labseq{\lab_1,\lab_{1}\lab_{2},\lab_{2}\lab_{3},\ldots,\lab_{n-1}\lab_{n}}}$. Obviously, the statement also holds for the modified \CXNAME{} chain.
    \end{enumerate}
\end{remark}
\begin{proof}
    Property \ref{item:xswap_gate_prop} can be easily verified. For \ref{item:xswap_chain_prop} we use \ref{item:xswap_gate_prop} to obtain
    \begin{align*}
        \labaction{\lab}{\XSC^{(n)}}&=\labaction{\left(\labaction{\lab}{\XS_{1,2}}\right)}{\left(\concat{\XS_{2,3}}{\concat{\XS_{3,4}}{\concat{\cdots}{\XS_{n-1,n}}}}\right)}\\
        &=\labaction{\labseq{\lab_1\lab_{2},\lab_{1},\lab_{3}\ldots\lab_{n}}}{\left(\concat{\XS_{2,3}}{\concat{\XS_{3,4}}{\concat{\cdots}{\XS_{n-1,n}}}}\right)}.
    \end{align*}
    Consequently, using induction, we see that
    \begin{equation*}
        \labaction{\lab}{\XSC^{(n)}}=\labseq{\lab_{1}\lab_{2},\lab_{1}\lab_{3},\ldots,\lab_{1}\lab_{n},\lab_{1}}=\labseq{\lab_{1}(\lab_{2}\ldots\lab_{n}),\lab_1}.
    \end{equation*}
    Property \ref{item:cx_chain_prop} can analogously be shown.\qedhere
\end{proof}
As we will later see, the following circuits are needed for our construction of $k$-body generators for arbitrary $k\geq 4$ in App.~\ref{sec:k-body_generators} and are used in a post-cleanup step in our circuits.
\begin{definition}[\SWNAME{} gate, \SWNAME{} chain]
   Let $1\leq c,t\leq n$ with $c\neq t$. 
    \begin{enumerate}
        \item The \emph{\SWNAME{} gate} is defined as $\SW_{c,t}\coloneqq \CNOT_{c,t}\cat\CNOT_{t,c}\cat\CNOT_{c,t}$
        \item The \emph{\SWNAME{} chain} is defined as the circuit $\SWC^{(n)}\coloneqq \SW_{1,2} \cat \SW_{2,3} \cat \cdots \cat \SW_{n-1,n}$.
    \end{enumerate}
\end{definition}
\begin{remark}
    \label{rem:swap_prop}
    \begin{enumerate}
        \item \label{item:swap_gate_prop} It holds $\labaction{\lab}{\SW}=\hat{\lab}$ where $\hat{\lab}_i\coloneqq\lab_i$ for $i\notin\set{c,t}$ as well as $\hat{\lab}_c\coloneqq\lab_t$ and $\hat{\lab}_t\coloneqq\lab_c$.
        \item \label{item:swap_chain_prop} The \SWNAME{} chain satisfies $\labaction{\lab}{\SWC^{(n)}}=\labseq{\lab_{2},\lab_{3},\ldots\lab_{n},\lab_1}$.
    \end{enumerate}
\end{remark}
Next, we present our main results on $k$-body generators. We emphasize that some of the circuits in the subsequent sections are a concatenation of subcircuits with non-zero shifts. However, since these circuits produce the same labels as their counterparts where the shifts of the concatenated subcircuits are set to zero, we may assume that the shifts are zero when proving that the circuits generate a specific label set. 

\section{Two-body generators}
\label{sec:two_body_gen}
In this section we quickly recap the circuit in Ref.~\cite{Klaver2024} which generates all two-body labels for nearest neighbor connectivity graphs and introduce it with the help of a so-called \emph{\TWINE{} network of type I} which consists of a concatenation of \TWINE{} chains. 
\label{subsec:two-body_generators}
\begin{definition}[\TWINE{} network of type I]
    Define
    \begin{equation*}
        \XSN^{(n)}_l\coloneqq\begin{cases}
            \concat[4]{\XSC^{(l)}}{\XSN^{(n)}_{l-1}},\quad&\text{$3\leq l\leq n$}\\
            \XSC^{(2)},\quad&\text{$l=2$}.
        \end{cases}
    \end{equation*}
    for $2\leq l\leq n$. Then, we denote by $\XSN^{(n)}\coloneqq \XSN^{(n)}_{n-1}$ the \emph{\TWINE{} network of type I}.
\end{definition}
\begin{theorem}
    \label{thm:xswap_network_I}
    For the \TWINE{} network of type I we have that
    \begin{equation}
        \label{eq:ncs_output}
        \labaction{\lab}{\XSN^{(n)} }=\labseq{\lab_{n-1}\lab_{n},\ldots,\lab_1\lab_2,\lab_1}.
    \end{equation}
    and $\XSN^{(n)}$ generates all two-body labels $\set{\lab_i\lab_j\mid 1\leq i<j\leq n}$. Furthermore, ${\size{\XSN^{(n)}}=n^2-n}$ and ${\depth{\XSN^{(n)}}=4n-6}$.
\end{theorem}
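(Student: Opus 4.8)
The plan is to establish the three claims — the output label sequence \eqref{eq:ncs_output}, the generated label set, and the size/depth counts — essentially by induction on $n$, exploiting the recursive definition $\XSN^{(n)}_l = \concat[4]{\XSC^{(l)}}{\XSN^{(n)}_{l-1}}$. The key tool is Remark \ref{rem:xswap_prop}\ref{item:xswap_chain_prop}, which tells us that a \TWINE{} chain $\XSC^{(l)}$ acting on the first $l$ qubits of a label sequence takes $(\lab_1,\ldots,\lab_l,\ldots)$ to $(\lab_1\lab_2,\lab_1\lab_3,\ldots,\lab_1\lab_l,\lab_1,\lab_{l+1},\ldots)$, i.e. it spreads $\lab_1$ into each of the positions $2,\ldots,l$ and parks a clean copy of $\lab_1$ at position $l$, leaving positions $l+1,\ldots,n$ untouched. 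Since the shifts in the concatenation do not change the induced operator (as noted at the end of App.~\ref{sec:main_building_blocks}), I may set all shifts to zero for the purpose of computing labels.

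First I would prove the label-action formula. I will actually prove a slightly more general statement: applying $\XSN^{(n)}_{l}$ to a sequence whose first $l+1$ entries are $(\mu_1,\ldots,\mu_{l+1})$ (and whatever comes after position $l+1$ is irrelevant, since $\XSN^{(n)}_l$ only touches qubits $1,\ldots,l+1$) yields $(\mu_l\mu_{l+1},\mu_{l-1}\mu_l,\ldots,\mu_1\mu_2,\mu_1)$ on those first $l+1$ positions. The base case $l=2$ is exactly Remark \ref{rem:xswap_prop}\ref{item:xswap_chain_prop} with $n=2$. For the inductive step, apply $\XSC^{(l)}$ first: by the Remark this sends $(\mu_1,\ldots,\mu_{l+1})$ to $(\mu_1\mu_2,\mu_1\mu_3,\ldots,\mu_1\mu_l,\mu_1,\mu_{l+1})$; crucially, position $l+1$ is left as $\mu_{l+1}$, and positions $1,\ldots,l-1$ now hold $\mu_1\mu_2,\ldots,\mu_1\mu_l$ while position $l$ holds the clean label $\mu_1$. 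Now $\XSN^{(n)}_{l-1}$ acts on positions $1,\ldots,l$; since $\XSC^{(1,l-1)}$ (the first chain inside $\XSN^{(n)}_{l-1}$) spreads the entry currently at position $1$, namely $\mu_1\mu_2$, I have to check that the $\mu_1$ prefixes cancel correctly. Applying the induction hypothesis to the sequence $(\mu_1\mu_2,\mu_1\mu_3,\ldots,\mu_1\mu_l,\mu_1)$ — think of this as $(\mu_1',\ldots,\mu_l')$ with $\mu_i' = \mu_1\mu_{i+1}$ for $i<l$ and $\mu_l'=\mu_1$ — gives $(\mu_{l-1}'\mu_l',\mu_{l-2}'\mu_{l-1}',\ldots,\mu_1'\mu_2',\mu_1')$, and $\mu_i'\mu_{i+1}' = (\mu_1\mu_{i+1})(\mu_1\mu_{i+2}) = \mu_{i+1}\mu_{i+2}$ for $i\le l-2$, while $\mu_{l-1}'\mu_l' = (\mu_1\mu_l)\mu_1 = \mu_l$ and $\mu_1' = \mu_1\mu_2$. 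So the first $l$ positions become $(\mu_l,\mu_{l-1}\mu_l,\ldots,\mu_2\mu_3,\mu_1\mu_2)$; wait — I need to be careful with the ordering and re-index so that combined with the untouched $\mu_{l+1}$ at position $l+1$ and the leading $\mu_l\mu_{l+1}$ that has to appear, everything lines up as $(\mu_l\mu_{l+1},\mu_{l-1}\mu_l,\ldots,\mu_1\mu_2,\mu_1)$. This bookkeeping — matching up which position the spreading chain reads from and confirming the two-fold telescoping of $\mu_1$-prefixes — is the one genuinely fiddly part, though it is entirely mechanical; I expect it to be the main obstacle only in the sense of requiring care with indices rather than any real idea. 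Specializing to $l=n-1$ and starting from $(\lab_1,\ldots,\lab_n)$ gives \eqref{eq:ncs_output}.

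The generated-label claim then follows readily: $\XSN^{(n)}$ is $\XSC^{(n)}$ followed by $\XSN^{(n)}_{n-2}$, and $\XSC^{(n)}$ already generates all two-body labels containing $\lab_1$ (by Remark \ref{rem:xswap_prop}\ref{item:xswap_chain_prop}, each $\lab_1\lab_j$ appears at some moment). After $\XSC^{(n)}$ the state on qubits $1,\ldots,n-1$ is $(\lab_1\lab_2,\ldots,\lab_1\lab_{n-1},\lab_1,\ldots)$ — actually by the general formula above it is cleaner to invoke the intermediate step where position $2$ onward carries the right data; inductively $\XSN^{(n)}_{n-2}$ generates all two-body labels on its input qubits, and because every label in that sub-sequence has the form $\lab_1\lab_j$ (or $\lab_1$), the pairwise symmetric differences it generates are exactly all $\lab_i\lab_j$ with $2\le i<j\le n$. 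Union with the $\lab_1\lab_j$ from the first chain gives all of $\set{\lab_i\lab_j : 1\le i<j\le n}$. Finally, for the counts: $\XSC^{(l)}$ consists of $l-1$ \DXNAME{} gates, each a pair of \CXNAME{}s, so $\size{\XSC^{(l)}}=2(l-1)$ and $\depth{\XSC^{(l)}}=2(l-1)$; hence $\size{\XSN^{(n)}} = \sum_{l=2}^{n} 2(l-1) = n(n-1) = n^2-n$. For the depth, the shifted concatenation $\concat[4]{\XSC^{(l)}}{\cdot}$ means each successive chain starts $4$ moments after the previous one, so the last chain $\XSC^{(2)}$ (length $2$) starts at moment $4(n-2)$ — I would verify the induction $\depth{\XSN^{(n)}_l} = 4(l-1) + \depth{\XSC^{(2)}} - \dots$ carefully, but the claimed closed form $4n-6$ should drop out once one checks the base case $\depth{\XSN^{(n)}_2}=\depth{\XSC^{(2)}}=2$ and the recursion $\depth{\XSN^{(n)}_l} = 4 + \depth{\XSN^{(n)}_{l-1}}$ (valid because the initial $4$ moments of $\XSC^{(l)}$ contain the only gates in that time window, so the tail does not poke out the front), giving $\depth{\XSN^{(n)}_{n-1}} = 4(n-3)+2 + (\text{overhang of the first chain beyond moment }4)$. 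I will double-check the arithmetic against the stated $4n-6$ and against Fig.~\ref{fig:2To3BodyGenerator}(c) for small $n$.
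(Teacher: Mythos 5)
Your proposal follows essentially the same route as the paper: induction on the recursive structure $\XSN^{(n)}=\XSC^{(n)}\cat\XSN^{(1,n-1)}$, using Remark~\ref{rem:xswap_prop}~(ii) for the first chain and the telescoping $(\mu_1\mu_{i+1})(\mu_1\mu_{i+2})=\mu_{i+1}\mu_{i+2}$ to absorb the $\mu_1$-prefixes; the generated-label argument and the size/depth counts also coincide with the paper's. The one point you flagged but did not resolve is a genuine off-by-one in your strengthened induction hypothesis: since the first chain of the sub-network at level $l$ touches only $l$ qubits, the claim should be stated on $l$ positions, namely that the network sends $(\mu_1,\ldots,\mu_l)$ to $(\mu_{l-1}\mu_l,\ldots,\mu_1\mu_2,\mu_1)$ — not on $l+1$ positions with $\mu_l\mu_{l+1}$ appearing at the front, which already fails at your base case ($\XSC^{(2)}$ never touches qubit $3$). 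With that restatement your own computation closes the induction exactly: the intermediate sequence $(\mu_1\mu_2,\ldots,\mu_1\mu_l,\mu_1)$ feeds the sub-network on positions $1,\ldots,l-1$, the pairwise differences telescope to $\mu_{i+1}\mu_{i+2}$, and the clean $\mu_1$ parked at position $l$ is untouched, reproducing Eq.~\eqref{eq:ncs_output} for $l=n$. Your extra care about whether the first chain's tail pokes past the $4$-moment stagger is warranted and checks out: the last chain ends at moment $4(n-2)+2\geq 2n-2=\depth{\XSC^{(n)}}$ for all $n\geq 2$, so the stated depth $4n-6$ is correct.
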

\begin{proof}
    The case ${n=2}$ is trivial. Assume that the statement holds for some ${n-1\geq 2}$. From Remark \ref{rem:xswap_prop} \ref{item:xswap_chain_prop} we see that $\XSN^{(n)}=\XSC^{(n)}\cat\XSN^{(1,n-1)}$ is generating the labels $\set{\lab_1\lab_i\mid 1<i\leq n}$ and $\hat{\lab}\coloneqq\labaction{\lab}{\XSC^{(n)}}=\labseq{\lab_{1}(\lab_{2}\ldots\lab_{n}),\lab_1}$. Finally, using the induction hypothesis, we have that
    \begin{equation*}
        \labaction{\hat{\lab}}{\XSN^{(1,n-1)}}=\labseq{\hat{\lab}_{n-2}\hat{\lab}_{n-1},\ldots,\hat{\lab}_1\hat{\lab}_2,\hat{\lab}_1,\lab_1}=\labseq{\lab_{n-1}\lab_{n},\ldots,\lab_1\lab_2,\lab_1}
    \end{equation*}
    and $\XSN^{(n)}$ is generating the label set $\{\hat{\lab}_i\hat{\lab}_j\mid 1\leq i<j\leq n-1\}=\set{\lab_i\lab_j\mid 2\leq i<j\leq n}$ from $\hat{\lab}$, which implies the required properties. The expression for the size of $\XSN^{(n)}$ can be derived from the recursive definition of $\XSN^{(n)}$
    \begin{equation*}
        \size{\XSN^{(n)}}=\sum_{l=0}^{n-2} \size{\XSC^{(n-l)}}=\sum_{l=1}^{n} 2(l-1)=n^2-n
    \end{equation*}
    Similarly, we get ${\depth{\XSN^{(n)}}=4(n-2)+2}$.\qedhere
\end{proof}
\begin{corollary}
    \label{cor:two-body_generator}
    For $n\geq 3$ the circuit $\G_2^{(n)}\coloneqq \concat[-(n-3)]{\XSN^{(n)}}{\overline{\CXC}^{(n)}}$ is a two-body generator satisfying
    \begin{equation*}
        \labaction{\lab}{\G_2^{(n)}}=\overline{\lab}=\labseq{\lab_n,\lab_{n-1},\ldots,\lab_1}
    \end{equation*}
    as well as
    \begin{equation*}
        \size{\G_2^{(n)}}=n^2-1\quad\text{and}\quad\depth{\G_2^{(n)}}=4n-4.
    \end{equation*}
\end{corollary}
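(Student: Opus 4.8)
The plan is to read off the statement from Theorem~\ref{thm:xswap_network_I} combined with the reversed-circuit calculus of Lemma~\ref{lem:rev_circuit}, handling the non-zero shift $-(n-3)$ only at the very end since it affects the depth alone. As a first reduction I note that $\concat[-(n-3)]{\XSN^{(n)}}{\overline{\CXC}^{(n)}}$ and the unshifted circuit $\XSN^{(n)}\cat\overline{\CXC}^{(n)}$ induce the same operator and generate the same labels (as remarked before Theorem~\ref{thm:xswap_network_I}), so all label statements may be proved for the unshifted circuit.

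For the output sequence, Theorem~\ref{thm:xswap_network_I} gives $\hat{\lab}\coloneqq\labaction{\lab}{\XSN^{(n)}}=\labseq{\lab_{n-1}\lab_n,\lab_{n-2}\lab_{n-1},\ldots,\lab_1\lab_2,\lab_1}$, whence $\labaction{\lab}{\G_2^{(n)}}=\labaction{\hat{\lab}}{\overline{\CXC}^{(n)}}$. Since $\overline{\CXC}^{(n)}$ is the reversed circuit of the \CXNAME{} chain $\CXC^{(n)}$, Lemma~\ref{lem:rev_circuit} rewrites this as $\overline{\labaction{\overline{\hat{\lab}}}{\CXC^{(n)}}}$, where $\overline{\hat{\lab}}=\labseq{\lab_1,\lab_1\lab_2,\lab_2\lab_3,\ldots,\lab_{n-1}\lab_n}$. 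By Remark~\ref{rem:xswap_prop}\ref{item:cx_chain_prop} the $m$-th entry of $\labaction{\overline{\hat{\lab}}}{\CXC^{(n)}}$ is the running symmetric difference $\lab_1\symdif(\lab_1\lab_2)\symdif(\lab_2\lab_3)\symdif\cdots\symdif(\lab_{m-1}\lab_m)$, which telescopes to $\lab_m$; hence $\labaction{\overline{\hat{\lab}}}{\CXC^{(n)}}=\lab$ and $\labaction{\lab}{\G_2^{(n)}}=\overline{\lab}$. The generator property is now free: $\XSN^{(n)}$ already generates all two-body labels (Theorem~\ref{thm:xswap_network_I}) and is the first $4n-6$ moments of $\XSN^{(n)}\cat\overline{\CXC}^{(n)}$, so $\G_2^{(n)}$ generates all two-body labels; together with $\labaction{\lab}{\G_2^{(n)}}=\overline{\lab}$ being the permutation $i\mapsto n+1-i$ applied to $\lab$, this even shows $\G_2^{(n)}$ is a \emph{clean} two-body generator (Definition~\ref{def:label-generator}).

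Finally the complexities. The count is immediate since shifts do not change the number of gates: $\size{\G_2^{(n)}}=\size{\XSN^{(n)}}+\size{\overline{\CXC}^{(n)}}=(n^2-n)+(n-1)=n^2-1$, using Theorem~\ref{thm:xswap_network_I} and that $\overline{\CXC}^{(n)}$ is a chain of $n-1$ \CXNAME{} gates. For the depth, $\overline{\CXC}^{(n)}$ has depth $n-1$ and, after the left shift by $n-3$, its gates occupy moments $3n-2,\ldots,4n-4$; since $\depth{\XSN^{(n)}}=4n-6$ this gives $\depth{\G_2^{(n)}}=\max(4n-6,4n-4)=4n-4$, \emph{once} one checks that the shifted concatenation is well-defined (no overlapping gates, Definition~\ref{def:shifted_concatenation}). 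I expect this overlap check to be the only delicate point: from the recursive definition of $\XSN^{(n)}$ one reads off that the block $\XSC^{(1,p)}$ is active exactly during moments $[4n-4p+1,\,4n-2p-2]$, while the $k$-th gate of the shifted $\overline{\CXC}^{(n)}$ sits at moment $3n-3+k$ and equals $\CNOT_{n+1-k,\,n-k}$, acting on qubits $n-k$ and $n+1-k$; a short estimate then shows the only $\XSC^{(1,p)}$ blocks still running at moment $3n-3+k$ have $p\le\lfloor(n+1-k)/2\rfloor<n-k$ (and none for $k=n-1$), hence act on strictly smaller qubit indices and cannot collide, and also that the shift $n-3$ is maximal because a larger shift would force the trailing $\XSC^{(1,2)}$ block of $\XSN^{(n)}$ to overlap $\overline{\CXC}^{(n)}$. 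This bookkeeping is elementary; everything else is a one-line consequence of the cited results.
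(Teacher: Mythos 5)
Your proof is correct and follows essentially the same route as the paper: the output sequence via Theorem~\ref{thm:xswap_network_I}, Lemma~\ref{lem:rev_circuit} and the telescoping of Remark~\ref{rem:xswap_prop}~\ref{item:cx_chain_prop}, and the additive count/depth bookkeeping with the $-(n-3)$ shift. The only addition is your explicit verification that the shifted concatenation contains no overlapping gates (your interval $[4n-4p+1,\,4n-2p-2]$ and the bound $p\le\lfloor(n+1-k)/2\rfloor<n-k$ check out), a point the paper's proof leaves implicit.
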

\begin{proof}
    The first identity follows from Theorem \ref{thm:xswap_network_I}, Remark \ref{rem:xswap_prop} \ref{item:cx_chain_prop} and Lemma \ref{lem:rev_circuit}. Moreover, from Theorem \ref{thm:xswap_network_I} we deduce the relations ${\size{\G_2^{(n)}}=\size{\XSN^{(n)}}+\size{\CXC^{(n)}}=n^2-1}$ and ${\depth{\G_2^{(n)}}=\depth{\XSN^{(n)}}+\depth{\CXC^{(n)}}-n+3=4n-4}$.
\end{proof}
\section{Three-body generators}
\label{sec:three-body_generators}
As we have already seen in App. \ref{subsec:two-body_generators}, the repeated execution of \TWINE{} chains replaces the encoded special label on each qubits with another one. The idea of the subsequent three-body generator is based on the two-body generator and involves linking the special labels to a fixed special label by using a \CXNAME{} gate at the beginning of each \TWINE{} chain, thereby generating three-body labels. Thus, the \TWINE{} chain is replaced by the so-called \emph{modified \TWINE{} chain}, which includes this additional \CXNAME{} gate prior to the \TWINE{} chain. In the modified \TWINE{} chain, the initial \CXNAME{} gate ensures (even with repeated usage of a modified \TWINE{} chain) that the fixed special label is contained on its initial qubit throughout the circuit. Applying this modification, the \TWINE{} network of type I (introduced in the previous section) becomes a \emph{\TWINE{} network of type II}. In addition, this new network generates an output of labels that can be reused by another \TWINE{} network of type II where the fixed special label is replaced by a new fixed label that has not previously been used. Thus, the concatenation of multiple \TWINE{} network of type II ensures that all three-body labels are generated.

Throughout the whole section, we suppose that $n\geq 3$.
\begin{definition}[Modified \TWINE{} chain, modified \TWINE{} network of type I and \TWINE{} network of type II]
    \leavevmode
    \begin{enumerate}
        \item The \emph{modified \TWINE{} chain} is defined as $\XSCM^{(n)}\coloneqq \CNOT_{1,2}\cat \XSC^{(2,n)}$.
        \item We set $\XSN_{3,1}^{(n)}\coloneqq \concat{\XSCM^{(3)}}{\CNOT_{1,2}}$ and $\XSN_{3,l}^{(n)}\coloneqq \concat[4]{\XSCM^{(l+2)}}{\XSN^{(n)}_{3,l-1}}$ for $2\leq l\leq n-2$, and call $\XSNM^{(n)}\coloneqq\XSN_{3,n-2}^{(n)}$ the \emph{modified \TWINE{} network of type I}.
        \item The \emph{\TWINE{} network of type II} is defined as $\XSN_3^{(n)}\coloneqq \concat[1]{(\CNOT_{2,3}\cat\CXCM^{(3,n)})}{\XSNM^{(n)}}$.
    \end{enumerate}
\end{definition}
\begin{theorem}
    \label{thm:mod_xswap_chain_xswap_network_II}
    \begin{enumerate}
        \item \label{item:mod_xswap_chain_prop} The modified \TWINE{} chain satisfies
    \begin{equation}
        \label{eq:csc_modified_output}
        \labaction{\lab}{\XSCM^{(n)}}=\labseq{\lab_{1},\lab_{1}\lab_{2}(\lab_{3}\ldots\lab_{n}),\lab_{1}\lab_{2}}.
    \end{equation}
        \item \label{item:xswap_network_IIa_properties} It holds
        \begin{equation}
        \label{eq:ncs3a_output}
        \labaction{\lab}{\XSNM^{(n)}}= \labseq{\lab_1,\lab_{n-1}\lab_n,\ldots,\lab_3\lab_4,\lab_2\lab_3,\lab_1\lab_2}
    \end{equation}
        and $\XSNM^{(n)}$ generates the labels $\set{\lab_1\lab_i\lab_j\mid 1< i<j\leq n}$ from $\lab$. Moreover, we have that
        \begin{equation}
            \label{eq:size_depth_xswap_network_IIa}
            \size{\XSNM^{(n)}}=n^2-2n+1\quad\text{and}\quad\depth{\XSNM^{(n)}}=4n-8
        \end{equation}
        \item \label{item:xswap_network_II_properties} For the \TWINE{} network of type II we have that
    \begin{equation}
        \label{eq:ncs3_output}
        \labaction{ \labseq{\lab_1,\lab_1\lab_2,\lab_2\lab_3,\ldots,\lab_{n-1}\lab_n}}{\XSN_3^{(n)} }= \labseq{\lab_1,\lab_{n-1}\lab_n,\ldots,\lab_3\lab_4,\lab_2\lab_3,\lab_2}.
    \end{equation}
    and $\XSN_3^{(n)}$ generates the labels $\set{\lab_1\lab_i\lab_j\mid 1< i<j\leq n}$ from $\labseq{\lab_1,\lab_1\lab_2,\lab_2\lab_3,\ldots,\lab_{n-1}\lab_n}$. Furthermore, it holds
    \begin{equation}
        \label{eq:size_depth_xswap_network}
        \size{\XSN_3^{(n)}}=n^2-n-1\quad\text{and}\quad\depth{\XSN_3^{(n)}}=4n-7.
    \end{equation}
    \end{enumerate}
\end{theorem}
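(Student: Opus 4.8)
The plan is to prove each of the three parts of Theorem~\ref{thm:mod_xswap_chain_xswap_network_II} by reduction to the two-body results of App.~\ref{subsec:two-body_generators} (Theorem~\ref{thm:xswap_network_I} and Corollary~\ref{cor:two-body_generator}) together with Remark~\ref{rem:xswap_prop}. Part~\ref{item:mod_xswap_chain_prop} is the easy base case: by definition $\XSCM^{(n)} = \CNOT_{1,2} \cat \XSC^{(2,n)}$, so I would first apply $\CNOT_{1,2}$ to $\lab = \labseq{\lab_1,\ldots,\lab_n}$ to get $\labseq{\lab_1, \lab_1\lab_2, \lab_3, \ldots, \lab_n}$, and then apply the \TWINE{} chain on qubits $2,\ldots,n$ using Remark~\ref{rem:xswap_prop}\ref{item:xswap_chain_prop} with the relabeling $\lab_2 \mapsto \lab_1\lab_2$. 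This immediately gives $\labseq{\lab_1, (\lab_1\lab_2)(\lab_3\ldots\lab_n), \lab_1\lab_2}$, which is Eq.~\eqref{eq:csc_modified_output}. The size and depth of $\XSCM^{(n)}$ follow by adding one \CXNAME{} gate (and one moment) to $\XSC^{(2,n)}$.

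For part~\ref{item:xswap_network_IIa_properties}, I would mirror the inductive proof of Theorem~\ref{thm:xswap_network_I}, which is the natural template since $\XSNM^{(n)} = \XSN_{3,n-2}^{(n)}$ is a concatenation of modified \TWINE{} chains on a shrinking qubit range exactly as $\XSN^{(n)}$ is a concatenation of ordinary \TWINE{} chains. The key structural fact to extract from Eq.~\eqref{eq:csc_modified_output} is that $\XSCM^{(n)}$ leaves $\lab_1$ fixed on qubit~$1$ and, on qubits $2,\ldots,n$, acts as $\XSC^{(2,n)}$ would on the modified sequence $(\lab_1\lab_2, \lab_3, \ldots, \lab_n)$; in particular it produces the labels $\set{\lab_1\lab_2\lab_i : 2<i\le n}$, i.e.\ all three-body labels containing $\lab_1$ and $\lab_2$. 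Writing $\XSNM^{(n)} = \XSCM^{(n)} \cat \XSN_{3,n-3}^{(1,n-1)}$ (up to the shift, which I may ignore for the label-generation argument as noted in the paragraph before App.~\ref{sec:two_body_gen}) and feeding the output of $\XSCM^{(n)}$ into the induction hypothesis for $n-1$ on qubits $1,\ldots,n-1$, I recover all labels $\lab_1\lab_i\lab_j$ with $1<i<j\le n$ and the stated output sequence Eq.~\eqref{eq:ncs3a_output}. The size and depth then follow from the recursion $\size{\XSNM^{(n)}} = \size{\XSCM^{(n)}} + \size{\XSNM^{(n-1)}}$ and likewise for depth (with the shift of $4$), summing a telescoping series just as in Theorem~\ref{thm:xswap_network_I}; one must be slightly careful with the base case $\XSN_{3,1}^{(n)} = \XSCM^{(3)} \cat \CNOT_{1,2}$.

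For part~\ref{item:xswap_network_II_properties}, I would first observe that $\XSN_3^{(n)} = (\CNOT_{2,3}\cat\CXCM^{(3,n)}) \cat \XSNM^{(n)}$ and compute the action of the leading \CXNAME{} chain. Starting from $\labseq{\lab_1, \lab_1\lab_2, \lab_2\lab_3, \ldots, \lab_{n-1}\lab_n}$ and applying $\CNOT_{2,3}$ then the modified \CXNAME{} chain $\CXCM^{(3,n)}$ on qubits $3,\ldots,n$ (which by Remark~\ref{rem:xswap_prop}\ref{item:cx_chain_prop} acts like an ordinary forward \CXNAME{} chain, accumulating prefixes), a short telescoping calculation should collapse the initial sequence to $\labseq{\lab_1, \lab_1\lab_2, \lab_2, \lab_3, \ldots, \lab_n}$ or an analogous "decoded" form that is precisely the input $(\lab_1, \ldots)$ expected by $\XSNM^{(n)}$. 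Actually I expect the right target is that $(\CNOT_{2,3}\cat\CXCM^{(3,n)})$ transforms the input into a sequence on which $\XSNM^{(n)}$ yields Eq.~\eqref{eq:ncs3_output}; then the generation claim for $\set{\lab_1\lab_i\lab_j : 1<i<j\le n}$ is inherited directly from part~\ref{item:xswap_network_IIa_properties}, and size/depth come from adding the $n-2$ gates and $\approx n-2$ moments of the prepended chain (up to the shift of $1$, and accounting for the overlap it creates).

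The main obstacle I anticipate is bookkeeping the exact index shifts and the telescoping of symmetric differences in part~\ref{item:xswap_network_II_properties}: verifying that the composite $\CNOT_{2,3}\cat\CXCM^{(3,n)}$ really produces exactly the sequence consumed by $\XSNM^{(n)}$ requires carefully tracking how each $\lab_{i-1}\lab_i$ entry transforms under a \CXNAME{} chain, and getting the off-by-one at the boundary qubits $2$ and $3$ right (this is why a separate $\CNOT_{2,3}$ is singled out from the chain). A secondary subtlety is that the shifted concatenations in the definitions of $\XSNM^{(n)}$ and $\XSN_3^{(n)}$ make the "size" additive but the "depth" only additive up to the shift amounts, so the depth recursions $\depth{\XSNM^{(n)}} = 4 + \depth{\XSNM^{(n-1)}}$ and the final $\depth{\XSN_3^{(n)}} = \depth{\XSNM^{(n)}} + \depth{\CXCM^{(3,n)}} - (\text{overlap})$ must be checked against the small base cases to pin down the constants $-8$ and $-7$; I would verify $n=3$ explicitly as a sanity check.
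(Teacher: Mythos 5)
Your proposal follows essentially the same route as the paper's proof: part \ref{item:mod_xswap_chain_prop} by applying Remark~\ref{rem:xswap_prop}~\ref{item:xswap_chain_prop} to the sequence obtained after the initial $\CNOT_{1,2}$; part \ref{item:xswap_network_IIa_properties} by the same induction that proves Theorem~\ref{thm:xswap_network_I}, harvesting the labels $\set{\lab_1\lab_2\lab_j\mid 2<j\leq n}$ from the leading $\XSCM^{(n)}$ and feeding its output into the hypothesis for $n-1$, with the telescoping recursions for size and depth; and part \ref{item:xswap_network_II_properties} by computing the action of the prepended chain and invoking part \ref{item:xswap_network_IIa_properties}. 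One concrete correction for part \ref{item:xswap_network_II_properties}: the sequence produced by $\CNOT_{2,3}\cat\CXCM^{(3,n)}$ is $\labseq{\lab_1,\lab_1\lab_2,\lab_1\lab_3,\ldots,\lab_1\lab_n}$ --- the chain re-encodes $\lab_1$ into every label rather than decoding to single-body labels --- and your guessed form $\labseq{\lab_1,\lab_1\lab_2,\lab_2,\lab_3,\ldots,\lab_n}$ cannot be right, since $\lab_1\symdif\lab_1\lab_2\symdif\lab_2=\emptyset$ means it is not a basis, while \CXNAME{} circuits preserve the basis property. With the correct intermediate sequence, applying part \ref{item:xswap_network_IIa_properties} to $\hat\lab_1=\lab_1$, $\hat\lab_j=\lab_1\lab_j$ yields exactly Eq.~\eqref{eq:ncs3_output} and the generated set $\set{\lab_1\lab_i\lab_j\mid 1<i<j\leq n}$, as you anticipated; the rest of your plan (including the hedged statement of the functional requirement and the base-case checks for the depth constants) matches the paper.
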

\begin{proof}
    \begin{enumerate}[wide=\parindent,leftmargin=0pt,align=left]
        \item Applying $\CNOT_{1,2}$ on $\lab$ and substituting $\lab$ with $\labseq{\lab_1,\lab_1,\lab_2,\lab_3,\ldots,\lab_n}$ in Remark \ref{rem:xswap_prop} \ref{item:xswap_chain_prop} shows the first identity.
        \item The case $n=3$ can be easily verified. Suppose the statement holds for some $n-1\geq 3$. From the definition of $\XSNM^{(n)}$ and \ref{item:mod_xswap_chain_prop} we have
        that $\XSNM^{(n)}$ is producing the label set $\set{\lab_1\lab_2\lab_j\mid 2<j\leq n}$ from $\lab$. Then, applying our induction hypothesis on $\XSNM^{(n-1)}$ and the sequence $\hat{\lab}\coloneqq\labaction{\lab}{\XSCM^{(n)}}=\labseq{\lab_{1},\lab_{1}\lab_{2}(\lab_{3}\ldots\lab_{n}),\lab_{1}\lab_{2}}$ yields
        \begin{align*}
            \labaction{\hat{\lab}}{\XSNM^{(n-1)}}&=\labseq{\hat{\lab}_1,\hat{\lab}_{n-2}\hat{\lab}_{n-1},\ldots,\hat{\lab}_{3}\hat{\lab}_{4},\hat{\lab}_{2}\hat{\lab}_{3},\hat{\lab}_{2},\lab_1\lab_2}\\
            &=\labseq{\lab_1,\lab_{n-1}\lab_n,\ldots,\lab_3\lab_4,\lab_2\lab_3,\lab_1\lab_2}
        \end{align*}
        and $\XSNM^{(n)}$ is generating the label set $\{\hat{\lab}_1\hat{\lab}_i\hat{\lab}_j\mid 1<i<j\leq n-1\}=\set{\lab_1\lab_i\lab_j\mid 2<i<j\leq n}$. This shows that the above statement holds true. The expressions for the depth and size can be derived as follows: From $\size{\XSNM^{(n)}}=\size{\XSCM^{(n)}}+\size{\XSNM^{(n-1)}}$ for $n\geq 4$ and $\size{\XSNM^{(3)}}=4$ we deduce
        \begin{align*}
            \size{\XSNM^{(n)}}=\sum_{l=0}^{n-4}\size{\XSCM^{(n-l)}}+4=4+\sum_{l=0}^{n-4}2(n-l-2)+1=n^2-2n+1.
        \end{align*}
        Similarly, we have from $\depth{\XSNM^{(n)}}=4+\depth{\XSNM^{(n-1)}}$ for $n\geq 4$ and $\depth{\XSNM^{(3)}}=4$ the relation $\depth{\XSNM^{(n)}}=4n-8$.
        \item Eq.~\eqref{eq:ncs3_output} and the property that $\XSN^{(n)}_{3}$ is generating the label set $\set{\lab_1\lab_i\lab_j\mid 1< i<j\leq n}$ are derived from the relation $\labaction{\labseq{\lab_1,\lab_1\lab_2,\lab_2\lab_3,\ldots,\lab_{n-1}\lab_n}}{(\CNOT_{2,3}\cat\CXCM^{(3,n)})}=\labseq{\lab_1,\lab_1(\lab_2\ldots\lab_n)}$ and applying \ref{item:xswap_network_IIa_properties} on $\hat{\lab}\coloneqq \labseq{\lab_1,\lab_1(\lab_2\ldots\lab_n)}$.\\
        The expressions for the depth and size of $\XSN^{(n)}_3$ follow from \ref{item:xswap_network_IIa_properties}.\qedhere
    \end{enumerate}
\end{proof}
Following the previous results, we now introduce our three-body generator and its properties.
\begin{definition}[Three-body generator]
    \label{def:three-body_generator}
    Let $W_{3}^{(n)}\coloneqq\concat[s_{n-1}]{\XSN_3^{(n)}}{\overline{W_3}^{(2,n)}}$ with $W_3^{(3)}\coloneqq \XSN_3^{(3)}$ and
    \begin{equation*}
        s_n\coloneqq \begin{cases}
            -2(n-4),\quad &\text{if $n>4$}\\
            -1,\quad &\text{if $n=4$},\\
            0,\quad &\text{if $n=3$}.
        \end{cases}
    \end{equation*}
    Then, we define the circuit $$\G_3^{(n)}\coloneqq (\CXC^{(n)})^\dagger\cat W_3^{(n)}\cat \CNOT_{c_n,t_n},$$ where $(c_n,t_n)$ is $(\frac{n+1}{2}+1,\frac{n+1}{2})$ if $n$ is odd and $(\frac{n}{2},\frac{n}{2}+1)$ if $n$ is even.
\end{definition}
\begin{theorem}
    \label{thm:three-body-generator}
    The circuit $\G_3^{(n)}$ in Definition \ref{def:three-body_generator} is a three-body generator satisfying
    \begin{equation*}
        \label{eq:three-body_generator_LNN_analysis}
        \size{\G_{3}^{(n)}}=\frac{n^{3}}{3} - \frac{n}{3} \quad\text{and}\quad \depth{\G_{3}^{(n)}}=\begin{cases}
            n^{2} + 5 n - 19,\quad & n\geq 5,\\
            18, \quad & n=4,\\
            8,\quad & n=3.\\
        \end{cases}
    \end{equation*}
    Moreover, it holds
    \begin{equation*}
        \labaction{\lab}{\G_3^{(n)}}=\labseq{\lab_1,\lab_3,\ldots,\lab_{o_n}, \lab_{e_n},\ldots,\lab_4,\lab_2},
    \end{equation*}
    where $o_n$ and $e_n$ denote the largest odd number and largest even number less or equal than $n$, respectively.

\end{theorem}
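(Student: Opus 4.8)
The plan is to prove the three assertions of the theorem — the explicit output sequence, the three-body generation property, and the size/depth formulas — in that logical order, with the first two handled simultaneously by a single induction on $n$ built directly on the properties of the \TWINE{} network of type II, and the size/depth extracted afterwards from the recursive definition of $\G_3^{(n)}$.

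First I would record the ``warm-up'' step: by Remark~\ref{rem:xswap_prop}\ref{item:cx_chain_prop}, applying $(\CXC^{(n)})^\dagger$ to the start sequence $\lab=\labseq{\lab_1,\ldots,\lab_n}$ produces exactly $\labseq{\lab_1,\lab_1\lab_2,\lab_2\lab_3,\ldots,\lab_{n-1}\lab_n}$, which is precisely the input form expected by the \TWINE{} network of type II in Eq.~\eqref{eq:ncs3_output}. Hence it suffices to analyze $W_3^{(n)}\cat\CNOT_{c_n,t_n}$ on this sequence; and since, by the remark at the end of App.~\ref{sec:main_building_blocks}, the shifts in the concatenations change neither the set of generated labels nor the final output sequence, I may set all shifts to zero for the label bookkeeping, so $W_3^{(n)}=\XSN_3^{(n)}\cat\overline{W_3}^{(2,n)}$.

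The core is then an induction on $n$ with base case $n=3$ given verbatim by Theorem~\ref{thm:mod_xswap_chain_xswap_network_II}\ref{item:xswap_network_II_properties}. For the inductive step, Theorem~\ref{thm:mod_xswap_chain_xswap_network_II}\ref{item:xswap_network_II_properties} shows that $\XSN_3^{(n)}$ generates all three-body labels containing $\lab_1$ and maps the input to $\labseq{\lab_1,\lab_{n-1}\lab_n,\ldots,\lab_3\lab_4,\lab_2\lab_3,\lab_2}$; reading qubits $2,\ldots,n$ from the back, this is $\lab_2,\lab_2\lab_3,\lab_3\lab_4,\ldots,\lab_{n-1}\lab_n$, i.e. after reversing positions $2,\ldots,n$ it is exactly the valid input form of $W_3^{(n-1)}$ on the labels $\lab_2,\ldots,\lab_n$. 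By Lemma~\ref{lem:rev_circuit}, acting with $\overline{W_3}^{(2,n)}$ equals reversing positions $2,\ldots,n$, applying $W_3^{(2,n)}$, and reversing back, so the induction hypothesis gives that $\overline{W_3}^{(2,n)}$ generates all three-body labels $\lab_i\lab_j\lab_k$ with $2\le i<j<k\le n$ while qubit $1$ (carrying $\lab_1$) is untouched. The union of these with the labels from $\XSN_3^{(n)}$ is the full set of three-body labels. Tracking the output sequence through this composition, applying Lemma~\ref{lem:rev_circuit} once more for the outer reversal, and finally $\CNOT_{c_n,t_n}$ — whose control/target are chosen exactly to merge the two central entries back into single-body labels — yields the asserted $\labaction{\lab}{\G_3^{(n)}}=\labseq{\lab_1,\lab_3,\ldots,\lab_{o_n},\lab_{e_n},\ldots,\lab_4,\lab_2}$, the odd-then-even pattern being the invariant propagated by the alternating reversal in the recursion.

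Finally, for the counts I would unfold the recursion: from $\size{\XSN_3^{(n)}}=n^2-n-1$ in Eq.~\eqref{eq:size_depth_xswap_network} one gets $\size{W_3^{(n)}}=\sum_{l=3}^{n}(l^2-l-1)=\tfrac{n^3-4n}{3}$, and then $\size{\G_3^{(n)}}=(n-1)+\tfrac{n^3-4n}{3}+1=\tfrac{n^3-n}{3}$. The depth is the genuinely delicate part, and I expect it to be the main obstacle: here the shifts $s_{n-1}$ of Definition~\ref{def:three-body_generator} matter, so one must (i) verify that $\XSN_3^{(n)}\cat_{s_{n-1}}\overline{W_3}^{(2,n)}$ is well-defined, i.e. the prescribed leftward shift does not cause gates in the last moments of $\XSN_3^{(n)}$ to overlap gates in the first moments of $\overline{W_3}^{(2,n)}$, (ii) check that the shift does not push $\overline{W_3}^{(2,n)}$ entirely inside $\XSN_3^{(n)}$ so the depth is additive up to the shift, and then (iii) solve the resulting recurrence $\depth{W_3^{(n)}}=\depth{W_3^{(n-1)}}+(4n-7)-2(n-5)$ for $n\ge 6$ together with the base values $\depth{W_3^{(3)}}=5$, $\depth{W_3^{(4)}}=14$, $\depth{W_3^{(5)}}=26$, and add $\depth{(\CXC^{(n)})^\dagger}=n-1$ and the single extra moment of $\CNOT_{c_n,t_n}$, which reproduces $n^2+5n-19$ for $n\ge 5$ and the tabulated values $18$ and $8$ for $n=4,3$. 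The generation and output-sequence claims, by contrast, follow cleanly once Theorem~\ref{thm:mod_xswap_chain_xswap_network_II} and Lemma~\ref{lem:rev_circuit} are in hand.
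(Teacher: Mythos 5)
Your proposal is correct and follows essentially the same route as the paper: reduce to $W_3^{(n)}$ acting on the output of $(\CXC^{(n)})^\dagger$, induct on $n$ with base case supplied by Theorem~\ref{thm:mod_xswap_chain_xswap_network_II}~\ref{item:xswap_network_II_properties} and inductive step via Lemma~\ref{lem:rev_circuit}, then unfold the size and depth recursions exactly as you describe (the paper likewise obtains $\depth{W_3^{(n)}}=n^2+4n-19$ for $n\geq 5$ and adds $n-1$ and $1$ for the initial chain and final \CXNAME{} gate). The only difference is that the paper does not explicitly verify your well-definedness/additivity concerns (i)--(ii) for the shifted concatenation either; it simply asserts the depth recurrence.
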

\begin{proof}
    First, we show that $W_3^{(n)}$ generates all three-body labels from the sequence $\labseq{\lab_{n-1}\lab_{n},\ldots,\lab_1\lab_2,\lab_1}$ and
    \begin{equation*}
        \labaction{\labseq{\lab_1,\lab_1\lab_2,\lab_2\lab_3,\ldots,\lab_{n-1}\lab_n}}{W_3^{(n)}}=\labaction{\labseq{\lab_1,\lab_3,\ldots,\lab_{\mathrm{odd}}^{(n)}, \lab_{\mathrm{even}}^{(n)},\ldots,\lab_2}}{\CNOT_{c_n,t_n}}.
    \end{equation*}
    The case $n=3$ follows from Theorem \ref{thm:mod_xswap_chain_xswap_network_II} \ref{item:xswap_network_II_properties}. Now, suppose that the statement holds for some $n-1\geq 3$. Then, we deduce from Theorem \ref{thm:mod_xswap_chain_xswap_network_II} \ref{item:xswap_network_II_properties} again and Eq.~\eqref{eq:rev_circuit2}
    \begin{align*}
        \labaction{\labseq{\lab_1,\lab_1\lab_2,\lab_2\lab_3,\ldots,\lab_{n-1}\lab_n}}{W_3^{(n)}}&=\labaction{\labseq{\lab_1,\lab_{n-1}\lab_n,\ldots,\lab_3\lab_4,\lab_2\lab_3,\lab_2}}{\overline{W_3}^{(2,n)}}\\
        &=\overline{\labaction{\labseq{\lab_1,\lab_2,\lab_2\lab_3,\lab_3\lab_4,\ldots,\lab_{n-1}\lab_n}}{W_3^{(2,n)}}}.
    \end{align*}
    as well as $W_3^{(n)}$ is generating the label set $\set{\lab_1\lab_i\lab_j\mid 1< i<j\leq n}$. Then, applying our induction hypothesis and Lemma \ref{lem:rev_circuit} shows that $\overline{W_3}^{(2,n)}$ is generating the label set $\set{\lab_i\lab_j\lab_r \mid 2\leq i<j<r\leq n}$ from $\labseq{\lab_1,\lab_{n-1}\lab_n,\ldots,\lab_3\lab_4,\lab_2\lab_3,\lab_2}$ and therefore, $W_3^{(n)}$ is a three-body generator. Moreover, we also see that applying Eq.~\eqref{eq:rev_circuit2} and the induction hypothesis yields that right-hand side equals
    \begin{equation*}
        \overline{\labaction{\labseq{\lab_1,\lab_2,\lab_4,\ldots,\lab_{\mathrm{even}}^{(n)}, \lab_{\mathrm{odd}}^{(n)},\ldots,\lab_5,\lab_3}}{\CNOT_{c_{n-1}+1,t_{n-1}+1}^{(2,n)}}}=\labaction{\labseq{\lab_1,\lab_3,\ldots,\lab_{\mathrm{odd}}^{(n)}, \lab_{\mathrm{even}}^{(n)},\ldots,\lab_2}}{\overline{\CNOT_{c_{n-1}+1,t_{n-1}+1}}^{(2,n)}}.
    \end{equation*}
    and $W_3^{(n)}$ is generating the label set $\set{\lab_2\lab_i\lab_j\mid 2< i<j\leq n}$. Using that $\overline{\CNOT_{c_{n-1}+1,t_{n-1}+1}}^{(2,n)}=\CNOT_{c_n,t_n}$ and Remark \ref{rem:xswap_prop} \ref{item:cx_chain_prop} shows desired statement.

    Finally, we prove the statements on the depth and size of $\G_3^{(n)}$. Since $\size{W_3^{(n)}}=\size{\NCS_3^{(n)}}+\size{W_3^{(n-1)}}$, we have from Theorem \ref{thm:mod_xswap_chain_xswap_network_II} \ref{item:xswap_network_II_properties}
    \begin{equation*}
        \size{W_3^{(n)}}=\sum_{l=0}^{n-3}\size{\XSN_3^{(n-l)}}=\sum_{l=1}^{n} l^2-l-1=\frac{1}{3}n^3-\frac{4}{3}n,
    \end{equation*}
    which shows the expression for the size of $\G_3^{(n)}$. To derive the expression for the depth of $\G_3^{(n)}$, we use the relations $\depth{W_3^{(n)}}=\depth{\XSN_3^{(n)}}+s_{n-1}+\depth{W_3^{(n-1)}}$ for $n\geq 4$ and $\depth{W_3^{(5)}}=26$ as well as Theorem \ref{thm:mod_xswap_chain_xswap_network_II} \ref{item:xswap_network_II_properties} to obtain
    \begin{align*}
        \depth{W_3^{(n)}}&=\depth{W_3^{(5)}}+\sum_{l=0}^{n-6}\depth{\XSN_3^{(n-l)}}+s_{n-l-1}\\
        &=-19+\sum_{l=1}^{n}\depth{\XSN_3^{(l)}}+s_{l-1}\\
        &=-19+\sum_{l=1}^{n} 2l+3=n^2+4n-19
    \end{align*}
    for $n\geq 6$. For $n=3,4$ one can easily verify $\depth{W_3^{(3)}}=5$ and $\depth{W_3^{(4)}}=14$, implying the expression for the depth of $\G_3^{(n)}$.\qedhere
\end{proof}

\section{Generators for arbitrary $k\geq 4$}
\label{sec:k-body_generators}
Based on the previous sections, we now show that there exists a family of $k$-body generators $\G_k=(\G_k^{(n)})_{n\geq k}$ with $\mu(\G_k,\mathcal{L}_k)=2$ and $\nu(\G_k,\mathcal{L}_k)=k$ for nearest neighbor connectivity graphs for all $k\geq 2$.
For the construction of our $k$-body generator for $k\geq 4$, we first introduce a so-called \emph{clean special four-body generator}, which generates all four-body labels with one fixed special label. The definition of the clean special four-body generator is motivated by the three-body generator in App. \ref{sec:three-body_generators} and replaces \TWINE{} networks of type II by \emph{\TWINE{} networks of type III}. The main difference between these two networks is that the latter applies an additional \TWINE{} chain at beginning of the circuit which encodes the additional special label on the qubits. Additionally, the latter network incorporates a final pre-cleanup circuit that reverts the output labels to single labels. Based on the clean four-body generator, we present a constructive algorithm for establishing clean special $k$-body generators for arbitrary $k\geq 4$ which consequently allows the design of $k$-body generators for all $k\geq 4$.
Throughout the whole section, we suppose that $n\geq 4$.
The main theorem reads as follows:
\begin{theorem}
    \label{thm:kbodies_lnn}
    For all $k\geq 2$, there exists a family of $k$-body generators for nearest neighbor connectivity graphs $\G_k=(\G_k^{(n)})_{n\geq k}$ with $\size{\G_k^{(n)}}=\frac{2n^k}{k!}+\mathcal{O}(n^{k-1})$ and $\depth{\G_k^{(n)}}=\frac{2n^{k-1}}{(k-1)!}+\mathcal{O}(n^{k-2})$. As a consequence, $\mu(\G_k,\mathcal{L}_k)=2$ and $\nu(\G_k,\mathcal{L}_k)=k$, where $\mathcal{L}_k=(\mathcal{L}_k^{(n)})_{n\geq k}$ is the family of labels sets $\mathcal{L}_k^{(n)}$ containing all $k$-body labels on $n\geq k$ qubits.
\end{theorem}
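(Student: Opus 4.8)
The plan is to proceed by strong induction on $k$, where the cases $k=2$ and $k=3$ are already settled by Corollary \ref{cor:two-body_generator} and Theorem \ref{thm:three-body-generator} respectively; note $\size{\G_2^{(n)}}=n^2-1=\frac{2n^2}{2!}+\mathcal{O}(n)$ and $\size{\G_3^{(n)}}=\frac{n^3}{3}-\frac{n}{3}=\frac{2n^3}{3!}+\mathcal{O}(n^2)$, and likewise for the depths, so the base cases match the claimed asymptotics. The inductive step consists of two constructions. First, from a clean $(k-1)$-body generator $\G_{k-1}$ I build a \emph{clean special $k$-body generator} $\SG_k^{(n)}$ — a circuit generating all $k$-body labels that contain one fixed special label $\lab_{s'}$, and whose output is again a permutation of single-body labels. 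This is done by inserting, into the $(k-1)$-body generator, an extra \CXNAME{} encoding $\lab_{s'}$ onto each qubit currently carrying a special label $\lab_s$ (thereby promoting every generated $(k-1)$-body label $\lab_{s}\cdots$ to the $k$-body label $\lab_{s'}\lab_{s}\cdots$), together with \SWNAME{} gates to transpose $\lab_{s'}$ and $\lab_s$ between consecutive special-generator blocks, followed by an $\mathcal{O}(n)$-depth, $\mathcal{O}(n^{k-1})$-size cleanup circuit (a sorting/\SWNAME{} network, Remark \ref{rem:swap_prop}) restoring single-body labels. Second, I concatenate $n-k+1$ copies of $\SG_k^{(n)}$, each with a distinct fresh special label and each ignoring all previously fixed special labels, down to the last copy acting on only $k$ qubits; this yields $\G_k^{(n)}$, and the recursive structure immediately shows $\G_k^{(n)}$ is also a generator of all $(k-p)$-body labels for $p<k$.

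For the \textbf{size} bookkeeping: each clean special $k$-body generator $\SG_k^{(n)}$ inherits the $\XS$-gate skeleton of $\G_{k-1}^{(n)}$, whose size is $\frac{2n^{k-1}}{(k-1)!}+\mathcal{O}(n^{k-2})$ by the induction hypothesis, plus only $\mathcal{O}(n^{k-2})$ extra \CXNAME{} gates (one per special-generator sub-block, plus the cleanup), so $\size{\SG_k^{(n)}}=\frac{2n^{k-1}}{(k-1)!}+\mathcal{O}(n^{k-2})$. Summing the $n-k+1$ shrinking copies, $\size{\G_k^{(n)}}=\sum_{m=k}^{n}\left(\frac{2m^{k-1}}{(k-1)!}+\mathcal{O}(m^{k-2})\right)=\frac{2n^{k}}{k!}+\mathcal{O}(n^{k-1})$, using $\sum_{m\le n} m^{k-1}=\frac{n^k}{k}+\mathcal{O}(n^{k-1})$. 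This also confirms the $\mu=2$ claim since $\vert\mathcal{L}_k^{(n)}\vert=\binom{n}{k}=\frac{n^k}{k!}+\mathcal{O}(n^{k-1})$, so $\mu_n(\G_k^{(n)},\mathcal{L}_k^{(n)})\to 2$.

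For the \textbf{depth}: within each $\SG_k^{(n)}$ the dominant cost is a chain of clean special $(k-1)$-body generators, each of depth $\frac{2n^{k-2}}{(k-2)!}+\mathcal{O}(n^{k-3})$ by induction; there are $\mathcal{O}(n)$ of them, and since no further depth optimization is applied beyond $k=4$ they are concatenated essentially sequentially (possibly with constant or $\mathcal{O}(1)$-per-block shifts via Definition \ref{def:shifted_concatenation}), giving $\depth{\SG_k^{(n)}}=\frac{2n^{k-1}}{(k-1)!}+\mathcal{O}(n^{k-2})$; the $n-k+1$ copies forming $\G_k^{(n)}$ likewise concatenate with sub-leading overlap, so $\depth{\G_k^{(n)}}=\frac{2n^{k-1}}{(k-1)!}+\mathcal{O}(n^{k-2})$, hence $\nu(\G_k,\mathcal{L}_k)=k$. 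I would invoke Lemma \ref{lem:rev_circuit} and Lemma \ref{lem:adjoint_circuit} wherever reversed/adjoint copies are concatenated to keep input and output label sequences compatible between blocks.

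The \textbf{main obstacle} is verifying that the label sequences stay compatible across all the splicing — i.e.\ that the output label sequence of each clean special $(k-1)$-body generator (after the \SWNAME{} transposition of $\lab_s,\lab_{s'}$) is exactly a valid input sequence for the next one, and that after the $n-k+1$ nested copies the accumulated special labels have been isolated on the correct qubits so the final cleanup genuinely returns to single-body labels. This is a careful but routine induction on the label-tracking identities (Remarks \ref{rem:xswap_prop}, \ref{rem:swap_prop} and the type-II/type-III network identities); the only genuinely delicate point is confirming that the $\mathcal{O}(n^{k-1})$-size, $\mathcal{O}(n^{k-2})$-depth overhead of the extra \CXNAME{}s, \SWNAME{}s and cleanup is indeed sub-leading at every level of the recursion, which follows since at recursion depth $k$ we add $\mathcal{O}(n)$ blocks each contributing only $\mathcal{O}(n^{k-2})$ extra gates relative to the inherited $\XS$-skeleton. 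Full details are deferred to the definitions and theorems App.~\ref{def:clean_special_k-body_generator}--\ref{thm:k-body_generator_from_special_k-body_generator}.
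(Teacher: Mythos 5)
Your proposal follows essentially the same route as the paper: base cases $k=2,3$ from the explicit two- and three-body constructions, clean special $k$-body generators obtained by chaining clean special $(k-1)$-body generators with an extra special-label \CXNAME{} and interleaved \SWNAME{}s plus an $\mathcal{O}(n)$ cleanup, and the full generator as a concatenation of $n-k+1$ copies on shrinking qubit sets, with the $1/k$ factor coming from the shrinking sum (the paper anchors the recursion at an explicitly constructed clean special four-body generator, Theorem \ref{thm:clean_special_four-body_generator}, since the three-body building blocks are not clean). One small imprecision: in your depth estimate for $\SG_k^{(n)}$ the $\mathcal{O}(n)$ inner sub-blocks also act on shrinking qubit sets, which is where the factor $1/(k-1)$ in $\frac{2n^{k-1}}{(k-1)!}$ comes from --- sequentially concatenating $\Theta(n)$ blocks each of full depth $\frac{2n^{k-2}}{(k-2)!}$ would instead give $\frac{2n^{k-1}}{(k-2)!}$.
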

As already mentioned, for the proof of Theorem \ref{thm:kbodies_lnn} we make us of so-called \emph{special $k$-body generators} which are defined as follows.
\begin{definition}[Special $k$-body generator]
    \label{def:special_k-body_generator}
    Let $C^{(n)}$ be a circuit with $n$ qubits, $k\geq 2$ and $\lab=\labseq{\lab_1,\ldots,\lab_{n}}$ with some labels $\lab_1,\ldots,\lab_n$. We say that $C^{(n)}$ is a \emph{special $k$-body generator} if $C^{(n)}$ generates all $k$-body labels $$\set{\lab_1\lab_{i_1}\lab_{i_2}\ldots\lab_{i_{k-1}}\lab_n\mid 1<i_1<\ldots<i_{k-1}\leq n}$$ from $\lab$ and $\hat{\lab_1}=\lab_1$ where $\hat{\lab}\coloneqq\labaction{\lab}{C^{(n)}}$. Furthermore, we call $\lab_1$ the \emph{special label}.
\end{definition}
Next, we introduce the \emph{\TWINE{} network of type III} as well as the \emph{post-cleanup circuit} which are one of the key-ingredients for the proof of Theorem \ref{thm:kbodies_lnn}.
\begin{definition}[\TWINE{} network of type III, post-cleanup circuit]
    \label{def:xswap_network_III_cleanup}
    \begin{enumerate}
        \item The \emph{\TWINE{} network of type III} is defined as $\XSN_4^{(n)}\coloneqq \concat[-2(n-3)]{\CXCM^{(3,n)}}{\concat[5]{\XSC^{(n)}}{\XSNM^{(1,n-1)}}}$.
        \item The post-cleanup circuit $\CL^{(n)}$ is defined as
        \begin{equation*}
            \CL^{(n)}\coloneqq\CNOT_{c_n,t_n}\cat\mathrm{OSC}^{(m_n,m_n+2)}\cat \XSC^{(m_n+2,n)}\cat\overline{\SWC}^{(m_n,n)}\cat\overline{\XSC}^{(1,m_n)},
        \end{equation*}
        where $m_n\coloneqq\lceil\frac{n-1}{2}\rceil$ and $(c_n,t_n)$ is $(m_n+1,m_n+2)$ if $n$ is odd and $(m_n+1,m_n)$ if $n$ is even. Here, $\mathrm{OSC}^{(m_n,m_n+2)}$ is defined as the circuit which is set to $\SWC^{(m_n,m_n+2)}$ if $n$ is odd and the empty sequence if $n$ is even.
    \end{enumerate}
\end{definition}
\begin{theorem}
    \label{thm:xswap_network_III_cleanup}
    \begin{enumerate}
        \item \label{item:xswap_network_III_properties} For the \TWINE{} network of type III we have that
    \begin{equation}
        \label{eq:xswap_network_III_output}
        \labaction{ \labseq{\lab_1,\lab_2,\lab_2\lab_3,\lab_3\lab_4,\ldots,\lab_{n-1}\lab_n}}{\XSN_4^{(n)} }= \labseq{\lab_1\lab_2,\lab_{n-1}\lab_n,\lab_{n-2}\lab_{n-1},\ldots,\lab_3\lab_4,\lab_3,\lab_1}.
    \end{equation}
    and $\XSN_4^{(n)}$ generates the labels $\set{\lab_1\lab_2\lab_i\lab_j\mid 2< i<j\leq n}$ from $\labseq{\lab_1,\lab_2,\lab_2\lab_3,\lab_3\lab_4,\ldots,\lab_{n-1}\lab_n}$. Furthermore, it holds
    \begin{equation}
        \label{eq:size_depth_xswap_network_III}
        \size{\XSN_4^{(n)}}=n^2-n-1\quad\text{and}\quad\depth{\XSN_4^{(n)}}=4n-7
    \end{equation}
    \item \label{item:post-clean_up_properties} Define the sequence of labels $\lab_{\mathrm{out}}\coloneqq\labseq{\lab_1\lab_2,\lab_3\lab_4,\lab_3,\lab_1}$ for $n=4$ and
    \begin{equation*}
        \lab_{\mathrm{out}}\coloneqq\labseq{\lab_1(\lab_2,\lab_4,\ldots\lab_{e_n-2}),\hat{\lab}_1,\hat{\lab}_2,\hat{\lab}_3,(\lab_{o_n-2},\lab_{o_n-4},\ldots,\lab_3)\lab_1}
    \end{equation*}
    where $(\hat{\lab}_1,\hat{\lab}_2,\hat{\lab}_3)\coloneqq(\lab_{n-1}\lab_{n},\lab_{n-1},\lab_1)$ for even $n$ and $(\hat{\lab}_1,\hat{\lab}_2,\hat{\lab}_3)\coloneqq(\lab_{1},\lab_{n-1},\lab_{n-1}\lab_n)$ for odd $n$. Here, $o_n$ and $e_n$ denote again the largest odd number and largest even number less or equal than $n$, respectively. Then, the post-cleanup circuit $\CL^{(n)}$ satisfies
    \begin{equation*}
        \labaction{\lab_{\mathrm{out}}}{\CL^{(n)}}=\labseq{\lab_1,\lab_2,\lab_4,\ldots,\lab_{e_n}, \lab_{o_n},\ldots,\lab_5,\lab_3}.
    \end{equation*}
    Furthermore, it holds
    \begin{equation}
        \label{eq:size_depth_post-cleanup}
        \size{\CL^{(n)}}=\depth{\CL^{(n)}}=5n-3m_n-5+6(n\mod 2)=\frac{7}{2}n-5\left(-\frac{1}{2}\right)^{n \mod 2}.
    \end{equation}
    \end{enumerate}
\end{theorem}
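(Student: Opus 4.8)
The plan is to prove the two parts of Theorem~\ref{thm:xswap_network_III_cleanup} separately, in each case by pushing the stated input label sequence through the defining factors of the circuit using the label-action statements of the previous sections, and then summing up gate counts. Throughout I will use the remark from the end of Appendix~\ref{sec:main_building_blocks} that shifted and unshifted concatenations of the same sub-circuits generate the same labels and the same output sequence, so every label-tracking computation may be carried out on the shift-free versions; the shifts re-enter only in the depth bookkeeping for part (i). (Recall $n\geq 4$ throughout.)

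For (i), I would evaluate $\CXCM^{(3,n)}\cat\XSC^{(n)}\cat\XSNM^{(1,n-1)}$ on $\labseq{\lab_1,\lab_2,\lab_2\lab_3,\ldots,\lab_{n-1}\lab_n}$ in three steps. Remark~\ref{rem:xswap_prop}\,\ref{item:cx_chain_prop} (for the modified \CXNAME{} chain, applied on qubits $3,\ldots,n$) telescopes the adjacent pairs, giving $\labseq{\lab_1,\lab_2,\lab_2\lab_3,\lab_2\lab_4,\ldots,\lab_2\lab_n}$; Remark~\ref{rem:xswap_prop}\,\ref{item:xswap_chain_prop} (the \TWINE{} chain on all $n$ qubits) dresses every entry with $\lab_1$ and deposits $\lab_1$ on qubit $n$, yielding $\labseq{\lab_1\lab_2,\lab_1\lab_2\lab_3,\ldots,\lab_1\lab_2\lab_n,\lab_1}$; and $\XSNM^{(1,n-1)}$ then acts on qubits $1,\ldots,n-1$ with $\mu_1=\lab_1\lab_2$ and $\mu_j=\lab_1\lab_2\lab_{j+1}$, so Theorem~\ref{thm:mod_xswap_chain_xswap_network_II}\,\ref{item:xswap_network_IIa_properties} produces the output $\labseq{\mu_1,\mu_{n-2}\mu_{n-1},\ldots,\mu_2\mu_3,\mu_1\mu_2}$ on those qubits together with the generated set $\set{\mu_1\mu_i\mu_j\mid 1<i<j\leq n-1}$. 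Substituting the $\mu$'s, each $\mu_k\mu_{k+1}$ collapses to $\lab_{k+1}\lab_{k+2}$, $\mu_1\mu_2$ to $\lab_3$, and $\mu_1\mu_i\mu_j$ to $\lab_1\lab_2\lab_{i+1}\lab_{j+1}$; together with the untouched qubit $n$ this is exactly Eq.~\eqref{eq:xswap_network_III_output} and the label set $\set{\lab_1\lab_2\lab_i\lab_j\mid 2<i<j\leq n}$. The size in Eq.~\eqref{eq:size_depth_xswap_network_III} is the shift-independent sum $\size{\CXCM^{(3,n)}}+\size{\XSC^{(n)}}+\size{\XSNM^{(1,n-1)}}=(n-3)+2(n-1)+(n-2)^2=n^2-n-1$, using Definition~\ref{def:dx_cx_chains} and Eq.~\eqref{eq:size_depth_xswap_network_IIa}. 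For the depth I would mimic the argument given for $\XSN_3^{(n)}$ in Theorem~\ref{thm:mod_xswap_chain_xswap_network_II}\,\ref{item:xswap_network_II_properties}: the modified \CXNAME{} chain is deliberately stretched out, with ever-growing gaps between its successive gates, and the shifts $-2(n-3)$ and $5$ are chosen precisely so that it and the \TWINE{} chain slot into the gaps of $\XSNM^{(1,n-1)}$ without any gate collision, so $\depth{\XSN_4^{(n)}}$ equals $\depth{\XSNM^{(1,n-1)}}$ up to a constant, namely $4n-7$ (the same value as $\depth{\XSN_3^{(n)}}$). I expect verifying collision-freeness and pinning down the additive constant to be the fiddliest part of (i).

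For (ii), I would split on the parity of $n$ and track $\lab_{\mathrm{out}}$ through the five factors of $\CL^{(n)}$ from left to right: the leading $\CNOT_{c_n,t_n}$ and, for odd $n$, the \SWNAME{} triple $\mathrm{OSC}^{(m_n,m_n+2)}$ rearrange the three middle entries $\hat{\lab}_1,\hat{\lab}_2,\hat{\lab}_3$; the \TWINE{} chain $\XSC^{(m_n+2,n)}$ strips the $\lab_1$-dressing off the right half via Remark~\ref{rem:xswap_prop}\,\ref{item:xswap_chain_prop}; and the reversed chains $\overline{\SWC}^{(m_n,n)}$ and $\overline{\XSC}^{(1,m_n)}$ are handled by converting them, through Lemma~\ref{lem:rev_circuit}, into the forward \SWNAME{} chain (Remark~\ref{rem:swap_prop}\,\ref{item:swap_chain_prop}) and forward \TWINE{} chain acting on the order-reversed sub-sequences, which re-sort the two halves and remove the remaining $\lab_1$-dressing on the left; collecting terms then yields $\labseq{\lab_1,\lab_2,\lab_4,\ldots,\lab_{e_n},\lab_{o_n},\ldots,\lab_5,\lab_3}$. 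For the count, each of the five factors is a chain of \CXNAME{}, \DXNAME{} or \SWNAME{} gates executed strictly sequentially, so its size equals its depth, and these are preserved by reversal and additive under unshifted concatenation; summing $1$, $6(n\mod 2)$, $2(n-m_n-2)$, $3(n-m_n)$ and $2(m_n-1)$ gives $5n-3m_n-5+6(n\mod 2)$, and inserting $m_n=\lceil(n-1)/2\rceil$ produces the closed form $\frac{7}{2}n-5\left(-\frac{1}{2}\right)^{n\mod 2}$ of Eq.~\eqref{eq:size_depth_post-cleanup}. The only real hazard in (ii) is clerical: keeping the even/odd index ranges and the many $\pm 1$ offsets consistent throughout the substitution.
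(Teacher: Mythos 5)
Your proposal is correct and follows essentially the same route as the paper: factor $\XSN_4^{(n)}$ into $\CXCM^{(3,n)}$, $\XSC^{(n)}$ and $\XSNM^{(1,n-1)}$, push the input sequence through using Remark~\ref{rem:xswap_prop} and Theorem~\ref{thm:mod_xswap_chain_xswap_network_II}\,\ref{item:xswap_network_IIa_properties}, and handle the post-cleanup circuit factor by factor via Remarks~\ref{rem:xswap_prop}, \ref{rem:swap_prop} and Lemma~\ref{lem:rev_circuit} (the paper dismisses part (ii) as ``easily verified''; your sketch is the verification, and your gate-count sum $(n-3)+2(n-1)+(n-2)^2=n^2-n-1$ matches the stated formula). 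Both you and the paper assert the depth $4n-7$ with only a plausibility argument about the shifts slotting the chains into $\XSNM^{(1,n-1)}$ without collision, so you are on equal footing there.
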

\begin{proof}
    \begin{enumerate}[wide=\parindent,leftmargin=0pt,align=left]
        \item From Remark \ref{rem:xswap_prop} \ref{item:xswap_chain_prop} and \ref{item:cx_chain_prop} we deduce that
        \begin{align*}
            \labaction{ \labseq{\lab_1,\lab_2,\lab_2\lab_3,\lab_3\lab_4,\ldots,\lab_{n-1}\lab_n}}{\XSN_4^{(n)} }&=\labaction{\labseq{\lab_1,\lab_2,\lab_2(\lab_3\ldots\lab_n)}}{\left(\concat{\XSC^{(n)}}{\XSNM^{(1,n-1)}}\right)}\\
            &=\labaction{\labseq{\lab_1\lab_2,\lab_1\lab_2(\lab_3\ldots\lab_n),\lab_1}}{\XSNM^{(1,n-1)}}.
        \end{align*}
        Therefore, applying Theorem \ref{thm:mod_xswap_chain_xswap_network_II} \ref{item:xswap_network_IIa_properties} on the label sequence $\labseq{\lab_1\lab_2,\lab_1\lab_2(\lab_3\ldots\lab_n),\lab_1}$ proves the first statement in \ref{item:xswap_network_III_properties}. Moreover, using the expressions for the size and depth in Theorem \ref{thm:mod_xswap_chain_xswap_network_II} \ref{item:xswap_network_IIa_properties} we deduce $\size{\XSN_4^{(n)}}=3n-4+\size{\XSNM^{(1,n-1)}}=n^2-n$ as well as $\depth{\XSN_4^{(n)}}=5+\depth{\XSNM^{(1,n-1)}}=4n-7$.
        \item Using Remark \ref{rem:xswap_prop} \ref{item:xswap_chain_prop},  Remark \ref{rem:swap_prop} \ref{item:swap_chain_prop} and Lemma \ref{lem:rev_circuit}, the statements in \ref{item:post-clean_up_properties} can be easily verified.\qedhere
    \end{enumerate}
\end{proof}
The definition of the \TWINE{} network of type III and the post-cleanup circuit leads us immediately to the definition of the clean special four-body generator. 
\begin{definition}[Clean special four-body generator]
    \label{def:clean_special_four-body_generator}
    Let $W_{4}^{(n)}\coloneqq\concat[s_{n-1}]{\XSN_4^{(n)}}{\overline{W}_4^{(2,n)}}$ with $W_4^{(4)}\coloneqq \XSN_4^{(4)}$ where and $s_n$ is defined as in Definition \ref{def:three-body_generator}.
    Then, we define the circuit [see Fig.~\ref{fig:4body_gen}(a)] $$\SG_4^{(n)}\coloneqq (\CXC^{(2,n)})^\dagger\cat W_4^{(n)}\cat \CL^{(n)}.$$
\end{definition}
\begin{theorem}
    \label{thm:clean_special_four-body_generator}
    The circuit $\SG_4^{(n)}$ in Definition \ref{def:clean_special_four-body_generator} is a clean special four-body generator satisfying
    \begin{align*}
        \size{\SG_{4}^{(n)}}=\frac{1}{3}n^{3} + \frac{19 }{6}n - 7-5\left(-\frac{1}{2}\right)^{n \mod 2}\quad\text{and}\\ \depth{\SG_{4}^{(n)}}=\begin{cases}
            n^2 + \frac{17}{2}n - 26-5\left(-\frac{1}{2}\right)^{n \mod 2},\quad & n\geq 5,\\
            20, \quad & n=4.\\
        \end{cases}
    \end{align*}
    Moreover, it holds
    \begin{equation}
        \label{eq:clean_special_four-body_output}
        \labaction{\lab}{\SG_4}=\labseq{\lab_1,\lab_2,\lab_4,\ldots,\lab_{e_n}, \lab_{o_n},\ldots,\lab_5,\lab_3}.
    \end{equation}
\end{theorem}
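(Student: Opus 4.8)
The plan is to verify all three claims—that $\SG_4^{(n)}$ is a clean special four-body generator, the exact size and depth formulas, and the output label sequence—by induction on $n$, piggy-backing entirely on the structural results already established for $\XSN_4^{(n)}$ (Theorem \ref{thm:xswap_network_III_cleanup} \ref{item:xswap_network_III_properties}), the post-cleanup circuit $\CL^{(n)}$ (Theorem \ref{thm:xswap_network_III_cleanup} \ref{item:post-clean_up_properties}), the \CXNAME{} chain (Remark \ref{rem:xswap_prop} \ref{item:cx_chain_prop}), and the reversal lemma (Lemma \ref{lem:rev_circuit}). The key intermediate object is $W_4^{(n)}$, and I would first prove by induction that $W_4^{(n)}$ generates all four-body labels $\set{\lab_1\lab_i\lab_j\lab_r\mid 1<i<j<r\leq n}$ from the sequence $\labseq{\lab_1,\lab_2,\lab_2\lab_3,\lab_3\lab_4,\ldots,\lab_{n-1}\lab_n}$, and that the output sequence $\labaction{\labseq{\lab_1,\lab_2,\lab_2\lab_3,\ldots,\lab_{n-1}\lab_n}}{W_4^{(n)}}$ equals exactly the sequence $\lab_{\mathrm{out}}$ defined in Theorem \ref{thm:xswap_network_III_cleanup} \ref{item:post-clean_up_properties}. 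This mirrors the proof of Theorem \ref{thm:three-body-generator} for $W_3^{(n)}$ almost verbatim: the base case $n=4$ is $W_4^{(4)}=\XSN_4^{(4)}$, handled directly by Theorem \ref{thm:xswap_network_III_cleanup} \ref{item:xswap_network_III_properties}; for the inductive step, $\XSN_4^{(n)}$ produces the labels $\set{\lab_1\lab_2\lab_i\lab_j\mid 2<i<j\leq n}$ and leaves an output sequence whose tail (on qubits $2,\ldots,n$) again has the required ``$\lab_1,\lab_2,\lab_2\lab_3,\ldots$''-structure after applying the reversal, so $\overline{W}_4^{(2,n)}$ (via Lemma \ref{lem:rev_circuit}) generates the labels $\set{\lab_i\lab_j\lab_r\lab_s\mid 2\leq i<j<r<s\leq n}\cup\set{\lab_2\lab_i\lab_j\lab_r\mid 2<i<j<r\leq n}$, and together these exhaust all four-body labels containing $\lab_1$.

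Second, once $W_4^{(n)}$ is under control, the final statements follow by composition. The input sequence to $\SG_4^{(n)}$ is $\lab=\labseq{\lab_1,\ldots,\lab_n}$; applying $(\CXC^{(2,n)})^\dagger$ gives $\labseq{\lab_1,\lab_2,\lab_2\lab_3,\ldots,\lab_{n-1}\lab_n}$ by Remark \ref{rem:xswap_prop} \ref{item:cx_chain_prop} (note the adjoint \CXNAME{} chain on qubits $2$ through $n$ leaves $\lab_1$ untouched). This is exactly the input on which $W_4^{(n)}$ was analyzed, so $W_4^{(n)}$ generates all four-body labels containing $\lab_1$ and outputs $\lab_{\mathrm{out}}$; then $\CL^{(n)}$ maps $\lab_{\mathrm{out}}$ to $\labseq{\lab_1,\lab_2,\lab_4,\ldots,\lab_{e_n},\lab_{o_n},\ldots,\lab_5,\lab_3}$ by Theorem \ref{thm:xswap_network_III_cleanup} \ref{item:post-clean_up_properties}. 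Since this is a permutation of the input and the generated label set is precisely the special four-body set with special label $\lab_1$, and $\CL^{(n)}$ acts as the identity on qubit $1$ (one checks $\lab_1$ is fixed throughout $\CL^{(n)}$—it appears only as a passive control in the \SWNAME{}/\TWINE{} reversals), $\SG_4^{(n)}$ is a clean special four-body generator, proving Eq.~\eqref{eq:clean_special_four-body_output}.

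Third, the size and depth formulas are pure bookkeeping. From $\size{W_4^{(n)}}=\size{\XSN_4^{(n)}}+\size{W_4^{(n-1)}}$ and $\size{\XSN_4^{(n)}}=n^2-n$ (Theorem \ref{thm:xswap_network_III_cleanup}), one telescopes $\size{W_4^{(n)}}=\sum_{l=4}^{n}(l^2-l)=\tfrac13 n^3-\tfrac43 n - (\text{const for the $l\leq 3$ tail})$, and adding $\size{(\CXC^{(2,n)})^\dagger}=n-2$ and $\size{\CL^{(n)}}=\tfrac72 n - 5(-\tfrac12)^{n\bmod 2}$ gives $\size{\SG_4^{(n)}}=\tfrac13 n^3 + \tfrac{19}{6}n - 7 - 5(-\tfrac12)^{n\bmod 2}$. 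For the depth, I would use $\depth{W_4^{(n)}}=\depth{\XSN_4^{(n)}}+s_{n-1}+\depth{W_4^{(n-1)}}$ with $\depth{\XSN_4^{(n)}}=4n-7$ and the same shift sequence $s_n$ as in Definition \ref{def:three-body_generator}, so the telescoped sum $\sum(2l+1)$-type series yields $\depth{W_4^{(n)}}=n^2+4n-(\text{const})$ for $n\geq 5$, then add the sequential contributions $\depth{(\CXC^{(2,n)})^\dagger}=n-2$ and $\depth{\CL^{(n)}}=\tfrac72 n - 5(-\tfrac12)^{n\bmod 2}$, taking care of any overlap in the shifted concatenation, to arrive at $n^2+\tfrac{17}{2}n-26-5(-\tfrac12)^{n\bmod 2}$; the small cases $n=4$ (and the threshold $n=5$) are checked by hand. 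I expect the main obstacle to be purely clerical: tracking the parity-dependent ($n\bmod 2$) terms consistently through the reversals in $W_4^{(n)}$ and matching the exact constants in the depth recursion (in particular pinning down the correct base value $\depth{W_4^{(5)}}$ and the boundary between the $n=4$ and $n\geq 5$ regimes), since the generator-property argument itself is a direct transcription of the three-body case.
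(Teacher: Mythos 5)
Your overall plan coincides with the paper's proof: establish by induction (exactly as in the proof of Theorem \ref{thm:three-body-generator}) that $W_4^{(n)}$ generates $\set{\lab_1\lab_{i_1}\lab_{i_2}\lab_{i_3}\mid 1<i_1<i_2<i_3\leq n}$ from $\labseq{\lab_1,\lab_2,\lab_2\lab_3,\ldots,\lab_{n-1}\lab_n}$ with output $\lab_{\mathrm{out}}$, then compose with $(\CXC^{(2,n)})^\dagger$ in front and $\CL^{(n)}$ behind via Theorem \ref{thm:xswap_network_III_cleanup}, and finally do the size/depth bookkeeping. For the bookkeeping the paper takes a small shortcut you could adopt: since $\size{\XSN_4^{(l)}}=\size{\XSN_3^{(l)}}$ and likewise for depth, one has $\size{W_4^{(n)}}=\size{W_3^{(n)}}-\size{\XSN_3^{(3)}}=\size{W_3^{(n)}}-5$ (and the same for depth), so the telescoping from Theorem \ref{thm:three-body-generator} can be reused verbatim rather than recomputed.

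One step in your inductive argument is misstated and would fail as written. You claim that $\overline{W}_4^{(2,n)}$ generates $\set{\lab_i\lab_j\lab_r\lab_s\mid 2\leq i<j<r<s\leq n}\cup\set{\lab_2\lab_i\lab_j\lab_r\mid 2<i<j<r\leq n}$ and that "together these exhaust all four-body labels containing $\lab_1$." The first of these sets consists of four-body labels that do \emph{not} contain $\lab_1$, and those are never generated by $W_4^{(n)}$ — this is precisely why $\G_4^{(n)}$ must concatenate $n-3$ clean special generators. The correct inductive step is: the output of $\XSN_4^{(n)}$, restricted to qubits $2,\ldots,n$ and read in reverse, again has the form $\labseq{\mu_1,\mu_2,\mu_2\mu_3,\ldots}$ with $\mu_1=\lab_1$ and $\mu_j=\lab_{j+1}$ for $j\geq 2$; hence by the induction hypothesis and Lemma \ref{lem:rev_circuit} the circuit $\overline{W}_4^{(2,n)}$ generates $\set{\lab_1\lab_a\lab_b\lab_c\mid 2<a<b<c\leq n}$, which together with $\set{\lab_1\lab_2\lab_i\lab_j\mid 2<i<j\leq n}$ from $\XSN_4^{(n)}$ exhausts all four-body labels containing the special label $\lab_1$. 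Two further minor slips: you quote $\size{\XSN_4^{(n)}}=n^2-n$ where the theorem statement gives $n^2-n-1$ (with the latter your sum telescopes to $\tfrac13 n^3-\tfrac43 n-5$, which is what makes the constant $-7$ come out), and $\CL^{(n)}$ does act nontrivially on qubit $1$ (via $\overline{\XSC}^{(1,m_n)}$); the clean-special property only requires that the \emph{final} label on qubit $1$ is $\lab_1$, which Theorem \ref{thm:xswap_network_III_cleanup} \ref{item:post-clean_up_properties} provides.
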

\begin{proof}
    Using the same steps as in the proof of Theorem \ref{thm:three-body-generator} and applying Theorem \ref{thm:xswap_network_III_cleanup} \ref{item:xswap_network_III_properties}, it can be verified that $W_4^{(n)}$ generates the labels $\set{\lab_1\lab_{i_1}\lab_{i_2}\lab_{i_3}\mid 1<i_1<i_2<i_3\leq n}$ from the sequence $\labseq{\lab_1,\lab_2,\lab_2\lab_3,\lab_3\lab_4,\ldots,\lab_{n-1}\lab_n}$ and
    \begin{equation*}
        \labaction{\labseq{\lab_1,\lab_2,\lab_2\lab_3,\lab_3\lab_4,\ldots,\lab_{n-1}\lab_n}}{W_4^{(n)}}=\lab_{\mathrm{out}},
    \end{equation*}
    where $\lab_{\mathrm{out}}$ is defined as in Theorem \ref{thm:xswap_network_III_cleanup} \ref{item:post-clean_up_properties}. Therefore, using that $\labaction{\lab}{(\CXC^{(2,n)})^\dagger}=\labseq{\lab_1,\lab_2,\lab_2\lab_3,\lab_3\lab_4,\ldots,\lab_{n-1}\lab_n}$ and applying Theorem \ref{thm:xswap_network_III_cleanup} \ref{item:post-clean_up_properties} shows that $\SG_4^{(n)}$ is a clean special four-body generator satisfying Eq.~\eqref{eq:clean_special_four-body_output}. Moreover, since $\size{\XSN_3^{(n)}}=\size{\XSN_4^{(n)}}$ as well as $\depth{\XSN_3^{(n)}}=\depth{\XSN_4^{(n)}}$ we have $$\size{W_4^{(n)}}=\size{W_3^{(n)}}-\size{\XSN_3^{(3)}}=\size{W_3^{(n)}}-5$$ and similarly $$\depth{W_4^{(n)}}=\depth{W_3^{(n)}}-5.$$ Using the expression for $\depth{W_3^{(n)}}$ in the proof of Theorem \ref{thm:three-body-generator} as well as Eq.~\eqref{eq:size_depth_post-cleanup} imply the remaining identities.\qedhere
\end{proof}
Subsequently to the clean special four-body generator in Definition \ref{def:clean_special_four-body_generator}, we now show how arbitrary clean special $k$-body generator with $k\geq 4$ can be constructed.
\begin{definition}[Clean special $k$-body generator]
    \label{def:clean_special_k-body_generator}
    Let $k_0\geq 2$ and suppose that $\SG_{k_0}^{(n)}$ is a clean special $k_0$-body generator. Then, we define for $k>k_0$
    \begin{equation}
        \label{eq:clean_special_k-body_generator}
        \SG_k^{(n)}\coloneqq W_k^{(n)}\cat \CNOT_{n-k+1,n-k+2}\cat\overline{\XSC}^{(1,n-k+1)}
    \end{equation}
    where $W_k^{(n)}\coloneqq \CNOT_{1,2}\cat \SG_{k-1}^{(2,n)}\cat\SW_{1,2}\cat W_k^{(2,n)}$ with $W_k^{(k)}\coloneqq \CNOT_{1,2}\cat\SG_{k-1}^{(2,k)}$. Fig.~\ref{fig:4body_gen}(c) illustrates the circuit $\SG_k^{(n)}$.
\end{definition}
\begin{theorem}
    \label{thm:clean_special_k-body_generator}
    Under the assumptions of Definition \ref{def:clean_special_k-body_generator}, the circuit $\SG_k^{(n)}$ is a clean special $k$-body generator for each $k\geq k_0$. Moreover, if the clean special $k_0$-body generator satisfies 
    \begin{equation*}
        \label{eq:assump_clean_special_k0-body_generator}
        \size{\SG_{k_0}^{(n)}}=c_1 n^{k_0-1}+\mathcal{O}(n^{k_0-2})\quad\text{and}\quad \depth{\SG_{k_0}^{(n)}}=c_2n^{k_0-2}+\mathcal{O}(n^{k_0-3})
    \end{equation*}
    where $c_1,c_2>0$ are some constants, then for every $k\geq k_0$ it holds
    \begin{equation*}
        \size{\SG_{k}^{(n)}}=c_1\frac{(k_0-1)!}{(k-1)!}n^{k-1}+\mathcal{O}(n^{k-2})
    \quad and \quad
        \depth{\SG_{k}^{(n)}}=c_2\frac{(k_0-2)!}{(k-2)!}n^{k-2}+\mathcal{O}(n^{k-3}).
    \end{equation*}
\end{theorem}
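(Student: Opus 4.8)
The statement is an induction on $k$, starting from the base case $k=k_0$ (which is a hypothesis). So the skeleton of the proof is: (i) verify that $\SG_k^{(n)}$ in Definition~\ref{def:clean_special_k-body_generator} is indeed a clean special $k$-body generator, assuming $\SG_{k-1}^{(n)}$ is one; (ii) extract the recursion relations for $\size{\SG_k^{(n)}}$ and $\depth{\SG_k^{(n)}}$ in terms of $\size{\SG_{k-1}^{(\cdot)}}$ and $\depth{\SG_{k-1}^{(\cdot)}}$; (iii) solve those recursions to get the claimed closed forms in leading order. I would carry these out in that order, treating (i) as the conceptual core and (ii)--(iii) as bookkeeping.

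For part (i), I would first analyze the inner building block $W_k^{(n)} = \CNOT_{1,2}\cat\SG_{k-1}^{(2,n)}\cat\SW_{1,2}\cat W_k^{(2,n)}$. The idea, already sketched in Sec.~\ref{sec:special_4body_generator}, is: the leading $\CNOT_{1,2}$ encodes the new special label $\lab_{s'}$ (sitting on qubit $1$) onto qubit $2$, which carries the old special label $\lab_s$; the circuit $\SG_{k-1}^{(2,n)}$ is by induction hypothesis a clean special $(k-1)$-body generator acting on qubits $2,\dots,n$, now operating on a label set whose ``special label'' is effectively $\lab_{s'}\lab_s$, so it generates all $k$-body labels containing $\lab_{s'}\lab_s$; then $\SW_{1,2}$ swaps the isolated $\lab_{s'}$ back to qubit $1$ and restores $\lab_s$ to qubit $2$ (using the cleanliness of $\SG_{k-1}$, which guarantees the output is a permutation of single-body labels); and the recursion $W_k^{(2,n)}$ repeats this, shifting the window of active qubits. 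Unrolling the recursion over $n-k+1$ steps, $W_k^{(n)}$ generates all $k$-body labels containing $\lab_{s'}$ and leaves $\lab_{s'}$ isolated on qubit $1$. The tail $\CNOT_{n-k+1,n-k+2}\cat\overline{\XSC}^{(1,n-k+1)}$ is a post-cleanup step that folds the accumulated labels on the last active qubits back to single-body labels (analogous to $\CL^{(n)}$ in the four-body case), yielding the ``clean'' property. I would track the label sequences explicitly at each stage, invoking Remark~\ref{rem:xswap_prop}, Remark~\ref{rem:swap_prop}, and Lemma~\ref{lem:rev_circuit} as needed, exactly as in the proofs of Theorems~\ref{thm:three-body-generator} and~\ref{thm:clean_special_four-body_generator}.

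For parts (ii)--(iii): from $W_k^{(n)} = \CNOT_{1,2}\cat\SG_{k-1}^{(2,n)}\cat\SW_{1,2}\cat W_k^{(2,n)}$ and $W_k^{(k)} = \CNOT_{1,2}\cat\SG_{k-1}^{(2,k)}$, unrolling gives $\size{W_k^{(n)}} = \sum_{l=k}^{n}\bigl(\size{\SG_{k-1}^{(l)}}+\mathcal{O}(1)\bigr)$ and likewise $\depth{W_k^{(n)}}$ is, after applying the shifted concatenations (which here contribute only $\mathcal{O}(1)$ per step, since the $\SW$ and $\CNOT$ are constant-size), $\sum$ of the individual depths plus lower-order corrections; but in fact the depth recursion needs care because the $\SG_{k-1}$ blocks can be shifted into each other — I expect the cleanest route is to argue $\depth{W_k^{(n)}} = \max$-type bound giving $\sum_{l}\depth{\SG_{k-1}^{(l)}}\cdot(\text{shift fraction}) + \mathcal{O}(\cdot)$, but since we only need leading order I would just sum the depths with the understanding that consecutive blocks overlap by at most a constant, giving $\depth{W_k^{(n)}} = \sum_{l=k}^n\depth{\SG_{k-1}^{(l)}} + \mathcal{O}(n)$ — wait, that double-counts. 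Let me instead note the genuinely relevant point: plugging $\size{\SG_{k-1}^{(l)}} = c_1\frac{(k_0-1)!}{(k-2)!}l^{k-2}+\mathcal{O}(l^{k-3})$ into $\sum_{l=k}^n$ and using $\sum_{l=1}^n l^{m} = \frac{n^{m+1}}{m+1}+\mathcal{O}(n^m)$ yields $\size{\SG_k^{(n)}} = c_1\frac{(k_0-1)!}{(k-2)!}\cdot\frac{n^{k-1}}{k-1}+\mathcal{O}(n^{k-2}) = c_1\frac{(k_0-1)!}{(k-1)!}n^{k-1}+\mathcal{O}(n^{k-2})$, and the same summation identity with exponent $k-3\to k-2$ gives the depth formula. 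The post-cleanup tail contributes only $\mathcal{O}(n)$ to both size and depth, hence is absorbed.

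**The main obstacle.** The delicate point is the depth accounting in step (ii): whereas the size is a clean sum over the recursion, the depth depends on how much consecutive $\SG_{k-1}^{(\cdot)}$ blocks (with the interleaved $\SW$ and $\CNOT$) can be shifted into each other, and one must confirm this only perturbs the leading term by $\mathcal{O}(n^{k-3})$, i.e.\ that no depth overhead of order $n^{k-2}$ is introduced by misalignment (the same issue that bit the grid construction in Sec.~\ref{sec:2d_architectures}). I expect that for the LNN recursion the shift between consecutive blocks is $\Theta(n^{k-3})$ (one order below the block depth $\Theta(n^{k-2})$), so the sum of overlaps is $\mathcal{O}(n^{k-2})$ — consistent with the claimed error term — and I would verify this by an explicit shift bound analogous to the $s_n$ shifts used in Definitions~\ref{def:three-body_generator} and~\ref{def:clean_special_four-body_generator}. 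Everything else (the induction base, the label-tracking verification of cleanliness, the polynomial summation) is routine given the machinery already established.
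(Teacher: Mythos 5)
Your proposal matches the paper's proof: an induction on $k$ that (a) tracks labels through $W_k^{(n)}$ to establish the clean special $k$-body generator property (using the permutation guaranteed by cleanliness of $\SG_{k-1}$), and (b) unrolls the recursion into $\size{\SG_k^{(n)}}=\mathcal{O}(n)+\sum_i\size{\SG_{k-1}^{(i+1,n)}}$ and applies Faulhaber's formula, with the depth handled identically. Your worry about depth misalignment is moot here because Definition~\ref{def:clean_special_k-body_generator} uses unshifted concatenations, so the depth is exactly additive and your "just sum the depths" resolution is precisely what the paper does.
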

\begin{proof}
    \begin{enumerate}[wide=\parindent,leftmargin=0pt,align=left]
        \item First, we show that $\SG_k^{(n)}$ is a clean special $k$-body generator for each $k\geq k_0$. To this end, suppose that $\SG_{k-1}^{(n)}$ is a clean-special $k-1$-body generator for some $k-1\geq k_0$. We prove that $W_k^{(n)}$ is a special $k$-body generator from $\lab$ and that there exists a permutation $\pi\colon\set{1,\ldots,n}\to\set{1,\ldots,n}$ with $\pi(1)=1$ satisfying
        \begin{equation}
            \label{eq:output_wk}
            \labaction{\lab}{W_k^{(n)}}=\labseq{\hat{\lab}_1(\hat{\lab}_2\ldots\hat{\lab}_{n-k+1}),\hat{\lab}_1,\hat{\lab}_1\hat{\lab}_{n-k+2},\hat{\lab}_{n-k+3},\ldots,\hat{\lab}_n)}
        \end{equation}
        for $n>k$ and $\labaction{\lab}{W_k^{(n)}}=\labseq{\hat{\lab}_1,\hat{\lab}_1\hat{\lab}_{2},\hat{\lab}_{3},\ldots,\hat{\lab}_k}$ where $\labseq{\hat{\lab}_1,\ldots,\hat{\lab}_n}\coloneqq\labseq{\lab_{\pi(1)},\ldots,\lab_{\pi(n)}}$. The remaining statement follows from the application of the label sequence on the right-hand side in Eq.~\eqref{eq:output_wk} on the circuits $\CNOT_{k+1,n-k+2}$ and $\overline{\XSC}^{(1,n-k+1)}$ given in Eq.~\eqref{eq:clean_special_k-body_generator}. The case $n=k$ can be easily verified. For $n>k$ we first observe that there exists a permutation $\pi_1\colon\set{1,\ldots,n-1}\to\set{1,\dots,n-1}$ with $\pi_1(1)=1$ such that
        \begin{equation*}
            \labaction{(\labaction{\lab}{\CNOT_{1,2}})}{\SG_{k-1}^{(2,n)}}=\labaction{\labseq{\lab_1,\lab_1\lab_2,\lab_3,\ldots,\lab_n}}{\SG_{k-1}^{(2,n)}}=\labseq{\lab_1,\lab'_{\pi_1(1)},\ldots,\lab'_{\pi_1(n-1)}},
        \end{equation*}
        where $\lab'_1\coloneqq\lab_{1}\lab_{2}$ and $\lab'_i\coloneqq\lab_{i+1}$ for $2\leq i\leq n-1$, and $\SG_{k-1}^{(2,n)}$ is generating the labels
        \begin{equation}
            \label{eq:sg_k-1_label_set}
            \set{\lab'_1\lab'_{i_1}\ldots\lab'_{i_{k-2}}\mid 1<i_1<\ldots<i_{k-2}\leq n}=\set{\lab_1\lab_2\lab_{i_1}\ldots\lab_{i_{k-2}}\mid 2<i_1<\ldots<i_{k-2}\leq n}
        \end{equation}
        from $\labseq{\lab_1,\lab_1\lab_2,\lab_3,\ldots,\lab_n}$. Furthermore, from our induction hypothesis we can imply that there exists a further permutation $\pi_2\colon\set{1,\ldots,n-1}\to \set{1,\ldots,n-1}$ with $\pi_2(1)=1$ such that $W_k^{(2,n)}$ is generating the labels
        \begin{equation*}
            \set{\lab_1\lab'_{\pi_1(i_1+1)}\ldots\lab'_{\pi_1(i_{k-1}+1)}\mid 1<i_1<\ldots<i_{k-1}\leq n-1}=\set{\lab_1\lab_{i_1}\ldots\lab_{i_{k-1}}\mid 2<i_1<\ldots<i_{k-1}\leq n}
        \end{equation*}
        from the label sequence $\tilde{\lab}\coloneqq\labaction{(\labaction{\lab}{\CNOT_{1,2}})}{(\SG_{k-1}^{(2,n)}\cat\SW_{1,2})}=\labseq{\lab_1\lab_2,\lab_1,\lab'_{\pi_1(2)},\ldots,\lab'_{\pi_1(n-1)}}$ as well as
        \begin{align*}
            \labaction{\tilde{\lab}}{W_k^{(2,n)}}&=\labseq{\lab_1\lab_2,\hat{\lab}_1(\hat{\lab}_2\ldots\hat{\lab}_{n-k}),\hat{\lab}_1,\hat{\lab}_1\hat{\lab}_{n-k+1},\hat{\lab}_{n-k+2},\ldots,\hat{\lab}_{n-1})}\\
            &=\labseq{\lab_1\lab_2,\lab_1(\lab_{\pi(2)+1}\ldots\lab_{\pi(n-k)+1)}),\lab_1,\lab_1\lab_{\pi(n-k+1)+1},\lab_{\pi(n-k+2)+1},\ldots,\lab_{\pi(n-1)+1}}\\
            &=\labseq{\lab_{\sigma(1)}(\lab_{\sigma(2)}\ldots\lab_{\sigma(n-k+1)}),\lab_1,\lab_1\lab_{\sigma(n-k+2)},\lab_{\sigma(n-k+3)},\ldots,\lab_{\sigma(n)}}
        \end{align*}
        where $\labseq{\hat{\lab}_1,\ldots,\hat{\lab}_n}\coloneqq\labseq{\lab_{\sigma_{1}},\lab'_{\sigma(2)},\ldots,\lab'_{\sigma(n-1)}}$ and $\pi\coloneqq \pi_1\circ\pi_2$. Here, $\sigma\colon\set{1,\ldots,n}\to\set{1,\ldots,n}$ denotes the permutation defined by $\sigma(1)\coloneqq 1$, $\sigma(2)\coloneqq 2$ and $\sigma(i)\coloneqq \pi(i-1)+1$. This shows together with Eq.~\eqref{eq:sg_k-1_label_set} that $W_k^{(n)}$ is a special $k$-body generator satisfying Eq.~\eqref{eq:output_wk}.
        \item We only prove the expression for the size of the circuit $\SG_k^{(n)}$. The expression for the depth of the circuit $\SG_k^{(n)}$ can be proven analogously. Assume that the statement holds for some $k-1\geq k_0$. Then, our induction hypothesis implies
        \begin{equation*}
            \size{\SG_k^{(n)}}=\mathcal{O}(n)+\sum_{i=1}^{n-k}\size{\SG_{k-1}^{(i+1,n)}}=\mathcal{O}(n)+c_1\frac{(k_0-1)!}{(k-2)!}\sum_{i=1}^{n-2}i^{k-2}+\mathcal{O}(i^{k-3}).
        \end{equation*}
        Hence, applying Faulhaber's formula yields the above identity for the size of $\SG_k^{(n)}$.\qedhere
    \end{enumerate}
\end{proof}
Finally, based on clean special $k$-body generator we now establish $k$-body generators and derive their properties subsequently, completing the proof of Theorem \ref{thm:kbodies_lnn}.
\begin{definition}[$k$-body generator]
    \label{def:k-body_generator_from_special_k-body_generator}
    Let $k\geq 2$ and suppose that $\SG_{k}^{(n)}$ is a clean special $k$-body generator. Then, we set $\G_k^{(n)}\coloneqq \SG_{k}^{(n)}\cat \SG_{k}^{(2,n)}\cat\ldots\cat\SG_{k}^{(n-k+1,n)}$. Fig.~\ref{fig:4body_gen}(b) demonstrates an illustration of the circuit for the case $k=4$.
\end{definition}
\begin{theorem}
    \label{thm:k-body_generator_from_special_k-body_generator}
    \begin{enumerate}
        \item \label{item:k-body_generator_from_special_k-body_generator} Under the assumptions of Definition \ref{def:k-body_generator_from_special_k-body_generator}, the circuit $\G_k^{(n)}$ is a $k$-body generator.
        \item \label{item:k-body_generator_size_depth_analysis} Let $k_0\geq 2$ and $\SG_{k_0}^{(n)}$ be a clean special $k_0$-body generator. Then, for every $k>k_0$ the circuit $\SG_{k}^{(n)}$ as defined in Definition \ref{def:clean_special_k-body_generator} is a clean special $k$-body generator and the circuit $\G_k^{(n)}$ as defined in Definition \ref{def:k-body_generator_from_special_k-body_generator} is a $k$-body generator. Moreover, if $\SG_{k_0}^{(n)}$ satisfies
        \begin{equation*}
        \size{\SG_{k_0}^{(n)}}=c_1 n^{k_0-1}+\mathcal{O}(n^{k_0-2})\quad\text{and}\quad \depth{\SG_{k_0}^{(n)}}=c_2n^{k_0-2}+\mathcal{O}(n^{k_0-3}),
        \end{equation*}
        then it holds
        \begin{equation*}
        \size{\G_k^{(n)}}=c_1\frac{(k_0-1)!}{k!}n^{k}+\mathcal{O}(n^{k-1})\quad\text{and}\quad
        \depth{\G_k^{(n)}}=c_2\frac{(k_0-2)!}{(k-1)!}n^{k-1}+\mathcal{O}(n^{k-2}).
        \end{equation*}
    \end{enumerate}
\end{theorem}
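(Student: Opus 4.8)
\emph{Proof plan.} The plan is to establish part \ref{item:k-body_generator_from_special_k-body_generator} by induction on the number of qubits $n$, and then to obtain part \ref{item:k-body_generator_size_depth_analysis} by feeding the size/depth asymptotics of Theorem \ref{thm:clean_special_k-body_generator} into the (unshifted) concatenation $\G_k^{(n)}=\bigodot_{p=1}^{n-k+1}\SG_k^{(p,n)}$ and summing with Faulhaber's formula. Throughout I assume, as in Definition \ref{def:k-body_generator_from_special_k-body_generator}, that $\SG_k^{(m)}$ is a clean special $k$-body generator for every $m\geq k$ (the qubit-shift convention of App.~\ref{sec:notation_auxiliary_results} forces this to be a statement about the whole family).

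For part \ref{item:k-body_generator_from_special_k-body_generator}, the base case $n=k$ is immediate: on $k$ qubits the only $k$-body label is $\lab_1\lab_2\cdots\lab_k$, it contains the special label $\lab_1$, and $\G_k^{(k)}=\SG_k^{(k)}$ generates it. For the inductive step I would write $\G_k^{(n)}=\SG_k^{(n)}\cat\bigl(\bigodot_{p=2}^{n-k+1}\SG_k^{(p,n)}\bigr)$ for $n>k$. The first block $\SG_k^{(n)}$, applied to $\lab=\labseq{\lab_1,\ldots,\lab_n}$, generates all $k$-body labels containing $\lab_1$; since it is \emph{special} it leaves $\lab_1$ isolated on qubit $1$, and since it is \emph{clean} its output is a permutation of the single-body inputs, so altogether $\labaction{\lab}{\SG_k^{(n)}}=\labseq{\lab_1,\lab_{\pi(2)},\ldots,\lab_{\pi(n)}}$ for some permutation $\pi$ fixing $1$. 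By the shift convention the tail $\bigodot_{p=2}^{n-k+1}\SG_k^{(p,n)}$ is exactly $\G_k^{(n-1)}$ with all qubit indices raised by one, and it acts only on qubits $2,\ldots,n$; because a $k$-body generator generates all $k$-body labels from an \emph{arbitrary} input sequence, the induction hypothesis applied to $\labseq{\lab_{\pi(2)},\ldots,\lab_{\pi(n)}}$ shows that the tail generates $\set{\lab_{i_1}\cdots\lab_{i_k}\mid 2\leq i_1<\cdots<i_k\leq n}$. Every $k$-subset of $\{1,\ldots,n\}$ either contains $1$ (covered by the first block) or does not (covered by the tail), so $\G_k^{(n)}$ generates all $k$-body labels.

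For part \ref{item:k-body_generator_size_depth_analysis}, Theorem \ref{thm:clean_special_k-body_generator} already tells us that the recursively defined $\SG_k^{(n)}$ of Definition \ref{def:clean_special_k-body_generator} is a clean special $k$-body generator with $\size{\SG_k^{(n)}}=c_1\frac{(k_0-1)!}{(k-1)!}n^{k-1}+\mathcal{O}(n^{k-2})$ and $\depth{\SG_k^{(n)}}=c_2\frac{(k_0-2)!}{(k-2)!}n^{k-2}+\mathcal{O}(n^{k-3})$ for all $k\geq k_0$, so part \ref{item:k-body_generator_from_special_k-body_generator} gives at once that $\G_k^{(n)}$ is a $k$-body generator. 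Because $\G_k^{(n)}$ is a \emph{plain} concatenation and $\SG_k^{(p,n)}$ is a copy of $\SG_k^{(n-p+1)}$ (same size and depth), one has
\[
\size{\G_k^{(n)}}=\sum_{m=k}^{n}\size{\SG_k^{(m)}}
\qquad\text{and}\qquad
\depth{\G_k^{(n)}}=\sum_{m=k}^{n}\depth{\SG_k^{(m)}}.
\]
Inserting the asymptotics above and using $\sum_{m=1}^n m^{k-1}=\frac{n^k}{k}+\mathcal{O}(n^{k-1})$ and $\sum_{m=1}^n m^{k-2}=\frac{n^{k-1}}{k-1}+\mathcal{O}(n^{k-2})$, the error terms collapse to the next order and yield $\size{\G_k^{(n)}}=c_1\frac{(k_0-1)!}{k!}n^{k}+\mathcal{O}(n^{k-1})$ and $\depth{\G_k^{(n)}}=c_2\frac{(k_0-2)!}{(k-1)!}n^{k-1}+\mathcal{O}(n^{k-2})$.

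The only genuinely delicate step is the bookkeeping in the inductive step of part \ref{item:k-body_generator_from_special_k-body_generator}: one must carefully invoke that "clean" produces a permutation of the single-body inputs \emph{for every} input sequence, that "special" pins $\lab_1$ to qubit $1$ so this permutation fixes $1$, and that a $k$-body generator works from \emph{any} sequence of labels — it is exactly this combination that legitimizes running the shifted copy of $\G_k^{(n-1)}$ on the scrambled tail $\labseq{\lab_{\pi(2)},\ldots,\lab_{\pi(n)}}$ and still covering all $\binom{n-1}{k}$ labels not containing $\lab_1$. The counting in part \ref{item:k-body_generator_size_depth_analysis} is then routine; the one thing worth double-checking is that Definition \ref{def:k-body_generator_from_special_k-body_generator} uses the unshifted concatenation, so that the size and depth of $\G_k^{(n)}$ are honest sums with no overlap-induced cancellation.
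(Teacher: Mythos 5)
Your proposal is correct and follows essentially the same route as the paper: part (i) is the induction on $n$ that the paper leaves implicit (splitting off the leading clean special generator, using that "clean" and "special" give a permutation fixing qubit $1$, and applying the hypothesis to the shifted tail $\G_k^{(n-1)}$ on the scrambled remaining labels), and part (ii) combines Theorem \ref{thm:clean_special_k-body_generator} with the Faulhaber-type summation of the component sizes and depths, exactly as the paper does. No gaps.
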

\begin{proof}
    The statement in \ref{item:k-body_generator_from_special_k-body_generator} can easily be proven by induction. The first statement in \ref{item:k-body_generator_size_depth_analysis} follows from Theorem \ref{thm:clean_special_k-body_generator} and \ref{item:k-body_generator_from_special_k-body_generator}. The expressions for the depth and size of $\G_k^{(n)}$ can be similarly derived as the expressions for the depth and size of $\SG_k^{(n)}$ in the proof of Theorem \ref{thm:clean_special_k-body_generator}.\qedhere
\end{proof}
\begin{proof}[Proof of Theorem \ref{thm:kbodies_lnn}]
    For $k=2$, the statement follows from the results presented \cite{Klaver2024}. The case $k=3$ follows from Theorem \ref{thm:three-body-generator}. The case $k\geq 4$ is obtained by Theorem \ref{thm:clean_special_four-body_generator} and Theorem \ref{thm:k-body_generator_from_special_k-body_generator} \ref{item:k-body_generator_size_depth_analysis}.
\end{proof}

\section{Two-body generators for heavy-hexagon connectivity graphs \label{app:heavy-hex}}
In this section, we investigate the heavy hexagon qubit layouts \cite{Chamberland2020} and demonstrate how to utilize the strategy developed in Sec.~\ref{sec:2d_architectures} to build two-body generators. In contrast to the square grid connectivity graphs outlined in Sec.~\ref{sec:2d_architectures}, heavy hexagonal qubit layouts do not possess a Hamiltonian path. However, all qubits can be connected in a HGP, where (roughly) every sixth qubit is a HGP neighbor. In (the bulk of) a heavy hexagonal layout, every other qubit has three neighbors. Thus, HGPs, based on sub-graphs of heavy hexagons, can have as many as $1/3$ of the involved qubits as Hamiltonian grid neighbors. A corresponding HGP is schematically shown in Fig.~\ref{fig:app:heavy_hex}(a). Applying the principles of Sec.~\ref{sec:2d_architectures} along the HGP of Fig.~\ref{fig:app:heavy_hex}(a), we can immediately conclude the \CXNAME{} count of a two-body generator: Since one out of three qubits represents a HGP neighbor, for generating three new two-body labels we require five \CXNAME{} gates (in contrast to the LNN case where we would require six). Thus, we expect to find an average asymptotic \CXNAME{} count of $\mu(\mathcal{G}_2,\mathcal{L}_2) = \frac{5}{3}$.

Fig.~\ref{fig:app:heavy_hex}(b) illustrates the construction of $\mathcal{G}_{2,\mathrm{hex}}^{(n)}$: shifted concatenations of adapted \TWINE{} chains ($\XSC_{\mathrm{hex}}$, marked in blue) form 
a $\XSN_{\mathrm{hex}}^{{(n)}}$ circuit. This is followed by a decoding circuit (marked in red) to restore single-labels. For the count and depth we obtain
\begin{equation*}
     \size{\XSN_{\mathrm{hex}}^{{(n)}}}=\frac{5}{6}n^2+\mathcal{O}(n)
\end{equation*}
\begin{equation*}
    \depth{\XSN_{\mathrm{hex}}^{{(n)}}}=5n+\mathcal{O}(1).
\end{equation*}

Note that, unlike for square grid connectivity graphs, here the lack of a Hamiltonian path complicates the decoding: Instead of a simple CNOT chain, here decoding requires $2n$ CNOT gates [compare Fig.~\ref{fig:app:heavy_hex}(b)]. 
\begin{figure}
    \centering
    \includegraphics[width=0.8\linewidth]{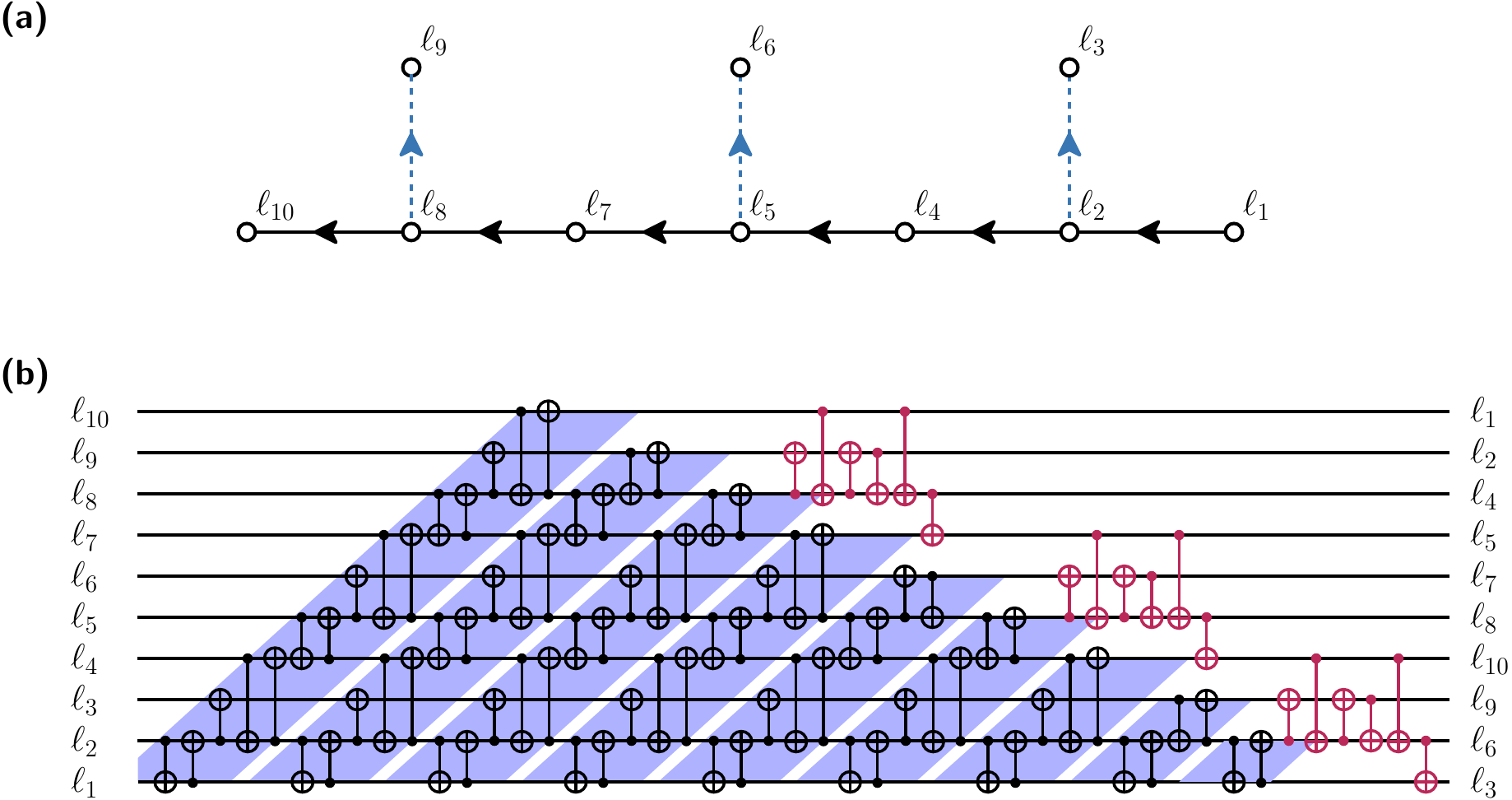}
    \caption{(a) A HGP for a sub-graph within a heavy-hexagon connectivity graph: Every other qubit has three immediate neighbors. (b) Implementation of a two-body generator $\mathcal{G}_{2,\mathrm{hex}}^{(10)}$ based on the HGP shown in (a) including a decoding circuit. \CXNAME{} gates depicted in red are attributed to the final decoding step.}
    \label{fig:app:heavy_hex}
\end{figure}

The two-body generators $\mathcal{G}_{2, \mathrm{hex}}^{(n)}$ can be used for algorithms such as QFT and QAOA as discussed in Sec.~\ref{sec:QFT} and \ref{sec:QAOA}. For QAOA, shifted concatenation of $\mathcal{G}_{2,\mathrm{hex}}^{(n)}$ and $\big(\overline{\mathcal{G}}_{2,\mathrm{hex}}^{(n)}\big)^{\dagger}$ induces a depth reduction per QAOA layer by a factor of $\sim 2$, resulting in a depth of $\frac{5}{2}n(1+\frac{1}{p}) +\mathcal{O}(1)$ per QAOA layer given an odd number of QAOA cycles $p$. In case of an even number of cycles, the depth is slightly lower due to the asymmetry of the corresponding generators. In this case we find a depth of $\frac{5}{2}n(1+\frac{2}{3p}) +\mathcal{O}(1)$.

\section{$k$-body generators on complete and planar connectivity graphs}
\label{sec:k-body_generators_on_other_devices}

\begin{figure}
    \centering
    \includegraphics[width=1.0\linewidth]{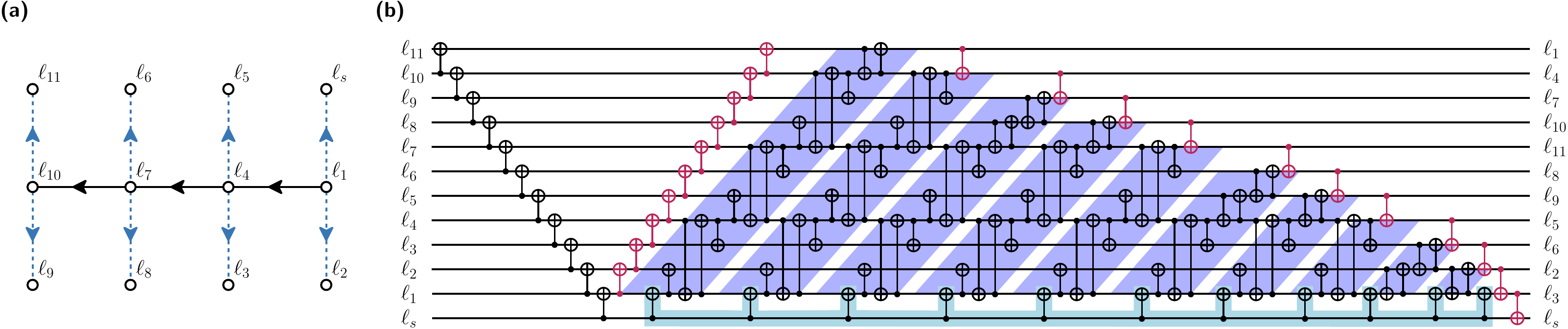}
    \caption{(a) The HGP used for the design of the adapted $\XSN_{3,\grid}$ circuit illustrated in (b) with 12 qubits. (b) The adapted special three-body generator $\XSN_{3,\grid}$ for the square grid connectivity graph pictured in (a).}
    \label{fig:special_four_body_grid}
\end{figure}

In this chapter, we discuss the size and depth scaling for the proposed $k$-body generators for complete and square grid connectivity graphs. However, rather than repeating extensive derivations, our focus is on the scaling up to leading order. We begin with the two-body generators.
\subsection{\label{app:count_and_depth_two_body_gen}Count and depth analysis of two-body generators}
As discussed in the main text, for the construction of two-body generators on complete graphs, \DXNAME{} gates can be replaced with \CXNAME{} gates resulting in half the number of gates as compared to the corresponding construction for nearest neighbor connectivity graphs. Since a two-body generator relies solely on \DXNAME{} gates, we readily obtain
\begin{equation*}
    \size{\XSN_{\alltoall}^{(n)}} = \frac{1}{2}n^2 + \mathcal{O}(n).
\end{equation*}
A similar argument can be used to derive the depth: Since the count of each $\XSC_{\alltoall}^{(n)}$ halves, so does the depth. Consequently, shifted concatenations of $\XSC_{\alltoall}^{(n)}$ forming $\XSN_{\alltoall}^{(n)}$ equivalently halve in depth resulting in
\begin{equation*}
    \depth{\XSN_{\alltoall}^{(n)}} = 2n + \mathcal{O}(1).
\end{equation*}
Fig.~\ref{fig:all-to-all} illustrates the circuit for the two-body generator for complete graphs.

For square grid connectivity graphs, we observe that the HGP consists of $\frac{1}{3}n+\mathcal{O}(\sqrt{n})$ qubits, resulting in $\frac{1}{3}n+\mathcal{O}(\sqrt{n})$ \DXNAME{} gates, and $\frac{2}{3}n+\mathcal{O}(\sqrt{n})$ HGP neighbors, corresponding to $\frac{2}{3}n+\mathcal{O}(\sqrt{n})$ \CXNAME{} gates. Hence, we deduce that $\size{\XSC_{\grid}^{(n)}}=\frac{4}{3}n+\mathcal{O}(\sqrt{n})$. This implies that
\begin{equation*}
    \size{\XSN_{\grid}^{(n)}} =\sum_{l=0}^{n-2}  \size{\XSC_{\grid}^{(n-l)}}=\sum_{l=2}^{n}\bigg(\frac{4}{3}l+\mathcal{O}\big(\sqrt{l}\big)\bigg)=\frac{2}{3}n^2+\mathcal{O}(n^{3/2}).
\end{equation*}
Note that we also neglected the final pigeonhole circuit after each $\XSC_{\grid}$ since they are of order $\mathcal{O}(1)$.
The depth depends on the shifts introduced between consecutive $\XSC_\grid$ circuits. From the circuit in Fig.~\ref{fig:2body_generator_square_grid_2}, we observe that the square grid strategy  requires a shift of six moments for a grid with three rows. Therefore, the depth is
\begin{equation*}
    \depth{\XSN_\grid^{(n)}} = 6n + \mathcal{O}(\sqrt{n}).
\end{equation*}
\subsection{Count and depth analysis of $k$-body generators}
Next, we discuss the generalization to $k$-body generators by first considering complete graphs. The case $k=3$ is not treated explicitly, but follows the same reasoning.

From Fig.~\ref{fig:4body_gen}(a) we observe that the only component of the clean special four-body generator circuit that determines the size to leading-order is the $\XSN_4^{(n)}$ circuit. $\XSN_{4,\alltoall}^{(n)}$ can be obtained similarly to $\XSN_{\alltoall}^{(n)}$ and has half the size of $\XSN_4^{(n)}$. This readily yields
\begin{equation*}
    \size{\SG_{4,\alltoall}^{(n)}}
    =\mathcal{O}(n)+\sum_{l=0}^{n-4}\size{\XSN_{4,\alltoall}^{(n-l)}}
    = \frac{1}{6}\,n^3 + \mathcal{O}(n^2).
\end{equation*}

Together with Theorem \ref{thm:k-body_generator_from_special_k-body_generator}, we find $\mu(\G_{k,\alltoall},\mathcal{L}_k) = 1$.

For the depth scaling of the circuit construction for complete graphs, we again refer to Fig.~\ref{fig:4body_gen}(a), now focusing on the shifts between circuit components. As outlined in App.~\ref{app:count_and_depth_two_body_gen} similar to the count, also the depth of respective two-body generators for complete graphs halves with respect to nearest neighbor connectivity graphs. Now there are two extra components to be considered for special four-body generators: the initial \CXNAME{} chains of the $\XSN_4$ circuits and the initial \CXNAME{} gates of the modified \TWINE{} chains. These components consist of \CXNAME{} gates with shifts of two and four moments, respectively, which can also be halved. Therefore, we have
\begin{equation*}
    \depth{\SG_{4,\alltoall}^{(n)}}
    = \tfrac{1}{2}\,n^2 + \mathcal{O}(n).
\end{equation*}

Finally, combining this with Theorem \ref{thm:k-body_generator_from_special_k-body_generator} we find $\nu(\G_{k,\alltoall},\mathcal{L}_k) = k/2$.

Similar arguments can be applied to the size scaling of respective circuits adapted to square grids.
The main size contribution for building the clean special four-body generator still arises from the concatenated $\XSN_{4,\grid}$ circuits, which can be obtained just as the $\XSN_{\grid}$ circuits. Therefore, using $\size{\XSC_{\grid}^{(n)}}=\frac{4}{3}n+\mathcal{O}(\sqrt{n})$ and applying Theorem \ref{thm:k-body_generator_from_special_k-body_generator}, we conclude
\begin{equation*}
    \mu(\G_{k,\grid},\mathcal{L}_k) = \frac{4}{3}.
\end{equation*}

For the depth analysis, we observe that the $\XSN_{\grid}$ circuit forms a ``triangle'' (compare to the circuit diagram of Fig.~\ref{fig:2body_generator_square_grid_2}) with a relatively long right side (measured in circuit moments), unlike the symmetric ``triangle'' used for the circuit construction for nearest neighbor connectivity graphs. In the main text, we discuss a strategy useful for QAOA using shifted concatenations of $\XSN_{\grid}$ and $\left(\overline{\XSN}_{\grid}\right)^\dagger$ to optimize depth. However, since $\XSN_{4,\grid}$ is not clean, utilizing the adjoint is not possible because the special label creates an asymmetry between the input and output label sequences. Consequently, when building a clean special four-body generator, we can only use reversed circuits for subsequent special $\XSN_{4,\grid}$ circuits. This allows shifted concatenations with an overlap of $\tfrac{4}{3}n + \mathcal{O}(\sqrt{n})$ leading to
\begin{equation*}
    \depth{\SG_{4,\grid}^{(n)}}
    = \mathcal{O}(n) 
      + \sum_{l=4}^{n-1}
        \Bigl(6l - \tfrac{4}{3}l
      + \mathcal{O}\big(\sqrt{l}\big)\Bigr)
    = \frac{7}{3}n^2 + \mathcal{O}(n\sqrt{n}).
\end{equation*}

Combining this result with Theorem \ref{thm:k-body_generator_from_special_k-body_generator} yields $\nu(\G_{k,\grid},\mathcal{L}_k) = \tfrac{7}{3}k$. For the analogous construction for heavy-hexagon connectivity graphs we obtain $\mu(\G_{k,\mathrm{hex}},\mathcal{L}_k) = \tfrac{5}{3}$ and $\nu(\G_{k,\mathrm{hex}},\mathcal{L}_k) = \tfrac{5}{3}k$ for $k\geq 3$.

The above results, in particular the increased depth scaling, emerges from the alignment mismatch of the corresponding generators. This is especially pronounced in the case of square grid connectivity graph and, in fact, is expected to worsen with increased connectivity. However, for the case of $k=3$, these issues can be partially circumvented using adapted generator building blocks to yield further depth reductions. The main idea of this approach is as follows: Instead of applying the initial \CXNAME{} chain solely before the first $\XSN_{3,\grid}$ circuit, we apply it before every $\XSN_{3,\grid}$ circuit. Moreover, we add an additional \CXNAME{} chain after each $\XSN_{3,\grid}$ circuit (see Fig.~\ref{fig:special_four_body_grid}). The resulting modified $\XSN_{3,\grid}$ circuit maps a sequence of single labels to a sequence of single labels while generating all three-body labels with the special label $\lab_{s}$. As a consequence, this enables us to use Lemma \ref{lem:adjoint_circuit} and concatenate $\XSN_{3,\grid}$ and $(\overline{\XSN}_{3,\grid})^{\dagger}$ (instead of $\XSN_{3,\grid}$ and $\overline{\XSN}_{3,\grid}$). Then, consecutive $\XSN_{3,\grid}$ [$(\overline{\XSN}_{3,\grid})^{\dagger}$] can be shifted with overlaps alternating between $6n-4/3n+\mathcal{O}(\sqrt{n})=14/3n+\mathcal{O}(\sqrt{n})$ and $n+\mathcal{O}(1)$ which reduces the depth as we circumvent the alignment mismatch. However, individual modified $\XSN_{3,\grid}$ blocks possess an increased depth of $7n$, so that in total the depth of the three-body generator designed in this way sums up to $\frac{25}{12}n^2+\mathcal{O}(n^{3/2})$. In principle, this approach can also be applied to heavy-hexagon connectivity graphs. However, there, the corresponding alignment mismatch is less dramatic [see Fig.~\ref{fig:app:heavy_hex}(b)]. As a consequence the modification of the $\XSN_{3,\grid}$ as outlined for square grid connectivity graphs does not lead to a depth reduction. In general, we expect this method to achieve depth reductions for layouts with sufficient connectivity with the effect getting more pronounced for increasing connectivity.

{\color{black}
\section{Circuit fidelities and performance comparison \label{app:circuit_fidelities}}
\subsection{Effective depth}

In this section, we compute the performance of the different two-body generators for various qubit connectivities as defined in Eq.~\eqref{eq:performance} of the main text. 

While the count of our circuit building blocks is equivalent for all applications, the circuit depths is more subtle. The structure of the different two-body generators $\mathcal{G}_{2}^{(n)}$ implies qubits idling for a substantial part of the circuit (see for example Fig.~\ref{fig:2body_generator_square_grid_2}). However, for the discussed applications such as QAOA or Hamiltonian simulation, repeated application eventually allows to reduce the idling time and cut the depth by a factor $\sim 2$. However, there are also applications where this is not possible (such as QFT). Thus, in principle, a detailed analysis requires a case-by-case study. 

Here, we focus on comparing individual building blocks (as for example relevant for QFT). In this case we can adopt an as-soon-as-possible (ASAP) scheduling for which results are retrieved (i.e. qubits are measured) ASAP. Thus, for this scheduling, it makes sense to define an \textit{effective} depth, $\mathrm{depth}_{\mathrm{eff}}(C)$, reflecting the early termination once a qubit goes idle. Concretely, we define $\mathrm{depth}_{\mathrm{eff}}(C)$ of a circuit $C$ by

\begin{eqnarray}
   \mathrm{depth}_{\mathrm{eff}}(C) = \frac{1}{n} \sum_{k=1}^n D(q_k, C)
\label{eq:eff_depth}
\end{eqnarray}
where $D(q_k, C)$ is defined by last the moment of the circuit $C$ in which a gate is applied to qubit $q_k$. Note that the definition of Eq.~\eqref{eq:eff_depth} is different from the depth of the circuit, which is defined as the maximum moment at which a gate is applied to any of the qubits involved in the circuit $C$. 

To give an example, the effective depth associated with the clean two-body generator on a square grid (Fig.~\ref{fig:2body_generator_square_grid_2}) computes as
\begin{eqnarray}
   \mathrm{depth}_{\mathrm{eff}}(\mathcal{G}_{2,\mathrm{grid}}^{(n)}) = \frac{1}{n} \sum_{k=1}^n \left(\frac{14}{3}k + \frac{4}{3}n -1\right) = \frac{11}{3}n + \frac{4}{3}.~~
\label{eq:d_eff_square}
\end{eqnarray}
The first $\XSC_\grid^{(1,n)}$ adds a depth of $4/3 n$ on qubit $q_n$. Every consecutive chain adds $14/3$ additional gates (on average depending on the concrete pigeonhole sequence). Note that this pattern breaks down for the final few $\XSC_\grid^{(1,p)}$ of the $\XSN_\grid^{(n)}$. Equation \eqref{eq:d_eff_square} thus represents an upper bound. In contrast the depth of $\mathcal{G}_{2,\mathrm{grid}}^{(n)}$ is given by $\depth{\mathcal{G}_{2,\mathrm{grid}}^{(n)}}=6n +\mathcal{O}(1)$. In Tab.~\ref{tab:eff_depth} we collect the values obtained for the effective depth for each qubit connectivity investigated in this work. Note that these results do not include the single-qubit rotation gates. They typically possess execution times which are around one order of magnitude smaller compared to the execution time of two-qubit gates. Moreover, the expected depth increase due to single-qubit gates is similar for all discussed implementations and should therefore not matter for the employed comparison.

\begin{table}[]
\centering
\begin{tblr}{
colspec={|l |c|c|}, rowsep=1.5pt, colsep=2.5pt, hline{1,2,6},
cells={valign=m,halign=c},
}

Connectivity graph   & count & effective depth\\ 

    LNN  & $n^2$ & $3n+2$\\
    heavy hexagon & $\frac{5}{6}n^2+\frac{17}{6}n +\mathcal{O}(1)$ & $\frac{10}{3}n + \frac{51}{3}$\\ 
    ladder & $\frac{3}{4}n^2 +n + \mathcal{O}(1)$ & $\frac{9}{4}n+\frac{3}{2}$ \\
    square grid & $\frac{2}{3}n^2+ \frac{2}{3}n + \mathcal{O}(1)$ & $\frac{11}{3}n + \frac{4}{3}$\\
\end{tblr}
    \caption{Summary of two qubit gate count (up to $\mathcal{O}(1)$ corrections) and effective depth (defined in Eq.~\eqref{eq:eff_depth}) of the clean two-body generators $\mathcal{G}_{2,\mathrm{grid}}^{(n)}$ for the locally-connected qubit layouts discussed in this work }
    \label{tab:eff_depth}
\end{table}

\subsection{Comparing circuit fidelities}
Using Eq.~\eqref{eq:eff_depth} in Eq.~\eqref{eq:performance}, we can now compute the expected performance of the clean two-body generators for each of the discussed (locally-connected) qubit layout and compare them against each other. 

As an example, here we present the comparison of the ladder and the square grid implementation. To compare the performance of ladder and square grid and obtain the respective regions of performance advantage, we search for solutions to the equation
\begin{eqnarray}
    \label{eq:square_vs_ladd}
    \frac{\mathcal{F}_{\mathcal{G}_{2,\mathrm{grid}}^{(n)}}}{\mathcal{F}_{\mathcal{G}_{2,\mathrm{ladder}}^{(n)}}}=1.
\end{eqnarray}
We can readily reformulate Eq.~\eqref{eq:square_vs_ladd} as
\begin{eqnarray}
\label{eq:comparing_fidelities}
\bigg\lbrace\size{\mathcal{G}_{2,\mathrm{grid}}^{(n)}} -\size{\mathcal{G}_{2,\mathrm{ladder}}^{(n)}}\bigg\rbrace\log\left(\frac{F_{2q}}{F_\mathrm{idle}^2}\right) + n\bigg\lbrace\mathrm{depth}_{\mathrm{eff}}\left(\mathcal{G}_{2,\mathrm{grid}}^{(n)}\right)- \mathrm{depth}_{\mathrm{eff}}\left(\mathcal{G}_{2,\mathrm{ladder}}^{(n)}\right)\bigg\rbrace \log\left(F_\mathrm{idle}\right) = 0.
\end{eqnarray}
Inserting the values given in Tab.~\ref{tab:eff_depth}, we can further reduce Eq.~\eqref{eq:comparing_fidelities} to yield
\begin{eqnarray}
    \label{eq:transition}
    -\frac{1}{12}\bigg\lbrace \log\left(\frac{F_{2q}}{F_\mathrm{idle}^{19}}\right)\bigg\rbrace n -\frac{1}{6}\log\left(\frac{F_{2q}^2}{F_{\mathrm{idle}}^3}\right) = 0.
\end{eqnarray}
From Eq.~\ref{eq:transition} we see that the transition where square grid becomes advantageous over the ladder implementation depends on $n$. For the smallest meaningful value $n=2$, we find the transition at $F_{2q}= F_{\mathrm{idle}}^{\frac{22}{3}}$. In the limit $n\rightarrow \infty$, the second term in Eq.~\eqref{eq:transition} becomes irrelevant so that the transition happens for $F_{2q}= F_{\mathrm{idle}}^{19}$.
}

\section{Proof of a non-trivial lower bound for $\mu$}\label{sec:proof-impossibility-theorem}

In this section, we derive a non-trivial lower bound for the asymptotically average \CXNAME{} count for nearest neighbor connectivity graphs.

\begin{theorem}
    \label{thm:impossibility_theorem}
    Consider a family $L=(L^{(n)})_n$ of label sets ($n$ being the number of qubits) with the following two properties:

    \begin{enumerate}
        \item \label{item:symdiff_property}For all $\lab_1, \lab_2\in L^{(n)}$ we have $\lab_1\lab_2\notin L^{(n)}$.
        \item \label{item:cardinality_property}We have $\lim_{n\to\infty} \frac{\abs{L^{(n)}}}{n} = \infty$.
    \end{enumerate}

    Let $C=(C^{(n)})_n$ be a family of \CXNAME{}-circuits for nearest neighbor connectivity graphs so that $C^{(n)}$ generates $L^{(n)}$. Then

    \begin{equation*}
        \mu(C,L)\geq 1 + \gamma ,
    \end{equation*}

    for some absolute constant $\gamma>0$.
\end{theorem}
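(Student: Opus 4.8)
The plan is to set up a counting argument that tracks, moment by moment, how many labels from $L^{(n)}$ are "alive" in the current label sequence and how the CNOT gates can create new labels of $L^{(n)}$. Suppose $C^{(n)}$ has size $s = s_n$ and generates $L^{(n)}$, and write $\abs{L^{(n)}} = m_n$. Since each CNOT produces at most one new label and we start from single-body labels not in $L^{(n)}$ (by property \ref{item:symdiff_property} applied with $\lab_1 = \lab_2$, the empty label is not in $L^{(n)}$; one should also note that for nontrivial label families the starting single-body labels are not in $L^{(n)}$ once $n$ is large, or handle this as a boundary case), we already have $s \geq m_n$, i.e.\ $\mu_n \geq 1$. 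The goal is to squeeze out the extra additive constant $\gamma$. First I would classify the CNOT gates of $C^{(n)}$: a gate $\CNOT_{c,t}$ acting at some moment replaces $\lab_t$ by $\lab_c\lab_t$; call the gate \emph{typical} if the old $\lab_t \in L^{(n)}$, the new label $\lab_c \lab_t \in L^{(n)}$, and $\lab_c\lab_t$ has not yet appeared in any earlier moment (this matches the "typical" gates discussed in Sec.~\ref{sec:fundamental_bounds}). The elementary observation is that a typical gate \emph{destroys} one occurrence of an $L^{(n)}$-label (the old $\lab_t$) while \emph{creating} one (the new $\lab_c\lab_t$); a non-typical gate either creates a new $L^{(n)}$-label without destroying one, or does neither.

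Next I would count occurrences of $L^{(n)}$-labels across the whole circuit. At any moment at most $n$ labels are present, so at most $n$ members of $L^{(n)}$ can be "alive" simultaneously. Because all $m_n$ labels of $L^{(n)}$ must appear, and the number of non-typical "creating" gates is at most $s - (\text{\# typical gates})$, a bookkeeping inequality of the form
\begin{equation*}
    m_n \;\le\; (\text{\# non-typical creating gates}) + (\text{\# distinct labels ever alive simultaneously bounded by } n)
\end{equation*}
must be refined: the right accounting is that the multiset of $L^{(n)}$-labels appearing over all moments, counted with the "new appearance" multiplicity, equals the number of (typical or non-typical) gates that create a fresh $L^{(n)}$-label, and this is $\geq m_n$. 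So far this only re-derives $s \geq m_n$. The extra leverage comes from property \ref{item:symdiff_property} together with the LNN constraint: in an LNN circuit, at a given moment the label on qubit $j$ can only be combined with the label on qubit $j\pm 1$. I would argue that a typical gate $\CNOT_{c,t}$ requires \emph{two} labels of $L^{(n)}$ to sit on adjacent qubits just before the gate (namely $\lab_t\in L^{(n)}$ and $\lab_c\lab_t \in L^{(n)}$, hence by property \ref{item:symdiff_property} $\lab_c = \lab_t \symdif (\lab_c\lab_t) \notin L^{(n)}$ — so $\lab_c$ itself is \emph{not} in $L^{(n)}$, meaning the control carries a non-$L^{(n)}$ label). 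This geometric rigidity on an LNN line is what limits how many typical gates can be packed in.

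The crux of the argument — and the step I expect to be the main obstacle — is converting this adjacency rigidity into a quantitative statement that a constant fraction (at least $\tfrac{\gamma}{1+\gamma}\,m_n$, with $\gamma$ an absolute constant, reportedly $\gamma \geq \tfrac19$ in this paper) of the creating gates must be \emph{non-typical}, because the "supply" of fresh typical moves runs out. The mechanism should be: a typical gate consumes a label $\lab_t\in L^{(n)}$ from qubit $t$ and puts a (new) $L^{(n)}$-label there, but consecutive typical moves along the line must hand off along a path of adjacent qubits, and because combining two $L^{(n)}$-labels never yields an $L^{(n)}$-label, the "chains" of typical moves are forced to be interspersed with labels \emph{outside} $L^{(n)}$, which must themselves have been manufactured by extra (non-typical, hence "wasted") CNOT gates. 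A careful amortized charging — charging each wasted gate against a bounded number of typical gates it enables, and using the pigeonhole bound that only $n \ll m_n$ labels coexist at one time — should yield $s \geq (1+\gamma) m_n - O(n)$, and then dividing by $m_n$ and taking $\liminf$ with property \ref{item:cardinality_property} (so the $O(n)$ term is negligible) gives $\mu(C,L) \geq 1+\gamma$. The delicate part is making the charging scheme airtight so that no typical gate is double-counted and the constant $\gamma$ genuinely does not depend on $n$ or on the particular family $L$; I would expect this to be where the appendix (App.~\ref{sec:proof-impossibility-theorem}) spends most of its effort, likely via a potential-function / entropy-style argument tracking the number of $L^{(n)}$-labels currently present and a separate count of "interface" positions on the line where an $L^{(n)}$-label meets a non-$L^{(n)}$-label.
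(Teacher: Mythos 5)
Your setup matches the paper's: you classify gates into typical and untypical, you correctly identify the decisive consequence of property (i) — if $\lab_t\in L^{(n)}$ and $\lab_c\lab_t\in L^{(n)}$ then $\lab_c\notin L^{(n)}$, so the control of a typical gate always carries a non-$L^{(n)}$ label — and your first counting step (at least $(1-\delta-o(1))\abs{L^{(n)}}$ gates must be typical when the total size is $(1+\delta)\abs{L^{(n)}}$, up to an $O(n)$ boundary term absorbed by property (ii)) is essentially the paper's "counting untypical gates, part 1."

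However, the heart of the proof is missing, and you say so yourself. The mechanism you sketch for the second half — that the non-$L^{(n)}$ labels sitting on control qubits "must themselves have been manufactured by extra (non-typical, hence wasted) CNOT gates" — is not the argument that works, and on its own it fails: a non-$L^{(n)}$ control label can be an untouched initial single-body label costing zero gates, and a single such label can serve as the control of many typical gates, so no charge can be extracted this way. What the paper actually proves is a local structural lemma. For a typical gate $g$, let $\pred_3(g)$ be the three most recent gates before $g$ that targeted a qubit in the (growing) support of $\{g,\ldots\}$; Claim 1 states that if all three exist, at least one is untypical. The proof is geometric: on an LNN line, four consecutive typical gates would be forced into an alternating staircase pattern (each successive predecessor must sit one step further out, because the other placements either make a control good — contradicting your own key observation — or regenerate a label already present, contradicting freshness), and the staircase itself regenerates the label that was present three moments earlier, a contradiction. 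Claim 2 then bounds by an absolute constant $N$ (crudely $21$, refinable to $4$) the number of typical gates that can charge the same untypical gate, using the bounded geometric reach of $\pred_3$ on the line. Combining $\#\text{untypical}\geq \frac{1}{N}\#\text{typical}$ with part 1 gives $\delta\geq\frac{1}{2N+1}$. Neither the three-predecessor lemma nor the bounded-multiplicity bound appears in your proposal, so the quantitative lower bound on $\gamma$ is not established.
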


\begin{remark}
    The largest $\gamma$ for which the theorem holds is not known to us. In the proof below we give a simple argument to see that $\gamma=\frac{1}{43}$ works. A slightly more involved argument pushes this to $\gamma=\frac{1}{9}$. But we have no reason to assume that this is optimal.
\end{remark}

\begin{proof}
    Consider an arbitrary family of labels and a corresponding family of generator circuits as described \emph{but} with the property that

    \begin{equation}
        \label{eq:assump_proof_impossibility}
        \frac{\size{C^{(n)}}}{\abs{L^{(n)}}} \leq 1 + \delta
    \end{equation}

    for an infinite number of $n$. In order to prove the theorem we have to show that necessarily $\delta\geq\gamma$. Below we always assume that $n$ is set to one of the values for which the above inequality holds. Moreover we consider the limit for large $n$ whenever we utilize Landau-notation like $o(1)$.

    Without loss of generality we may assume that every moment of the circuits contains exactly one \CXNAME{} gate. We may also assume that none of the single-body labels is an element of $L^{(n)}$ (due to property \ref{item:cardinality_property}).

    Let us define a few things. For a moment $m$ and qubit $i$ we call $(i, m)$ a \emph{spacetime point}. Next we define the notion of a \emph{good} spacetime point which basically tracks where new labels occur. We do this recursively by looking through the following list of rules and apply the first one which matches:

    \begin{itemize}
        \item Initialization: spacetime point $(i, 1)$ is \emph{bad} (meaning \emph{not good}) for all $i$,
        \item Spawning good: $(i, m)$ is good if qubit $i$ is a target (of a \CXNAME{} gate) at moment $m$ and the resulting label is in $L^{(n)}$ and occurs for the first time (did not occur in an earlier moment at any of the qubits),
        \item Spawning bad: $(i, m)$ is bad if qubit $i$ is a target at moment $m$ and the previous case did not apply.
        \item Propagation: $(i, m)$ is good if qubit $i$ is not a target in moment $m$ and $(i, m-1)$ is good. The same rule applies to bad spacetime points.
    \end{itemize}

    We say that a \CXNAME{} gate (of the circuit) at moment $m$ with target at $t$ is \emph{successful} if spacetime point $(t, m)$ is good. Finally, let us say that a \CXNAME{} gate at moment $m$ with target $t$ is \emph{typical} if it is successful, $m>1$, and spacetime point $(t, m-1)$ was already good. That is, the qubit which is targeted by a typical gate already supported a new label before the \CXNAME{} was applied and again got a new one after applying the gate. Fig.~\ref{fig:def-typical-cnot-gate} shows a summary of these definitions.

    \begin{figure}[h]
       \centering
        \includegraphics[width=.9\textwidth]{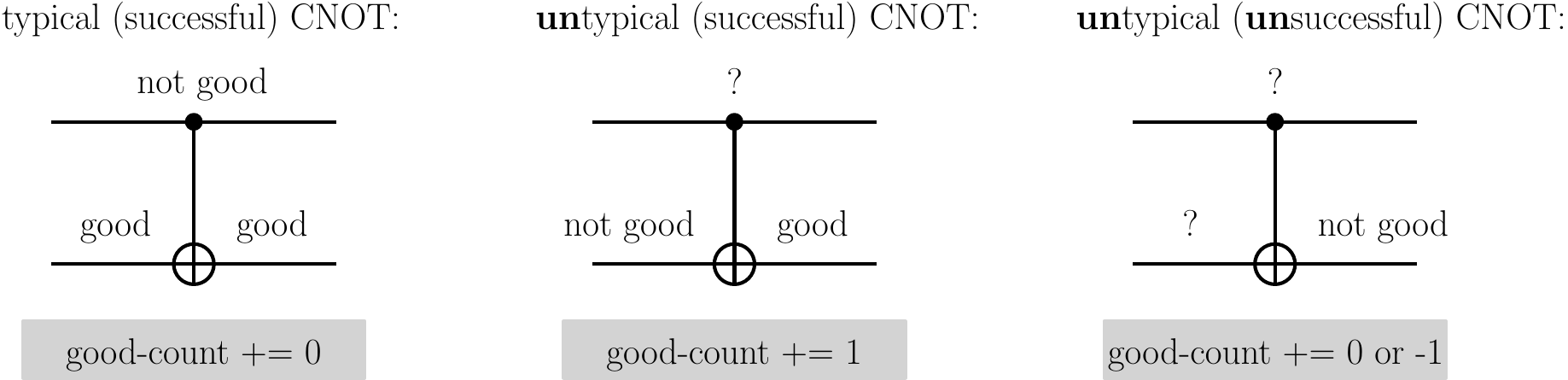}
        \caption{A \emph{successful} gate is a gate which produces a good spacetime point on its target (it contains one of the desired labels for the first time). A \emph{typical} gate is successful and in addition its target was already succeded by a good spacetime point. Interestingly, the control of a typical gate never occupies a good spacetime point. This is a consequence of the property \ref{item:symdiff_property} of the label sets we consider and is crucial for the proof.}
        \label{fig:def-typical-cnot-gate}
    \end{figure}

    In what follows we try to count the number of untypical gates in two different ways. We will obtain a contradiction if $\delta$ is too small.

    \paragraph{Counting untypical gates - part 1.}
    
    For a moment $m$ let $\good(m)$ be the number of good spacetime points in that moment. Clearly $0\leq \good(m)\leq n$. Moreover, $\good(m) - \good(m-1) \in\{-1,0,+1\}$ for all $m>1$ (recall our assumption that each moment contains exactly one CNOT gate). Observe that (c.f. Fig.~\ref{fig:def-typical-cnot-gate}):

    \begin{itemize}
        \item If $g$ is typical we have $\good(m)-\good(m-1)=0$.
        \item If $g$ is untypical and successful we have $\good(m)-\good(m-1)=1$.
        \item If $g$ is unsuccessful (hence untypical) we have $\good(m)-\good(m-1)\in \{-1,0\}$.
    \end{itemize}

    Clearly we have exactly $\abs{L^{(n)}}$ successful gates. Hence, we have at most $\delta\abs{L^{(n)}}$ unsuccessful gates (see Eq.~\eqref{eq:assump_proof_impossibility}). Furthermore, from the above listing and the fact that the $\good$-function is bounded by $n$ we deduce that
    \begin{align*}
        &\abs{\set{g\left\vert\,  \text{$g$ is an untypical and a successful gate in $C^{(n)}$}\right.}}=\sum_{\substack{m=2\\ \good(m)-\good(m-1)=1}}^{\depth{C^{(n)}}}1\\
        &\hspace{2cm}=\sum_{m=2}^{\depth{C^{(n)}}}\good(m)-\good(m-1)-\sum_{\substack{m=2\\ \good(m)-\good(m-1)\neq 1}}^{\depth{C^{(n)}}}\good(m)-\good(m-1)\\
        &\hspace{2cm}\leq n+\abs{\set{g\left\vert\, \text{$g$ is an unsuccessful gate in $C^{(n)}$}\right.}}.
    \end{align*}
    This and \ref{item:cardinality_property} implies that there are at most $(\delta+o(1))\abs{L^{(n)}}$ untypical successful gates.
    Hence,
    \begin{equation}\label{proof:untypical-gates-estimate-1}
        \text{there are at most } (2\delta + o(1)) \abs{L^{(n)}} \text{ untypical gates}
    \end{equation}
    or more precisely,
    \begin{equation}\label{proof:typical-gates-estimate-1}
        \text{there are at least } (1 - \delta - o(1)) \abs{L^{(n)}} \text{ typical gates.}
    \end{equation}

    \paragraph{Counting untypical gates - part 2.}

    For subsets $G$ of gates from $C^{(n)}$ let us define the partial function $\pred'$ by

    \begin{equation*}
        \pred'(G)\coloneqq\text{ the latest $g$ in $C^{(n)}$ before $G$ which targets one of the qubits of $G$} . 
    \end{equation*}

    It is a partial function because such a $g$ must not necessarily exist. With \emph{qubits of} $G$ we just mean the union of all controls and targets of the gates in $G$. That $g$ comes \emph{before} $G$ means that every gate in $G$ lives in a later moment than $g$. Let us define another partial function $\pred_3$, this time defined on the gates $g$ of $C^{(n)}$ by

    \begin{equation*}
        \pred_3(g) = (g_1, g_2, g_3)
    \end{equation*}

    where $g_1=\pred'(\{g\})$ and $g_{i+1}=\pred'(\{g, g_1,\ldots,g_i\})$ for $i=1,2$. The intuition behind this is that $\pred_3(g)$ gives us the first three gates \emph{immediately} before $g$ which have some influence on the label produced by $g$ (if this many predecessor gates exist).

    We call a gate $g$ \emph{super-typical} if it is typical and $\pred_3(g)$ exists.

    It is not hard to see that the number of gates for which $\pred_3$ is \emph{not} defined is at most $3(n-1)=o(1)\abs{L^{(n)}}$ (using property \ref{item:cardinality_property}). Therefore, we deduce from \eqref{proof:typical-gates-estimate-1} that

    \begin{equation}\label{proof:estimate-super-typical-gates}
        \text{there are at least } (1-\delta-o(1)) \abs{L^{(n)}} \text{ super-typical gates.}
    \end{equation}

    Finally let us define yet another partial function $U$ by

    \begin{equation*}
        U(g)\coloneqq \text{the first gate in } \pred_3(g) \text{ which is untypical.}
    \end{equation*}

    The intuition behind it is that for "some" typical gates it finds us an untypical gate. This in turn will help us to count untypical gates in a second way. Let us state two important claims

    \begin{itemize}
        \item Claim 1: $U(g)$ is defined for all super-typical $g$. 
        \item Claim 2: There exists an absolute constant $N$ such that for each untypical gate $u$ the pre-image $U^{-1}(\set{u})$ contains at most $N$ elements. An easy argument shows that $N=21$ works. A more involved one shows that $N=4$ works too.
    \end{itemize}

    Before we prove those two claims let us finish the main part of the proof: We are now in the position to prove the following chain of inequalities:

    \begin{align*}
        (2\delta + o(1)) \abs{L^{(n)}} &\geq \abs{\{g\mid g \text{ is untypical} \}} \\
        &\geq \abs{\set{U(g)\mid g \text{ is super-typical}}}
        =\sum_{g\in \set{g\mid g \text{ is super-typical}}} \frac{1}{\abs{U^{-1}(\set{U(g)})}} \\
        &\geq \frac{1}{N} \abs{\{g\mid g \text{ is super-typical} \}} \\
        &\geq \frac{1}{N} (1 - \delta - o(1)) \abs{L^{(n)}} .
    \end{align*}

    The first line is just (\ref{proof:untypical-gates-estimate-1}). The second line follows from claim 1 and the third line from claim 2. The fourth line follows from (\ref{proof:estimate-super-typical-gates}). From this we deduce that

    \begin{equation*}
        \delta \geq \frac{1}{2N + 1} 
    \end{equation*}

    Using $N=21$ implies the theorem and the proof is complete (up to the two claims).

    \paragraph{Proof of claim 1:}

    \begin{figure}
       \centering
        \includegraphics[width=.5 \columnwidth]{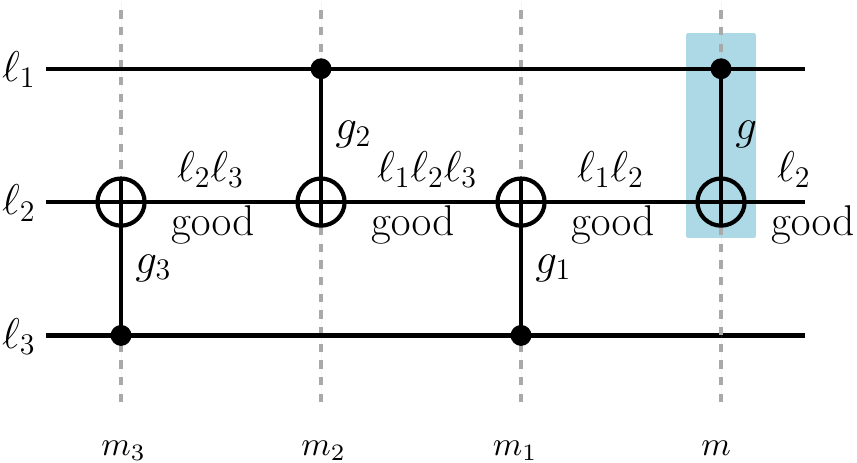}
        \caption{
        Under the assumption that all of the four gates $g$, $g_1$, $g_2$, and $g_3$ are typical the gates have to be arranged in some alternating fashion as shown. This on the other hand is also not possible because the middle label present at spacetime point $(k+1,m_3-1)$ would be regenerated by $g$, which is not possible if $g$ is typical. Let us note here that the definition of $\pred_3$ ensures that we can ignore all other gates for the things that matter in our analysis. Indeed, it can happen that e.g. between moments $m_3$ and $m$ other gates are applied. But by definition of the $g_i$ they do not target relevant qubits. E.g. it can happen that there is a gate with control at $k$ and target at $k-1$ somewhere between moments $m_3$ and $m$. But this is irrelevant for our analysis.}
        \label{fig:find-untypical-gate}
    \end{figure}

    The claim basically just says that for any typical gate $g$ for which $\pred_3(g)=(g_1,g_2,g_3)$ exists, at least one of the $g_i$ is untypical. We prove this by contradiction: Let us assume that there was a typical $g$, for which $\pred_3(g)=(g_1,g_2,g_3)$ exists so that all $g_i$ are typical too.

    Let $m$ be the moment in which $g$ acts and $m_i$ the moment in which $g_i$ acts. By definition we have $m>m_1>m_2>m_3$. Let us assume that the control qubit of $g$ is $k$ and the target qubit $k+1$ (without loss of generality we can assume that the target has a higher index, otherwise we can just mirror the situation).

    Let us make a short intermezzo to make a crucial observation: Consider a \CXNAME{} gate at moment $m$ and with control and target at $c$ and $t$. Assume it is typical. Then, the spacetime point $(c,m)$ is necessarily \emph{bad}. This follows from property \ref{item:symdiff_property} of the family of label sets $(L^{(n)})_n$ (two of the desired labels cannot be combined to another desired label).

    Using this crucial observation it is not hard to see that the gates of $g_i$ have to to placed as drawn in Fig.~\ref{fig:find-untypical-gate}. To see this, let us start by proving that $g_1$ has control at $k+2$ and target at $k+1$. By definition of $g_1$ it has to target either $k$ or $k+1$ (the qubits of $g$). There are four possible cases under a linear connectivity constraint:
    \begin{enumerate}
        \item control at $k-1$, target at $k$.
        \item control at $k$, target at $k+1$.
        \item control at $k+1$, target at $k$.
        \item control at $k+2$, target at $k+1$.
    \end{enumerate}
    The first case is not possible because typicality of $g_1$ would imply that the spacetime points $(k,m_1)$ to $(k,m)$ are good which in turn contradicts $g$ being typical due to our crucial observation. In the second case, the spacetime points $(k+1,m_1-1)$ and $(k+1,m)$ would contain the same label, again contradicting typicality of $g$. In the third case, the spacetime point $(k,m-1)$ would be good, which is not possible due to the same reasons as in the first case. Hence, only the last case remains - as desired. The arguments for the other two gates are basically the same.

    Hence, the gates are positioned as shown in Fig.~\ref{fig:find-untypical-gate}. On the other hand, this implies that the spacetime points $(k+1,m_3-1)$ and $(k+1,m)$ have the same label - contradicting typicality of $g$. This shows the first claim.
    
    \paragraph{Proof of claim 2:} Given an untypical gate $u$ we have to find an absolute upper bound $N$ on the number of typical gates $g$ such that $U(g)=u$. Assume that the support of $u$ (control and target) is $\{k,k+1\}$. Then any $g$ with $U(g)=u$ must be supported on $\{k-3,\ldots,k+4\}$. More precisely its support is one of seven possibilities $\{k-3,k-2\}$, $\{k-2,k-1\}$, ..., $\{k+3,k+4\}$. But for each of these seven supports there can only be at most three corresponding $g$ with $U(g)=u$ (the fourth one would have all of the other three gates in the way towards $u$). Hence in total there are at most three times seven (i.e. 21) $g$ with $U(g)=u$. In other words, $N=21$ is an upper bound as desired.

    \paragraph{Improving the bound $N$ in claim 2:} The preceding argument is rather crude and merely serves as a simple argument to see the existence of $N$. One may ask what the optimal value for $N$ is to ultimately find a better bound for $\gamma$. But note that the optimal $\gamma$ might still be larger than the optimal $N$ suggests (we believe that this is indeed the case).

    Again consider an untypical $u$. Using the fact that typical gates come at an alternating pattern (see Fig.~\ref{fig:find-untypical-gate}) we can see that there can be at most five gates $g$ such that $U(g)=u$. 

    In Fig.~\ref{fig:example-N-improved-bound} we show a \emph{hypothetical} scenario for which $u=U(g^{(i)})$ for $i\in\set{1,\ldots,5}$. Up to certain trivial changes this is actually the only scenario for which $U^{-1}(u)$ has (at most) five elements.

    There are still two cases to consider: The gate $g$ can either be present or not. Although we have $U(g)\neq u$ the presence of this gate can in principle influence the pre-image. Interestingly both cases are not possible! If the gate $g$ is not present it is easy to see that $g^{(1)}$ cannot be typical (which would imply $U(g^{(2)})=g^{(1)}\neq u$). In fact, label $\lab_3$ would be regenerated by $g^{(1)}$. If $g$ is present $g^{(2)}$ cannot be typical because it would regenerate $\lab_3$.

    Hence we showed that four is an upper bound for $N$ and hence the theorem holds for $\gamma=\frac{1}{9}$. \qedhere

    \begin{figure}[h]
       \centering
        \includegraphics[width=\columnwidth/2]{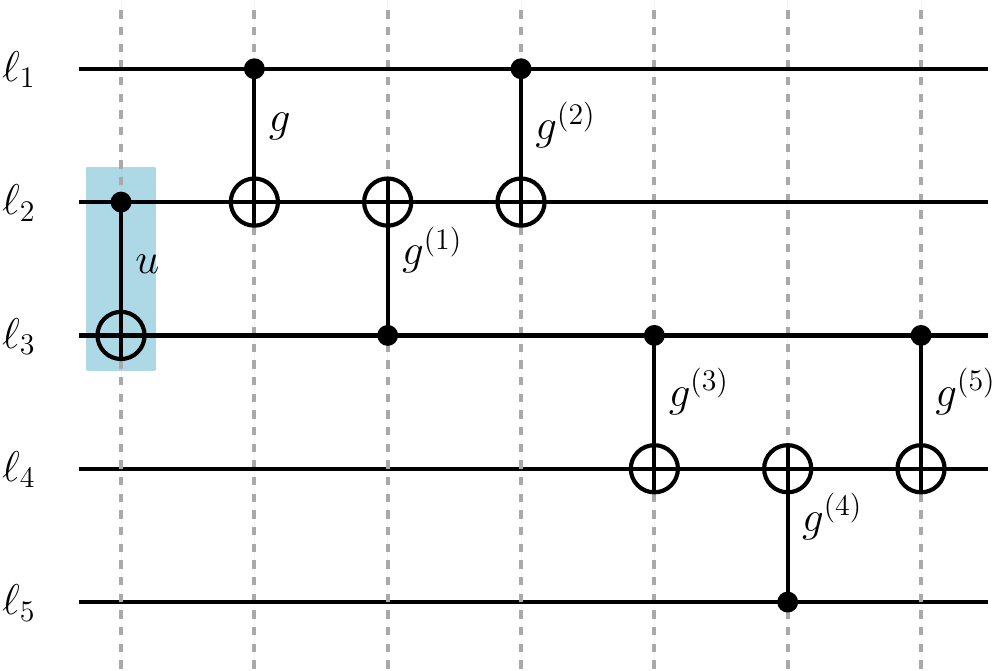}
        \caption{The only \emph{hypothetically} possible scenario for an untypical gate $u$ for which $U^{-1}(u)$ ($=\set{g^{(i)}\mid i=1,\ldots,5}$) has (at most) five elements. The gate $g$ can either be present or not. In both cases, using label tracking we see that this scenario is not possible concluding that $U^{-1}(u)$ has at most four elements for any untypical $u$.}
        \label{fig:example-N-improved-bound}
    \end{figure}
\end{proof}

\twocolumngrid

\end{document}